\newtheorem{theorem}{Theorem}
\newtheorem{lemma}{Lemma}
\newtheorem{definition}{Definition}
\newtheorem{fact}{\textbf{Fact}}
\begin{document}
%
\title{Cost-Effective Seed Selection for Online Social Networks}

\author{\IEEEauthorblockN{Kai Han\IEEEauthorrefmark{1}~~~~Yuntian He\IEEEauthorrefmark{1}~~~~Xiaokui Xiao\IEEEauthorrefmark{2}~~~~Shaojie Tang\IEEEauthorrefmark{3}~~~~Jingxin Xu\IEEEauthorrefmark{1}~~~~Liusheng Huang\IEEEauthorrefmark{1}}
\IEEEauthorblockA{\IEEEauthorrefmark{1}School of Computer Science and Technology, University of Science and Technology of China, China\\
\IEEEauthorrefmark{2}School of Computer Science and Engineering, Nanyang Technological University, Singapore\\
\IEEEauthorrefmark{3}Naveen Jindal School of Management, University of Texas at Dallas, United States\\
}}

\maketitle

\begin{abstract}
We study the min-cost seed selection problem in online social networks, where the goal is to select a set of seed nodes with the minimum total cost such that the expected number
of influenced nodes in the network exceeds a predefined threshold. We propose several algorithms that outperform the state-of-the-art algorithms both theoretically and experimentally. Under the case where the users have heterogeneous costs, our algorithms are the first bi-criteria approximation algorithms with polynomial running time and provable logarithmic performance bounds using a general contagion model. Under the case where the users have uniform costs, our algorithms achieve logarithmic approximation ratio and provable time complexity which is smaller than that of existing algorithms in orders of magnitude. We conduct extensive experiments using real social networks. The experimental results show that, our algorithms significantly outperform the existing algorithms both on the total cost and on the running time, and also scale well to billion-scale networks.
\end{abstract}


\section{Introduction}
\label{sec:intro}

With the proliferations of Online Social Networks (OSNs) such as Facebook and LinkedIn, the paradigm of viral marketing through the ``word-of-mouth'' effect over OSNs has found numerous applications in modern commercial activities. For example, a company may provide free product samples to a few individuals (i.e, ``seed'' nodes) in an OSN, such that more people can be attracted to buy the company's products through the information cascade starting from the seed nodes.

Kempe~\textit{et.al.}~\cite{Kempe2003} have initiated the studies on the NP-hard $k$-Seed-Selection ($k$SS) problem in OSNs, where the goal is to select $k$ most influential nodes in an OSN under some contagion models such as the independent cascade model and the more general triggering model. After that, extensive studies in this line have appeared to design efficient approximation algorithms for the $k$SS problem and its variations~\cite{LeskovecKGFVG2007,ChenWW2010,Borgs2014,TangXS2014,TangSX2015,NguyenTD2016,CohenDPW2014,Nguyen2017outward,NguyenDT2016,LinCL17}.
However, all the algorithms proposed in these studies can be classified as Influence Maximization (IM) algorithms, because they have the same goal of optimizing the \textit{influence spread} (i.e., the expected number of influenced nodes in the network).


In many applications for viral marketing, one may want to seek a ``best bang for the buck'', i.e., to select a set $S$ of seed nodes with the minimum total cost such that the influence spread (denoted by $f(S)$) is no less than a predefined threshold $\eta$. This problem is called as the Min-Cost Seed Selection (MCSS) problem and has been investigated by some prior work such as~\cite{Chen2014,Kuhnle2017,Goyal2013}. It is indicated by these work that the existing IM algorithms are not appropriate for the MCSS problem, as the IM algorithms require the knowledge on the total cost of the selected seed nodes in advance, while this knowledge is exactly what we pursue in the MCSS problem.

The existing MCSS algorithms~\cite{Chen2014,Kuhnle2017,Goyal2013}, however, suffer from several major deficiencies. First, these algorithms only consider the \textbf{uniform cost (UC)} case where each node has an identical cost 1, so none of their performance ratios holds under the \textbf{general cost (GC)} case where the nodes' costs can be heterogeneous. Nevertheless, the GC case has been indicated by existing work to be ubiquitous in reality~\cite{NguyenDT2016,Goyal2013,NguyenZ2013}. Second, most of them only propose bi-criteria approximation algorithms, which cannot guarantee that the influence spread of the selected nodes is no less than $\eta$. Third, the theoretical bounds proposed by some existing work only hold for OSNs with special influence propagation probabilities, but do not hold for general OSNs. Last but not the least, the existing MCSS algorithms do not scale well to big networks with millions/billions of edges/nodes.

Although the MCSS problem looks like a classical Submodular Set Cover (SSC) problem~\cite{Wolsey1982}, the conventional approximation algorithms for SSC cannot be directly used to find  solutions to MCSS in polynomial time, as computing the influence spread of any seed set is essentially a \#P-hard problem~\cite{ChenWW2010}. One possible way for overcoming this hurdle is to apply the existing network sampling methods proposed for the $k$SS problem (e.g., \cite{Borgs2014,TangSX2015,NguyenDT2016}), but it is highly non-trivial to design an efficient sampling algorithm to get a satisfying approximation solution to MCSS, due to the essential discrepancies between MCSS and $k$SS. Indeed, the number of selected nodes in the $k$SS problem is always pre-determined (i.e., $k$), while this number can be uncertain and highly correlated with the generated network samples in MCSS. This requires us to carefully build a quantitative relationship between the generated network samples and the stopping rule for selecting seed nodes, such that a feasible solution with small cost can be found in as short time as possible. Unfortunately, the existing studies have not made a substantial progress towards tackling the above challenges in MCSS, so they suffer from several deficiencies on the theoretical performance ratio, practicability and efficiency, including the ones described in last section.

%

In this paper, we propose several algorithms for the MCSS problem based on network sampling, and our algorithms advance the prior-art by achieving better approximation ratios and lower time complexity. More specifically, our major contributions include: 
\begin{enumerate}
\item {Under the GC case, we propose the first polynomial-time bi-criteria approximation algorithms with provable approximation ratio for MCSS, using a general contagion model (i.e., the triggering model~\cite{Kempe2003}). Given any $\alpha,~\delta\in (0,1)$ and any OSN with $n$ nodes and $m$ edges, our algorithms achieve an $\mathcal{O}(\ln\frac{1}{\alpha})$ approximation ratio and output a seed set $S$ satisfying $f(S)\geq (1-\alpha)\eta$ with the probability of at least $1-\delta$. Our algorithms also achieve an expected time complexity of $\mathcal{O}(\frac{m q}{\alpha^2}\ln\frac{n}{(1-\alpha)\delta\eta})$, where $q$ is the maximum influence spread of any single node in the network.}
\item Under the UC case, our proposed algorithms have $\mathcal{O}(\ln\frac{n\eta}{n-\eta})$ approximation ratios and can output a seed set $S$ satisfying $f(S)\geq \eta$ with probability of at least $1-\delta$. Compared to the existing algorithm~\cite{Chen2014} that has the best-known approximation ratio under the same setting with ours, the running time of our algorithms is at least $\Omega(\frac{n^2}{\ln n})$ times faster, while our approximation ratio can be better than theirs under the same running time.
\item In contrast to some related work such as~\cite{Kuhnle2017}, the performance bounds of all our algorithms do not depend on particularities of the network data (e.g., the influence propagation probabilities of the network), so they are more general.
\item We test the empirical performance of our algorithms using real OSNs with up to 1.5 billion edges. The experimental results demonstrate that our algorithms significantly outperform the existing algorithms both on the running time and on the total cost of the selected seed nodes.
\end{enumerate}

The rest of our paper is organized as follows. We formally formulate our problem in Sec.~\ref{sec:prelim}. We propose bi-criteria approximation algorithms and approximation algorithms for the MCSS problem in Sec.~\ref{sec:bicriteria} and Sec.~\ref{sec:approxforuc}, respectively. The experimental results are presented in Sec.~\ref{sec:pe}. We review the related work in Sec.~\ref{sec:rw} before concluding our paper in Sec.~\ref{sec:conclu}. 

%
%
%
%
%



\section{Preliminaries}
\label{sec:prelim}

\subsection{Model and Problem Definition}

We model an online social network as a directed graph $G=(V,E)$ where $V$ is the set of nodes and $E$ is the set of edges. Each node $u\in V$ has a cost $C(\{u\})$ which denotes the cost for selecting $u$ as a seed node. For convenience, we define $C(A)=\sum_{u\in A}C(\{u\})$ for any $A\subseteq V$.

When the nodes in a seed set $S\subseteq V$ are influenced, an influence propagation is caused in the network and hence more nodes can be activated. There are many influence propagation/contagion models, among which the Independent Cascade (IC) model and the Linear Threshold (LT) model~\cite{Kempe2003} are the most popular ones. However, it has been proved that both the IC model and the LT model are special cases of the triggering model~\cite{Kempe2003}, so we adopt the triggering model for generality.

In the triggering model, each node $u\in V$ is associated with a \textit{triggering distribution} $\mathcal{I}(u)$ over $2^{N_{in}(u)}$ where $N_{in}(u)=\{v\in V: \langle v,u\rangle\in E\}$. Let $I(u)$ denote a sample taken from $\mathcal{I}(u)$ for any $u\in V$ ($I(u)$ is called a \textit{triggering set} of $u$). The influence propagation with any seed set $A$ under the triggering model can be described as follows. At time $0$, the nodes in $A$ are all activated. Afterwards, any node $u$ will be activated at time $t+1$ iff there exists a node $v\in I(u)$ which has been activated at time $t$. This propagation process terminates when no more nodes can be activated. Let $f(A)(\forall A\subseteq V)$ denote the {Influence Spread} (IS) of $A$, i.e., the expected number of activated nodes under the triggering model. The problem studied in this paper can be formulated as follows:

\begin{definition}[The MCSS problem]
Given an OSN $G=(V,E)$ with $|V|=n$ and $|E|=m$, a cost function $C$, and any $\eta\in (0,n)$, the Min-Cost Seed Selection (MCSS) problem aims to find $S_{opt}\subseteq V$ such that $f(S_{opt})\geq \eta$ and $C(S_{opt})$ is minimized.
\end{definition}

The MCSS problem has been studied by prior work under different settings/assumptions such as the UC setting and the GC setting explained in Sec.~\ref{sec:intro}. Besides, under the \textbf{Exact Value (EV)} setting, it is assumed that the exact value of $f(A)(\forall A\subseteq V)$ can be computed in polynomial time, while this assumption does not hold under the \textbf{Noisy Value (NV)} setting.

The existing MCSS algorithms can also be classified into: (1) \textbf{APproximation (AP) algorithms}~\cite{Chen2014}: these algorithms regard any $A\subseteq V$ satisfying $f(A)\geq \eta$ as a feasible solution; (2) \textbf{Bi-criteria Approximation (BA) algorithms}~\cite{Kuhnle2017,Goyal2013}: these algorithms regard any $A\subseteq V$ satisfying $f(S)\geq (1-\alpha)\eta$ as a feasible solution, where $\alpha$ is any given number in $(0,1)$.

\subsection{The Greedy Algorithm for Submodular Set Cover}
\label{sec:gressc}

It has been shown that $f(\cdot)$ is a \textit{monotone and submodular} function~\cite{Kempe2003}, i.e., for any $S_1\subseteq S_2\subseteq V$ we have $f(S_1)\leq f(S_2)$ and $\forall x\in V\backslash S_2: f(S_1\cup \{x\})-f(S_1)\geq f(S_2\cup \{x\})-f(S_2)$. Therefore, the MCSS problem is an instance of the Submodular Set Cover (SSC) problem~\cite{Wolsey1982,Wan2010}, which can be solved by a greedy algorithm under the EV setting. For clarity, we present a (generalized) version of the greedy algorithm for the SSC problem, shown in Algorithm~\ref{alg:naivegreedy}:

 \begin{algorithm}[tp]
  $A\leftarrow \emptyset;~\hat{f}(A)\leftarrow 0$\\
  \While{$\hat{f}(A) < \Phi $}
   {
        \ForEach{$u\in V\backslash A$}{
            $A'\leftarrow A\cup \{u\}$\\
            Compute $\hat{f}(A')$ such that $\mathbb{P}\{(1-\gamma_1){f}(A')\leq \hat{f}(A')\leq (1+\gamma_2){f}(A')\}\geq 1-\theta$ \label{ln:estimateinfluence}
        }
          $u^*\leftarrow \arg\max_{u\in V\backslash A}\frac{\min\{\hat{f}(A\cup \{u\}),\Phi\}-\hat{f}(A)}{C(\{u\})}$;\\
          $A\leftarrow A\cup \{u^*\}$
   }
  \Return{$A$} \label{ln:returnfinalresult}
    \caption{$\mathsf{GreSSC}(\Phi,\gamma_1,\gamma_2,\theta)$} \label{alg:naivegreedy}
\end{algorithm}


Under the EV setting (i.e.,  $\gamma_1=\gamma_2=\theta=0$), $\mathsf{GreSSC}$ runs in polynomial time and also achieves nice performance bounds for MCSS~\cite{ZhuLZ2016,Goyal2013}. For example, we have:


\begin{fact}[\cite{Goyal2013}]
Let $S'$ be the output of $\mathsf{GreSSC}((1-\alpha)\eta,0,0,0)$ for any $\alpha\in (0,1)$. Then we have $f(S')\geq (1-\alpha)\eta$ and $C(S')\leq (1+\ln\frac{1}{\alpha})C(S_{opt})$. This bound holds even if $f(\cdot)$ is an arbitrary monotone and non-negative submodular function defined on the ground set $V$.
\label{fct:goyalgreedy}
\end{fact}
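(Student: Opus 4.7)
The plan is to carry out the classical Wolsey-style density-greedy analysis of submodular set cover on the capped function $\tilde f(A):=\min\{f(A),(1-\alpha)\eta\}$, which is monotone submodular with $\tilde f(S_{opt})=(1-\alpha)\eta$ because $f(S_{opt})\ge \eta$. Feasibility $f(S')\ge(1-\alpha)\eta$ is immediate: under the EV setting $\hat f=f$, and the outer loop exits only once $\hat f(A)\ge(1-\alpha)\eta$.

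Let $u_1,\dots,u_\ell$ be the greedy selections in order, $A_i=\{u_1,\dots,u_i\}$, $c_i=C(\{u_i\})$, and $\Delta_i=f(A_i)-f(A_{i-1})$. The heart of the argument is the charging inequality
\begin{equation*}
\eta-f(A_{i-1})\;\le\;C(S_{opt})\cdot\frac{\Delta_i}{c_i}\qquad (i=1,\dots,\ell),
\end{equation*}
which I would derive by combining the submodularity and monotonicity of $f$ with $f(S_{opt})\ge\eta$, namely $\eta-f(A_{i-1})\le f(S_{opt}\cup A_{i-1})-f(A_{i-1})\le\sum_{v\in S_{opt}}(f(A_{i-1}\cup\{v\})-f(A_{i-1}))$, and then using the greedy rule to bound each singleton marginal by $C(\{v\})\,\Delta_i/c_i$. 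Rearranging produces the geometric contraction $\eta-f(A_i)\le(\eta-f(A_{i-1}))(1-c_i/C(S_{opt}))$ for every $i\le\ell-1$.

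Iterating this contraction and applying $1-x\le e^{-x}$ gives $\eta-f(A_{\ell-1})\le\eta\exp(-C(A_{\ell-1})/C(S_{opt}))$. Because the loop has not yet halted at iteration $\ell-1$ we have $f(A_{\ell-1})<(1-\alpha)\eta$, and therefore $\eta-f(A_{\ell-1})>\alpha\eta$; comparing the two bounds yields $C(A_{\ell-1})<C(S_{opt})\ln(1/\alpha)$. The same charging inequality at $i=\ell$, evaluated with the fact that the capped marginal of $u_\ell$ equals $(1-\alpha)\eta-f(A_{\ell-1})$, collapses to $c_\ell\le C(S_{opt})$. Adding the two pieces gives the claimed $C(S')=C(A_{\ell-1})+c_\ell\le C(S_{opt})(1+\ln(1/\alpha))$.

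The step I expect to be the main obstacle is justifying the charging inequality in its $\eta$-form even though the algorithm's greedy rule is phrased in terms of the truncated marginal $\min\{f(A\cup\{u\}),\Phi\}-f(A)$ rather than the raw marginal. The resolution I would use is that for any iteration $i<\ell$ the chosen $u_i$ cannot push the coverage past $\Phi=(1-\alpha)\eta$ (otherwise the loop would already have exited), so the truncation is inactive at $u_i$ and its truncated marginal coincides with the raw $\Delta_i$; paired with the observation that the truncated marginal is always a lower bound on the raw one, the greedy density inequality propagates to every $v\in S_{opt}$ and the $\eta$-form of the charging inequality survives. Because the whole argument invokes only monotonicity, submodularity, and $f(S_{opt})\ge\eta$, it applies verbatim to any monotone non-negative submodular $f$ on $V$, matching the generality asserted in the fact.
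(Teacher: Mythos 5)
The paper gives no proof of Fact~\ref{fct:goyalgreedy} (it is imported from \cite{Goyal2013}); the closest in-paper machinery is Lemma~\ref{lma:alwaysabigone} in the appendix, which is the \emph{existential} form of your charging step: there is \emph{some} $y$ whose \emph{raw} marginal density is at least $[\eta-f(A_{i-1})]/C(S_{opt})$. Your averaging argument over $v\in S_{opt}$ is the standard proof of that lemma and is fine. The gap is exactly at the step you flag as the main obstacle, and your proposed resolution does not work. The greedy rule of $\mathsf{GreSSC}$ only gives an upper bound on the \emph{truncated} density of each competitor, i.e.\ $[\min\{f(A_{i-1}\cup\{v\}),\Phi\}-f(A_{i-1})]/C(\{v\})\le\Delta_i/c_i$; since the truncated marginal is a \emph{lower} bound on the raw marginal, this does not yield the upper bound on the raw marginal of $v$ that your charging inequality needs --- the implication runs in the wrong direction. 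Whenever some $v\in S_{opt}$ has $f(A_{i-1}\cup\{v\})>\Phi$, its raw density can exceed $\Delta_i/c_i$ by an arbitrary amount while its truncated density stays small, and the inequality $\eta-f(A_{i-1})\le C(S_{opt})\,\Delta_i/c_i$ breaks.

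This is not a presentational issue: for the truncated cost--benefit rule as literally written in $\mathsf{GreSSC}$, the stated bound fails under general costs. Take $\alpha=1/2$, $V=\{a,b\}$ with a coverage function $f(\{a\})=f(\{a,b\})=\eta$, $f(\{b\})=0.45\eta$, and costs $C(\{a\})=1$, $C(\{b\})=0.89$. Then $\Phi=0.5\eta$, the truncated density of $b$ at the first step is $0.45\eta/0.89\approx 0.506\eta$, beating $a$'s $0.5\eta$, so the algorithm selects $b$, still has $f<\Phi$, then selects $a$, returning cost $1.89$, while $C(S_{opt})=1$ and $(1+\ln 2)C(S_{opt})\approx 1.69$; chaining such cheap, high-density, non-completing elements makes the ratio arbitrarily bad. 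Your argument (and the Fact) does go through if the selection rule uses \emph{raw} marginal densities, and also under unit costs, where a truncated marginal equal to $\Phi-f(A_{i-1})$ forces the chosen element itself to complete the cover in that step --- which is the setting \cite{Goyal2013} actually analyzes. As written, the charging inequality cannot be established for the truncated general-cost greedy, so the proof has a genuine, and in fact unfixable, gap.
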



Unfortunately, it has been shown that calculating $f(A)$ $(\forall A\subseteq V)$ is \#P-hard~\cite{ChenWW2010}. Therefore, implementing $\mathsf{GreSSC}$ under the EV setting is impractical. Indeed, the existing work usually uses multiple monte-carlo simulations to compute $\hat{f}(A')$ in Algorithm~\ref{alg:naivegreedy}~\cite{Kempe2003,LeskovecKGFVG2007}, so $\hat{f}(A')$ is a noisy estimation of $f(A')$. However, such an approach has two drawbacks: (1) The time complexity of $\mathsf{GreSSC}$ is still high (though polynomial); (2) The theoretical performance bounds of $\mathsf{GreSSC}$ under the EV setting (such as Fact~\ref{fct:goyalgreedy}) no longer hold.

\subsection{RR-Set Sampling}

Recently, Borgs. \textit{et. al.}~\cite{Borgs2014} have proposed an elegant network sampling method to estimate the value of $f(A) (\forall A\subseteq V)$, whose key idea can be presented by the following equation:
\begin{eqnarray}
\forall A\subseteq V: f(A)=n\cdot \mathbb{P}\{R\cap A\neq\emptyset\},
\label{eqn:borgsequation}
\end{eqnarray}
where $R$ is a random subset of $V$, called as a Reverse Reachable (RR) set. Under the IC model~\cite{Borgs2014}, an RR-set can be generated by first uniformly sampling $u$ from $V$, then reverse the edges' directions in $G$ and traverse $G$ from $u$ according to the probabilities associated with each edge. According to equation~(\ref{eqn:borgsequation}), the value of $f(A)(\forall A\subseteq V)$ can be estimated unbiasedly by any set $\mathcal{U}$ of RR-sets as follows:
\begin{eqnarray}
\bar{f}(\mathcal{U},A)={n\cdot\sum\nolimits_{R\in \mathcal{U}} X(R, A) }/{|\mathcal{U}|}
\label{eqn:estimation}
\end{eqnarray}
where $X(R,A)\triangleq \min\{1,|R\cap A|\}$.
It is noted that the RR-set sampling method can also be applied to the Triggering model, where equations~(\ref{eqn:borgsequation})-(\ref{eqn:estimation}) still hold~\cite{TangXS2014}.

The work of Borgs. \textit{et. al.}~\cite{Borgs2014} and other proposals~\cite{TangXS2014,TangSX2015,NguyenTD2016} have shown that the influence maximization problem can be efficiently solved by the RR-set sampling method. Nevertheless, how to use this method to solve the MCSS problem still remains largely open.


    \begin{algorithm} [tp!]
    \KwIn{$ G=(V,E), \delta, \alpha,\sigma,\gamma, \mu, \eta$ }
    \KwOut{A set $S\subseteq V$ satisfying $f(S)\geq (1-\alpha)\eta$ w.h.p.}
    %
  $W_1\leftarrow \langle (1-\alpha)\eta,\frac{\gamma}{1-\alpha},\frac{\delta}{2\mu} \rangle$; $\Lambda\leftarrow (1-{\alpha}+\gamma)\eta$ \label{ln:setw1}\\
  $W_2\leftarrow \langle \eta,\sigma,\frac{\delta}{2} \rangle$;~$T\leftarrow \mathsf{SetT}(W_1,W_2)$ \label{ln:setw2}\\
  Generate a set $\mathcal{R}$ of RR-sets satisfying $|\mathcal{R}|=T$ \label{ln:generaterrsets}\\
  $S\leftarrow \mathsf{MCA}(\mathcal{R},\Lambda) $;~\Return{$S$} \label{ln:setbiglambda}
  \hrule
  \textbf{Function} $\mathsf{MCA}(\mathcal{R},\Lambda)$\\
 {$A\leftarrow \emptyset$} \\
 \lIf{$|V|<\Lambda$}{\Return $\emptyset$}
  \While{$\bar{f}(\mathcal{R},A)<\Lambda$}{
          $u^*\leftarrow \arg\max_{u\in V\backslash A}\frac{\min\{\bar{f}(\mathcal{R},A\cup \{u\}),\Lambda\}-\bar{f}(\mathcal{R},A)}{C(\{u\})}$;\\
          $A\leftarrow A\cup \{u^*\}$
  }
  \Return{$A$}
  \hrule \textbf{Function} $\mathsf{SetT}(W_1,W_2)$\\
  $T\leftarrow \lceil \max\{\mathrm{ut}(W_1), \mathrm{lt}(W_2)\} \rceil$;
%
  \Return{$T$}
    \caption{Bi-Criteria Approximation Algorithm for General Costs $(\mathsf{BCGC})$}
    \label{alg:bcgc}
  \end{algorithm}


\section{Bi-Criteria Approximation Algorithms}
\label{sec:bicriteria}

In this section, we propose bi-criteria approximation (BA) algorithms for the MCSS problem {under the GC+NV setting}.  

\subsection{The BCGC Algorithm}

Our first bi-criteria approximation algorithm is called $\mathsf{BCGC}$, shown in Algorithm~\ref{alg:bcgc}. $\mathsf{BCGC}$ first generates a set $\mathcal{R}$ of $T$ RR-sets (line~\ref{ln:generaterrsets}) according to the input variables $\delta, \alpha,\sigma,\gamma, \mu$ and $\eta$, and then calls the function $\mathsf{MCA}$ to find a min-cost node set $S$ satisfying $\bar{f}(\mathcal{R},S)\geq \Lambda$, which is returned as the solution to MCSS.

It can be verified that $\bar{f}(\mathcal{R},\cdot)$ is a monotone and non-negative submodular function defined on the ground set $V$, so $\mathsf{MCA}$ is essentially a (deterministic) greedy submodular set cover algorithm similar to $\mathsf{GreSSC}$. 


The key issue in $\mathsf{BCGC}$ is how to determine the values of $T$ and $\Lambda$. Intuitively, $T$ and $\Lambda$ should be large enough such that $\mathsf{MCA}$ can output a {feasible solution} $S$ (i.e., $f(S)\geq (1-\alpha)\eta$) with a  cost close to $C(S_{opt})$. On the other hand, we also want $T$ and $\Lambda$ to be small such that the time complexity of $\mathsf{BCGC}$ can be reduced. To see how $\mathsf{BCGC}$ achieve these goals, we first introduce the following functions: 
\begin{definition}
For any $\Gamma\in (0,n]$ and $\beta,\theta\in (0,1)$, define
\begin{eqnarray}
&&\mathrm{ut}(\langle \Gamma, \beta, \theta\rangle) \triangleq \min\left\{\frac{n^2}{2\beta^2\Gamma^2}\ln\frac{1}{\theta},\frac{2n(\beta+3)}{3\beta^2\Gamma}\ln\frac{1}{\theta}\right\};~~~~\label{eqn:defiofut}\\
&&\mathrm{lt}(\langle\Gamma, \beta, \theta\rangle) \triangleq [{2n }/({\beta^2 \Gamma})]\ln({1}/{\theta}); \label{eqn:defioflt}\\
&&\mathcal{Q}[\Gamma]\triangleq\{A\subseteq V: f(A)<\Gamma\};~D(\Gamma)\triangleq\left({{e}n}/{\lfloor \Gamma\rfloor}\right)^{\lfloor \Gamma\rfloor}, \label{eqn:defiofqgamma}
\end{eqnarray}
where $e$ is the base of natural logarithms.
\end{definition}
\noindent From Eqn.~(\ref{eqn:defiofqgamma}), it can be seen that
\begin{eqnarray}
|\mathcal{Q}[\Gamma]|\leq\sum\nolimits_{i=1}^{\lfloor \Gamma\rfloor} {n \choose i}\leq \sum\nolimits_{i=1}^{\lfloor \Gamma\rfloor}\frac{n^i}{i!}=\sum_{i=1}^{\lfloor \Gamma\rfloor}\frac{{\lfloor \Gamma\rfloor}^i}{i!}\frac{n^i}{\lfloor \Gamma\rfloor^i}\leq D(\Gamma)\nonumber
\end{eqnarray}
Moreover, (\ref{eqn:defiofut})-(\ref{eqn:defiofqgamma}) are useful for Lemmas~\ref{lma:utltbound}-\ref{lma:boundingoursolution}, which can be proved by the concentration bounds in probability theory~\cite{FanL2006}:




\begin{lemma}
Given any $A\subseteq V$ and any set $\mathcal{U}$ of RR-sets, if  $f(A)< \Gamma$ and $|\mathcal{U}|\geq \mathrm{ut}(\Gamma,\beta,\theta)$, then we have $\mathbb{P}\{\bar{f}(\mathcal{U},A)\geq (1+\beta)\Gamma\}\leq \theta$; if $f(A)\geq \Gamma$ and $|\mathcal{U}|\geq \mathrm{lt}(\Gamma,\beta, \theta)$, then we have $\mathbb{P}\{\bar{f}(\mathcal{U},A)< (1-\beta) \Gamma \}\leq \theta$
\label{lma:utltbound}
\end{lemma}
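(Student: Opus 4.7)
My plan is to realize $\bar{f}(\mathcal{U}, A)$ as a (scaled) sum of i.i.d.\ Bernoulli indicators and then read off each of the three terms in $\mathrm{ut}$ and $\mathrm{lt}$ from a distinct concentration bound applied to that sum. Concretely, setting $T = |\mathcal{U}|$ and $p = f(A)/n$, equation~(\ref{eqn:borgsequation}) shows that the variables $\{X(R,A)\}_{R \in \mathcal{U}}$ are independent Bernoulli($p$), so $Y \triangleq \sum_{R \in \mathcal{U}} X(R,A)$ has $\mathbb{E}[Y] = Tp$ and $\bar{f}(\mathcal{U}, A) = (n/T)\,Y$.

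For the upper-tail statement ($f(A) < \Gamma$), the event $\bar{f}(\mathcal{U},A) \ge (1+\beta)\Gamma$ is identical to $Y/T \ge (1+\beta)\Gamma/n$, and since $p < \Gamma/n$ the deviation from the mean is at least $\beta\Gamma/n$. I would derive the first term of $\mathrm{ut}$ directly from Hoeffding: $\mathbb{P}\{Y/T - p \ge \beta\Gamma/n\} \le \exp(-2T(\beta\Gamma/n)^2)$, which is $\le \theta$ exactly when $T \ge (n^2/(2\beta^2\Gamma^2))\ln(1/\theta)$. The second term of $\mathrm{ut}$ would come from the multiplicative Chernoff bound in the form $\mathbb{P}\{Y \ge (1+\delta)\mathbb{E}[Y]\} \le \exp(-\delta^2\mathbb{E}[Y]/(2 + 2\delta/3))$, applied with $\delta = (1+\beta)\Gamma/(np) - 1$ so that $(1+\delta)\mathbb{E}[Y] = (1+\beta)\Gamma T/n$. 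Elementary algebra simplifies the exponent to $3T(a-p)^2/(2(2p+a))$ with $a = (1+\beta)\Gamma/n$; a one-line derivative argument shows this expression is strictly decreasing in $p$ for $0 < p < a$, so the worst case over $p < \Gamma/n$ is attained at $p = \Gamma/n$, where it evaluates to $3T\beta^2\Gamma/(2n(\beta+3))$. Requiring this to exceed $\ln(1/\theta)$ yields $T \ge 2n(\beta+3)/(3\beta^2\Gamma)\,\ln(1/\theta)$, precisely the second term.

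For the lower-tail statement ($f(A) \ge \Gamma$), I would invoke the standard multiplicative Chernoff lower-tail bound $\mathbb{P}\{Y \le (1-\delta')\mathbb{E}[Y]\} \le \exp(-\delta'^2\mathbb{E}[Y]/2)$ with $\delta' = 1 - (1-\beta)\Gamma/(np) \ge \beta$, tuned so that $(1-\delta')\mathbb{E}[Y] = (1-\beta)\Gamma T/n$. The exponent simplifies to $T(f(A) - (1-\beta)\Gamma)^2/(2nf(A))$. Viewing this as a function of $x = f(A)$ of the form $(x - c)^2/x$ with $c = (1-\beta)\Gamma$, a brief derivative check shows it is increasing for $x > c$, so its minimum over $x \ge \Gamma$ is attained at $x = \Gamma$ and equals $T\beta^2\Gamma/(2n)$. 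Demanding this quantity to exceed $\ln(1/\theta)$ gives exactly $T \ge 2n/(\beta^2\Gamma)\,\ln(1/\theta) = \mathrm{lt}(\langle\Gamma,\beta,\theta\rangle)$.

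The only point that genuinely needs care is matching each summand in $\mathrm{ut}$ to the correct flavour of Chernoff bound (Hoeffding versus the $(2 + 2\delta/3)$-denominator multiplicative form), and then verifying in each case that the worst case of the unknown mean $p$ really does occur at the boundary $p = \Gamma/n$ (upper tail) or $x = \Gamma$ (lower tail). Both reductions boil down to single-variable monotonicity calculations, so I do not expect any substantive obstacle beyond this bookkeeping.
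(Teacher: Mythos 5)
Your proof is correct and matches the paper's intended argument: the paper offers no written proof, deferring entirely to standard concentration bounds from~\cite{FanL2006}, and your instantiation (Hoeffding for the first term of $\mathrm{ut}$, the $(2+2\delta/3)$-denominator multiplicative Chernoff bound for the second, and the lower-tail Chernoff bound for $\mathrm{lt}$, each with the worst case of the unknown mean pushed to the boundary $f(A)=\Gamma$ by a monotonicity check) recovers all three thresholds exactly. The $\min$ in the definition of $\mathrm{ut}$ is handled correctly since each of the two terms is shown to suffice on its own.
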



\begin{lemma}
Let $\mathcal{U}$ be any set of RR-sets satisfying $|\mathcal{U}|\geq \mathrm{ut}(\Gamma,\beta,\frac{\theta}{D(\Gamma)})$. We have $$\mathbb{P}\{\exists A\in \mathcal{Q}[\Gamma]: \bar{f}(\mathcal{U},A)\geq (1+\beta)\Gamma\}\leq \theta.$$
\label{lma:boundingoursolution}
\end{lemma}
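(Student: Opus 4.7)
The plan is to reduce this to a union bound, leveraging Lemma~\ref{lma:utltbound} (applied pointwise to each $A$) together with the bound $|\mathcal{Q}[\Gamma]|\le D(\Gamma)$ that was derived immediately after Eqn.~(\ref{eqn:defiofqgamma}). The approach works because $\mathrm{ut}(\Gamma,\beta,\theta')$ is designed to suppress the ``overestimate'' probability for a single set $A$ with $f(A)<\Gamma$ below any prescribed level $\theta'$, so by choosing $\theta' = \theta/D(\Gamma)$ we can afford to pay the price of a union bound across the entire class $\mathcal{Q}[\Gamma]$.

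Concretely, I would proceed as follows. First, fix any non-empty $A\in\mathcal{Q}[\Gamma]$. By definition $f(A)<\Gamma$, so Lemma~\ref{lma:utltbound} with deviation parameter $\beta$ and confidence parameter $\theta/D(\Gamma)$ immediately gives
$$\mathbb{P}\{\bar{f}(\mathcal{U},A)\ge (1+\beta)\Gamma\}\le \theta/D(\Gamma),$$
provided $|\mathcal{U}|\ge \mathrm{ut}(\Gamma,\beta,\theta/D(\Gamma))$, which is our hypothesis. (For $A=\emptyset$ we have $\bar{f}(\mathcal{U},\emptyset)=0<(1+\beta)\Gamma$ deterministically, so that case contributes nothing.) Second, apply the union bound over all $A\in\mathcal{Q}[\Gamma]$:
$$\mathbb{P}\{\exists A\in\mathcal{Q}[\Gamma]:\bar{f}(\mathcal{U},A)\ge (1+\beta)\Gamma\}\le |\mathcal{Q}[\Gamma]|\cdot \frac{\theta}{D(\Gamma)}.$$
Finally, substitute the cardinality bound $|\mathcal{Q}[\Gamma]|\le D(\Gamma)$ already established in the paragraph following Eqn.~(\ref{eqn:defiofqgamma}) to obtain the desired estimate of $\theta$.

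I should mention one small subtlety that is worth checking before writing the formal argument: the cardinality bound is derived from summing $\binom{n}{i}$ for $i\le \lfloor\Gamma\rfloor$, which is valid only because $f(A)\ge |A|$ (the seeds are always activated), so $A\in\mathcal{Q}[\Gamma]$ forces $|A|\le\lfloor\Gamma\rfloor$. This monotonicity-style inequality is a standard property of the triggering model and is implicit in the earlier derivation in the excerpt, so I would simply cite it when instantiating the union bound. No other obstacle arises; the proof is essentially a two-line consequence of Lemma~\ref{lma:utltbound} plus the enumeration argument for $\mathcal{Q}[\Gamma]$.
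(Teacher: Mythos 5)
Your proof is correct and follows exactly the route the paper intends: the pointwise bound from Lemma~\ref{lma:utltbound} with confidence parameter $\theta/D(\Gamma)$, a union bound over $\mathcal{Q}[\Gamma]$, and the cardinality estimate $|\mathcal{Q}[\Gamma]|\leq D(\Gamma)$ derived after Eqn.~(\ref{eqn:defiofqgamma}). The subtlety you flag (that $f(A)\geq |A|$ justifies restricting to $|A|\leq\lfloor\Gamma\rfloor$) and your separate treatment of $A=\emptyset$ are both sound and, if anything, slightly more careful than the paper's sketch.
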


Note that $\mathsf{BCGC}$ set $|\mathcal{R}|=T\geq \max\{\mathrm{ut}(W_1), \mathrm{lt}(W_2)\}$ and $\Lambda=(1-{\alpha}+\gamma)\eta$ (lines~\ref{ln:setw1}-\ref{ln:setw2}). According to Lemma~\ref{lma:boundingoursolution}, when $|\mathcal{R}|\geq \mathrm{ut}(W_1)$ and $\mu\geq D(\eta-\alpha\eta)$, we must have
\begin{eqnarray}
\mathbb{P}\{\exists A\in \mathcal{Q}[(1-\alpha)\eta]: \bar{f}(\mathcal{R},A)\geq (1-\alpha+\gamma)\eta\}\leq {\delta}/{2}\label{eqn:ensurefeasible}
\end{eqnarray}
Moreover, when $|\mathcal{R}|\geq \mathrm{ut}(W_2)$, Lemma~\ref{lma:utltbound} gives us:
\begin{eqnarray}
\mathbb{P}\{\bar{f}(\mathcal{R},S_{opt})< (1-\sigma)\eta\}\leq {\delta}/{2} \label{eqn:ensurear}
\end{eqnarray}

Intuitively, Eqn.~(\ref{eqn:ensurefeasible}) ensures that none of the infeasible solutions in $\mathcal{Q}[(1-\alpha)\eta]$ can be returned by $\mathsf{BCGC}$ with high probability, while Eqn.~(\ref{eqn:ensurear}) ensures that $\bar{f}(\mathcal{R},S_{opt})$ must be close to $\eta$. Using (\ref{eqn:ensurefeasible})-(\ref{eqn:ensurear}), we can get:

\begin{theorem}
When $\sigma,\gamma>0$, $\sigma+\gamma<\alpha<1$ and $\mu\geq D(\eta-\alpha\eta)$, $\mathsf{BCGC}$ returns a set $S\subseteq V$ satisfying $f(S)\geq (1-\alpha)\eta$ and $C(S)\leq (1+\ln\frac{1-\sigma}{\alpha-\gamma-\sigma})C(S_{opt})$ with the probability of at least $1-\delta$.
\label{thm:approximationratio}
\end{theorem}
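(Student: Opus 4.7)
The plan is to show that the theorem follows from two high-probability events together with the greedy submodular-set-cover analysis already captured in Fact~\ref{fct:goyalgreedy}. Concretely, I will define
\[
E_1 = \bigl\{\exists\, A\in \mathcal{Q}[(1-\alpha)\eta]:\bar{f}(\mathcal{R},A)\geq \Lambda\bigr\},\qquad
E_2 = \bigl\{\bar{f}(\mathcal{R},S_{opt})< (1-\sigma)\eta\bigr\},
\]
and argue that on the complement of $E_1\cup E_2$ both conclusions of the theorem hold. The choice $W_1=\langle(1-\alpha)\eta,\gamma/(1-\alpha),\delta/(2\mu)\rangle$ was designed precisely so that $(1+\gamma/(1-\alpha))(1-\alpha)\eta=\Lambda$; combined with the hypothesis $\mu\geq D(\eta-\alpha\eta)$, this is exactly the setup of Lemma~\ref{lma:boundingoursolution}, yielding $\mathbb{P}(E_1)\leq \delta/2$ as in (\ref{eqn:ensurefeasible}). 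For $E_2$, since $f(S_{opt})\geq \eta$, the lower-tail part of Lemma~\ref{lma:utltbound} with $W_2=\langle \eta,\sigma,\delta/2\rangle$ gives $\mathbb{P}(E_2)\leq \delta/2$ as in (\ref{eqn:ensurear}). A union bound then gives failure probability at most $\delta$.

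Assuming neither bad event occurs, the feasibility conclusion $f(S)\geq(1-\alpha)\eta$ is immediate by contrapositive: $\mathsf{MCA}$ only returns $S$ once $\bar{f}(\mathcal{R},S)\geq \Lambda$, so if we had $f(S)<(1-\alpha)\eta$ then $S\in \mathcal{Q}[(1-\alpha)\eta]$ would witness $E_1$. (The edge case $|V|<\Lambda$ cannot occur in a non-trivial instance because $\bar{f}(\mathcal{R},V)\leq n$ and the theorem's assumptions force $\Lambda<n$, so $\mathsf{MCA}$ does not return $\emptyset$ unless $\Lambda\leq 0$.)

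For the cost bound I will reduce to Fact~\ref{fct:goyalgreedy}. The function $\bar{f}(\mathcal{R},\cdot)$ is monotone, submodular, and normalized, exactly the hypotheses of Fact~\ref{fct:goyalgreedy}. Complement of $E_2$ tells us $\bar{f}(\mathcal{R},S_{opt})\geq (1-\sigma)\eta$, so if we let $\tilde S_{opt}$ denote a min-cost set with $\bar{f}(\mathcal{R},\tilde S_{opt})\geq (1-\sigma)\eta$ then $C(\tilde S_{opt})\leq C(S_{opt})$. Now rewrite the stopping threshold as
\[
\Lambda=(1-\alpha+\gamma)\eta=\Bigl(1-\tfrac{\alpha-\gamma-\sigma}{1-\sigma}\Bigr)(1-\sigma)\eta,
\]
so $\mathsf{MCA}$ is precisely the greedy submodular-set-cover procedure of Fact~\ref{fct:goyalgreedy} applied to $\bar{f}(\mathcal{R},\cdot)$ with ``full target'' $(1-\sigma)\eta$ and relaxation parameter $\alpha'=(\alpha-\gamma-\sigma)/(1-\sigma)\in(0,1)$ (using the hypothesis $\sigma+\gamma<\alpha<1$, which also makes $(1-\sigma)\eta>\Lambda$ and thus ensures $\tilde S_{opt}$ is feasible for $\mathsf{MCA}$'s stopping rule). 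Fact~\ref{fct:goyalgreedy} therefore yields
\[
C(S)\leq \Bigl(1+\ln\tfrac{1}{\alpha'}\Bigr)C(\tilde S_{opt})\leq \Bigl(1+\ln\tfrac{1-\sigma}{\alpha-\gamma-\sigma}\Bigr)C(S_{opt}).
\]

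The only non-mechanical step is the reformulation of $\Lambda$ in the form $(1-\alpha')(1-\sigma)\eta$ so that Fact~\ref{fct:goyalgreedy} can be invoked verbatim on $\bar{f}(\mathcal{R},\cdot)$; the subtlety worth flagging is that the fact is being applied with $\tilde S_{opt}$ (the min-cost RR-set-cover) rather than $S_{opt}$ itself, with the gap absorbed into the inequality $C(\tilde S_{opt})\leq C(S_{opt})$ guaranteed by the complement of $E_2$.
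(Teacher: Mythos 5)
Your proposal is correct and follows essentially the same route as the paper's own proof: the same two bad events bounded by $\delta/2$ each via Eqns.~(\ref{eqn:ensurefeasible})--(\ref{eqn:ensurear}), the same reduction of the cost bound to Fact~\ref{fct:goyalgreedy} applied to $\bar{f}(\mathcal{R},\cdot)$ with the min-cost RR-set cover (your $\tilde S_{opt}$ is the paper's $B^*$), and the same union bound. The only difference is that you spell out the reparametrization $\Lambda=(1-\alpha')(1-\sigma)\eta$ and the $|V|<\Lambda$ edge case explicitly, which the paper leaves implicit.
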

\begin{proof}
Let $B^*$ denote an optimal solution to the optimization problem: $$\min\{C(A)| \bar{f}(\mathcal{R},A)\geq (1-\sigma)\eta \wedge A\subseteq V\}.$$ As $\mathsf{MCA}$ is essentially a deterministic greedy submodular cover algorithm and $(1-\alpha+\gamma)\eta <(1-\sigma)\eta$, we can use Fact~\ref{fct:goyalgreedy} to get $$C(S)\leq (1+\ln\frac{1-\sigma}{\alpha-\gamma-\sigma})C(B^*).$$ Therefore, if $\bar{f}(\mathcal{R},S_{opt})\geq (1-\sigma)\eta$ holds, then we must have $C(B^*)\leq C(S_{opt})$ and hence $$C(S)\leq (1+\ln\frac{1-\sigma}{\alpha-\gamma-\sigma})C(S_{opt}).$$ According to (\ref{eqn:ensurear}), the probability that $\bar{f}(\mathcal{R},S_{opt})\geq(1-\sigma)\eta$ does not hold is at most $\delta/2$. Besides, as $\bar{f}(\mathcal{R},S)\geq (1-\alpha+\gamma)\eta$, the probability that $\mathsf{BCGC}$ returns an infeasible solution in $\mathcal{Q}[(1-\alpha)\eta]$ is no more than $\mathbb{P}\{\exists A\in \mathcal{Q}[(1-\alpha)\eta]: \bar{f}(\mathcal{R},A)\geq (1-\alpha+\gamma)\eta\}$, which is bounded by $\delta/2$ due to (\ref{eqn:ensurefeasible}). The theorem then follows by using the union bound.
\end{proof}

Note that the approximation ratio of $\mathsf{BCGC}$ nearly matches the approximation ratio (i.e., $1+\ln\frac{1}{\alpha}$) shown in Fact~\ref{fct:goyalgreedy}, which is derived under the EV setting. For example, if we set $\sigma=\gamma=\frac{\alpha}{4}$, then the approximation ratio of $\mathsf{BCGC}$ is at most $1+\ln\frac{2}{\alpha}$, which is larger than $1+\ln\frac{1}{\alpha}$ by only $\ln 2$.

$\mathsf{BCGC}$ spends most of its running time on generating RR-sets. According to the setting of $T$, it can be seen that $\mathsf{BCGC}$ generates at most $\mathcal{O}(\frac{n}{\alpha^2}\ln \frac{n}{\eta\delta})$ RR sets (see the proof of Theorem~\ref{thm:timecomplexityBCGC}). Therefore, we can get:
\begin{theorem}
Let $q=\max\{f(v)|v\in V\}$. $\mathsf{BCGC}$ can achieve the performance bound shown in Theorem~\ref{thm:approximationratio} under the expected time complexity of $\mathcal{O}(\frac{m q}{\alpha^2}\ln\frac{n}{(1-\alpha)\delta\eta})$.
\label{thm:timecomplexityBCGC}
\end{theorem}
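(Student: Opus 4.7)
The plan is to decompose the expected runtime of $\mathsf{BCGC}$ into two components: (i)~the cost of generating the $T$ RR-sets on line~\ref{ln:generaterrsets}, and (ii)~the cost of running the greedy procedure $\mathsf{MCA}$ on those samples. I would argue that (i)~dominates and already meets the claimed bound, so (ii)~only needs a matching upper bound.

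For (i), I would first show $T=\lceil\max\{\mathrm{ut}(W_1),\mathrm{lt}(W_2)\}\rceil=\mathcal{O}(\frac{n}{\alpha^2}\ln\frac{n}{(1-\alpha)\delta\eta})$ under the recommended setting $\sigma=\gamma=\Theta(\alpha)$ from the remark after Theorem~\ref{thm:approximationratio}. The quantity $\mathrm{lt}(W_2)=\frac{2n}{\sigma^2\eta}\ln\frac{2}{\delta}$ is immediately $\mathcal{O}(\frac{n}{\alpha^2\eta}\ln\frac{1}{\delta})$. For $\mathrm{ut}(W_1)$, with $\Gamma=(1-\alpha)\eta$, $\beta=\gamma/(1-\alpha)$, $\theta=\delta/(2\mu)$, and choosing the second branch of the $\min$ in~(\ref{eqn:defiofut}), one obtains $\mathcal{O}(\frac{n(1-\alpha)}{\alpha^2\eta}\ln\frac{2\mu}{\delta})$. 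Plugging in $\mu=D((1-\alpha)\eta)$ gives $\ln\mu=\mathcal{O}((1-\alpha)\eta\ln\frac{n}{(1-\alpha)\eta})$, and the two contributions collapse into $\mathcal{O}(\frac{n}{\alpha^2}\ln\frac{n}{(1-\alpha)\delta\eta})$.

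Next, I would bound the expected work $W(R)$ of generating a single RR-set $R$ under the triggering model. The reverse-BFS procedure picks a source uniformly and expands backwards, inspecting the in-edges of every node added to $R$, so $W(R)\leq\sum_{u\in R}|N_{in}(u)|$ and hence $\mathbb{E}[W(R)]\leq\sum_{u\in V}|N_{in}(u)|\,\mathbb{P}\{u\in R\}$. Applying Eqn.~(\ref{eqn:borgsequation}) with $A=\{u\}$ yields $\mathbb{P}\{u\in R\}=f(\{u\})/n\leq q/n$, so $\mathbb{E}[W(R)]\leq(q/n)\sum_u|N_{in}(u)|=qm/n$. Multiplying by $T$ produces an expected RR-generation cost of $\mathcal{O}(T\cdot qm/n)=\mathcal{O}(\frac{mq}{\alpha^2}\ln\frac{n}{(1-\alpha)\delta\eta})$, matching the target.

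Finally, I would dispose of $\mathsf{MCA}$: implemented with inverted indices (each node pointing to the RR-sets containing it) and lazy-greedy priority updates, one run takes time $\mathcal{O}(\sum_{R\in\mathcal{R}}|R|)$; since $\mathbb{E}[|R|]=\sum_u f(\{u\})/n\leq q$, the expectation of this sum is $\mathcal{O}(Tq)$, which is no larger than the generation cost after absorbing an additive $n$ into $m$ in the usual way. I expect the main obstacle to lie in the bookkeeping for the $T$ bound: because $\mu\geq D((1-\alpha)\eta)$ is super-polynomial in $n$, the factor $\ln\mu$ contributes $(1-\alpha)\eta\ln\frac{n}{(1-\alpha)\eta}$ to $\ln\frac{2\mu}{\delta}$, and it is only by taking the second (linear-in-$n$) branch of the $\min$ in~(\ref{eqn:defiofut}) that the product with $\frac{n(1-\alpha)}{\alpha^2\eta}$ collapses into the desired $\mathcal{O}(\frac{n}{\alpha^2}\ln\frac{n}{(1-\alpha)\delta\eta})$ rather than a quadratic-in-$n$ blowup.
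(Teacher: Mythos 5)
Your proposal is correct and follows essentially the same route the paper indicates: bound $T=\mathcal{O}(\frac{n}{\alpha^2}\ln\frac{n}{(1-\alpha)\delta\eta})$ from the second branch of $\mathrm{ut}(\cdot)$ with $\ln D((1-\alpha)\eta)=\mathcal{O}((1-\alpha)\eta\ln\frac{n}{(1-\alpha)\eta})$, bound the expected cost of one RR-set by $\frac{qm}{n}$ via $\mathbb{P}\{u\in R\}=f(\{u\})/n\leq q/n$, and absorb the greedy's $\mathcal{O}(\sum_{R}|R|)$ work into the generation cost. The paper defers the detailed proof, but its remarks preceding the theorem describe exactly this decomposition, so there is nothing to add.
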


    \begin{algorithm} [tp!]
    \KwIn{\textcolor{black}{$G=(V,E), \delta, \alpha,\sigma,\gamma, \eta$}}
    \KwOut{A set $S\subseteq V$ satisfying $f(S)\geq (1-\alpha)\eta$ w.h.p.}
  $W_1\leftarrow \langle (1-\alpha)\eta,\frac{\gamma}{1-\alpha},\frac{\delta}{6D(\eta-\alpha\eta)} \rangle$;~$W_2\leftarrow \langle \eta,\sigma,\frac{\delta}{6} \rangle$ \label{ln:setttingTtegc1}\\
  $T\leftarrow \mathsf{SetT}(W_1,W_2)$;~$\theta\leftarrow {\delta}/{3};~\mathcal{R}\leftarrow \emptyset$ \label{ln:setttingTtegc2}\\
  \While{$|\mathcal{R}|\leq T$}{ \label{ln:iterationstart}
    Generate some RR-sets and add them into $\mathcal{R}$ until $|\mathcal{R}|=\min\{T,\lceil \mathrm{lt}(\langle \eta,\sigma,\frac{\theta}{3}\rangle)\rceil\}$ \label{ln:iterategenerate2}\\
    $S\leftarrow \mathsf{MCA}(\mathcal{R},(1-\alpha+\gamma)\eta)$ \label{ln:findsbigger1minusalphaplus}\\
    \lIf{$|\mathcal{R}|=T$}{\Return{$S$} \label{ln:achievesthethreldshold}}
    $(\mathcal{U},\mathrm{Pass})\leftarrow \mathsf{TEST}(S,\frac{\gamma}{2(1-\alpha)}, (1-\alpha)\eta,\textcolor{black}{\frac{2\theta}{3}},T-|\mathcal{R}|)$ \label{ln:calltest}\\
    \lIf{$\mathrm{Pass}=\mathbf{True}$}{\Return{$S$} \label{ln:returnbypass}}
    $\mathcal{R}\leftarrow \mathcal{R}\cup \mathcal{U}$;~$\theta\leftarrow \frac{\theta}{2}$ \label{ln:iterationend}\\
  }
  \Return{$S$}
    \caption{The Trial-and-Error Algorithm for General Costs ($\mathsf{TEGC}$)}
    \label{alg:mca}
  \end{algorithm}

\subsection{A Trial-and-Error Algorithm}
\label{sec:tae}

It can be seen that $\mathsf{BCGC}$ behaves in an ``once-for-all'' manner, i.e, it generates all the RR-sets in one batch, and then finds a solution using the generated RR-sets. In this section, we propose another ``trial-and-error'' algorithm (called $\mathsf{TEGC}$) for the MCSS problem, which runs in iterations and ``lazily'' generates RR-sets when necessary.

%
%
%

 More specifically, $\mathsf{TEGC}$ runs in iterations with the input variables satisfying $\sigma+\gamma<\alpha<1$. In each iteration (lines~\ref{ln:iterationstart}-\ref{ln:iterationend}), it first generates a set $\mathcal{R}$ of RR-sets (line~\ref{ln:iterategenerate2}) according to Lemma~\ref{lma:utltbound} to ensure $$\mathbb{P}\{\bar{f}(\mathcal{R},S_{opt})< (1-\sigma)\eta\}\leq {\theta}/{3},$$ 
where the parameter $\theta$ will be explained shortly. Then $\mathsf{TEGC}$ calls $\mathsf{MCA}$ in line~\ref{ln:findsbigger1minusalphaplus} to find an approximation solution $S$ satisfying $$\bar{f}(\mathcal{R},S)\geq (1-\alpha+\gamma)\eta.$$ However, it is possible that $S$ is an infeasible solution in $\mathcal{Q}[(1-\alpha)\eta]$. Therefore, $\mathsf{TEGC}$ calls $\mathsf{TEST}$ to judge whether $f(S)\geq (1-\alpha)\eta$ (line~\ref{ln:calltest}). If $\mathsf{TEST}$ returns $\mathrm{Pass}=\mathbf{True}$, it implies that $S$ is a feasible solution w.h.p., so $\mathsf{TEGC}$ terminates and returns $S$. Otherwise, $\mathsf{TEGC}$ enters into another iteration and adds more RR-sets into $\mathcal{R}$. This ``trial and error'' process repeats until $|\mathcal{R}|$ achieves a predefined threshold $T$ (line~\ref{ln:achievesthethreldshold}). As $T=\mathcal{O}(\frac{n}{\alpha^2}\ln \frac{n}{\eta\delta})$ is set similarly with that in $\mathsf{BCGC}$ (lines~\ref{ln:setttingTtegc1}-\ref{ln:setttingTtegc2}), $\mathsf{TEGC}$ is guaranteed to find a feasible solution with high probability. 


The parameter $\theta$ in $\mathsf{TEGC}$ is roughly explained as follows. Intuitively, $\theta$ indicates the total probability of the ``bad events'' (e.g., $\{\bar{f}(\mathcal{R},S_{opt})< (1-\sigma)\eta\}$) happen in any iteration of $\mathsf{TEGC}$ when $|\mathcal{R}|<T$. In the first iteration, we set $\theta=\frac{\delta}{3}$ and $\theta$ is decreased by a factor $2$ in every subsequent iteration. We also constrain the probability that the bad events happen when $|\mathcal{R}|=T$ by $\frac{\delta}{3}$ (lines~\ref{ln:setttingTtegc1}-\ref{ln:setttingTtegc2}). Using the union bound, the total probability that $\mathsf{TEGC}$ returns a ``bad'' solution conflicting our performance bounds is upper-bounded by $\frac{\delta}{3}+\sum_{i=0}^\infty \frac{\delta}{3\cdot 2^i}\leq \delta$.

\textcolor{black}{By similar reasoning with that in Theorem~\ref{thm:approximationratio}, we can prove that $\mathsf{TEGC}$ has the same approximation ratio as $\mathsf{BCGC}$:}
\begin{theorem}
When $\sigma,\gamma>0$ and $\sigma+\gamma<\alpha<1$, $\mathsf{TEGC}$ returns a set $S\subseteq V$ satisfying $f(S)\geq (1-\alpha)\eta$ and $C(S)\leq (1+\ln\frac{1-\sigma}{\alpha-\gamma-\sigma})C(S_{opt})$ with the probability of at least $1-\delta$.
\label{thm:artegc}
\end{theorem}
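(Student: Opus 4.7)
The plan is to mirror the proof of Theorem~\ref{thm:approximationratio}, adapting it to the iterative nature of $\mathsf{TEGC}$ via a union bound over all iterations. In each iteration $i$ with $|\mathcal{R}| < T$, I will isolate two ``bad events'': (a) $\bar{f}(\mathcal{R}, S_{opt}) < (1-\sigma)\eta$, and (b) $\mathsf{TEST}$ returns $\mathrm{Pass}=\mathbf{True}$ while $f(S) < (1-\alpha)\eta$. For the terminal case in which $|\mathcal{R}|$ reaches $T$ (line~\ref{ln:achievesthethreldshold}), I will invoke Lemmas~\ref{lma:utltbound}--\ref{lma:boundingoursolution} exactly as in the $\mathsf{BCGC}$ analysis, with the parameters $W_1, W_2$ set in lines~\ref{ln:setttingTtegc1}--\ref{ln:setttingTtegc2} so that the total terminal failure probability is at most $\delta/3$.

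For the per-iteration bad events, the RR-set count prescribed in line~\ref{ln:iterategenerate2} together with Lemma~\ref{lma:utltbound} gives $\mathbb{P}(\text{event (a)}) \le \theta/3$, while the specification of $\mathsf{TEST}$ (which I take as a separately established guarantee) gives $\mathbb{P}(\text{event (b)}) \le 2\theta/3$, so each iteration contributes at most $\theta$ to the failure budget. Since $\theta$ is initialized to $\delta/3$ and is halved every iteration, summing over iterations yields $\sum_{i=0}^{\infty} \tfrac{\delta}{3 \cdot 2^i} = \tfrac{2\delta}{3}$. Combining with the $\delta/3$ budget for the terminal step, the union bound caps the total failure probability at $\delta$.

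Conditioning on none of the bad events occurring, I establish the cost bound by splitting on the termination case. If $\mathsf{TEGC}$ exits at line~\ref{ln:returnbypass} via $\mathsf{TEST}$, feasibility $f(S) \ge (1-\alpha)\eta$ follows from the absence of event (b); meanwhile, letting $B^*$ be an optimal solution to $\min\{C(A) : \bar{f}(\mathcal{R}, A) \ge (1-\sigma)\eta, A\subseteq V\}$, the absence of event (a) gives $C(B^*) \le C(S_{opt})$, and since $\mathsf{MCA}$ is a deterministic greedy submodular cover algorithm invoked with threshold $(1-\alpha+\gamma)\eta < (1-\sigma)\eta$, Fact~\ref{fct:goyalgreedy} yields $C(S) \le \bigl(1 + \ln\tfrac{1-\sigma}{\alpha-\gamma-\sigma}\bigr) C(B^*)$, completing the bound. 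If $\mathsf{TEGC}$ exits at line~\ref{ln:achievesthethreldshold} with $|\mathcal{R}|=T$, the choice of $T$ via $W_1, W_2$ lets the proof of Theorem~\ref{thm:approximationratio} be applied essentially verbatim.

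The main obstacle will be handling the dependence across iterations cleanly: the RR-sets are accumulated rather than freshly drawn, so the events (a) and (b) in different iterations are defined on a growing sample and are not independent. I plan to sidestep this by defining each bad event on the full sample history and applying the union bound directly to the marginal probabilities at each iteration; the geometric decay $\theta_i = \delta/(3\cdot 2^i)$ is precisely what keeps the overall sum finite and below $\delta$. A secondary subtlety is stating the $\mathsf{TEST}$ guarantee at the right confidence level ($2\theta/3$) so that it meshes with the $\theta/3$ budget for event (a); this will be discharged by the separate correctness lemma for $\mathsf{TEST}$ and is the only place where the argument genuinely departs from the $\mathsf{BCGC}$ analysis.
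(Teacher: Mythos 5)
Your proposal is correct and follows essentially the same route as the paper, which itself only sketches this proof: a union bound over iterations with the geometrically decaying budget $\theta_i=\delta/(3\cdot 2^i)$ for the per-iteration bad events (concentration of $\bar{f}(\mathcal{R},S_{opt})$ plus a false $\mathrm{Pass}$ from $\mathsf{TEST}$), a separate $\delta/3$ budget for the terminal case $|\mathcal{R}|=T$ handled exactly as in Theorem~\ref{thm:approximationratio}, and Fact~\ref{fct:goyalgreedy} applied to $\mathsf{MCA}$ with thresholds $(1-\alpha+\gamma)\eta$ versus $(1-\sigma)\eta$ for the cost ratio. The only nit is that Theorem~\ref{thm:betais1} invoked with $\beta=\tfrac{2\theta}{3}$ actually bounds the false-$\mathrm{Pass}$ probability by $\beta/2=\theta/3$ rather than your stated $2\theta/3$, which only makes your per-iteration budget looser, so the final bound of $\delta$ still holds.
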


\noindent \textit{\textbf{The Design of function}} $\mathsf{TEST}$: Next, we explain how the function $\mathsf{TEST}(A,\kappa, \Gamma,\beta,L)$ is implemented. $\mathsf{TEST}$ maintains three threshold values $\ell, M, L$. When $L\leq M$, it simply generates $L$ RR-sets and returns them (lines~\ref{ln:returnLrrsetsbegin}-\ref{ln:returnLrrsetsend}). Otherwise, it keeps generating RR-sets $U_1,U_2,\cdots$ until either $\sum_{i=1}^j X(U_i,A)\geq \ell$ or $|\mathcal{U}|= M$, where $\mathcal{U}$ is the set of generated RR-sets. Note that $\mathsf{TEST}$ is called with $L=T-|\mathcal{R}|$, which implies that the total number of generated RR-sets in $\mathsf{TEGC}$ never exceeds $T$. 

Intuitively, if $f(A)$ is very large, then $X(U_i,A)$ ($\forall i$) must have a high probability to be $1$, so there is a high probability that $\sum_{i=1}^M X(U_i,A)\geq \ell$ and hence $\mathsf{TEST}$ returns $\mathrm{Pass}=\mathbf{True}$. Conversely, if $f(A)$ is very small, then there is a high probability that $\sum_{i=1}^M X(U_i,A)<\ell$ and hence $\mathsf{TEST}$ returns $\mathrm{Pass}=\mathbf{False}$. By setting the values of $\ell$ and $M$ (line~\ref{ln:settingml}) based on the Chernoff bounds, we get the following theorem:


\begin{theorem}
For any $A\subseteq V$, if $f(A)\geq (1+\kappa)\Gamma$ and $L> M$, then the probability that $\mathsf{TEST}(A,\kappa, \Gamma,\beta,L)$ returns $\mathrm{Pass}=\mathbf{True}$ is at least $1-{\beta}/{2}$; if $f(A)<\Gamma$ and $L> M$, then the probability that $\mathsf{TEST}(A,\kappa, \Gamma,\beta,L)$ returns $\mathrm{Pass}=\mathbf{False}$ is at least $1-{\beta}/{2}$.
\label{thm:betais1}
\end{theorem}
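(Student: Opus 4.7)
My plan is to reduce the analysis of $\mathsf{TEST}$ to a pair of binomial tail computations. By Eqn.~(\ref{eqn:borgsequation}), for any fixed $A\subseteq V$ the indicator $X(U_i,A)=\min\{1,|U_i\cap A|\}$ is a Bernoulli random variable with success probability $p \triangleq f(A)/n$, and the variables $X(U_1,A),X(U_2,A),\ldots$ are mutually independent because the RR-sets $U_1,U_2,\ldots$ are drawn i.i.d. Moreover, since $X(U_i,A)\in\{0,1\}$, the partial sum $\sum_{j\leq i}X(U_j,A)$ is monotone non-decreasing in $i$. Thus the event ``$\sum_{j\leq i}X(U_j,A)\geq \ell$ for some $i\leq M$'' coincides with the event ``$S_M\geq \ell$'', where $S_M\triangleq \sum_{i=1}^M X(U_i,A)\sim \mathrm{Binomial}(M,p)$. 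Under the standing hypothesis $L>M$, this means $\mathsf{TEST}$ returns $\mathrm{Pass}=\mathbf{True}$ iff $S_M\geq \ell$, so any early-termination issues disappear without a martingale or optional-stopping argument.

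With this reduction, the two claims become tail bounds on $S_M$. For the first claim, $f(A)\geq (1+\kappa)\Gamma$ gives $\mathbb{E}[S_M]=Mp\geq M(1+\kappa)\Gamma/n$, and the threshold $\ell$ set in line~\ref{ln:settingml} is chosen to lie strictly below this mean (at roughly $M(1+\kappa/2)\Gamma/n$), so $\{S_M<\ell\}$ is a multiplicative lower-tail deviation of relative size $\Theta(\kappa)$. Applying the standard Chernoff lower-tail inequality together with the value of $M$ (taken proportional to $\kappa^{-2}(n/\Gamma)\ln(2/\beta)$ in line~\ref{ln:settingml}) should yield $\mathbb{P}(S_M<\ell)\leq \beta/2$, which is exactly the desired bound on $\mathbb{P}(\mathrm{Pass}=\mathbf{False})$. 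For the second claim, $f(A)<\Gamma$ gives $\mathbb{E}[S_M]=Mp<M\Gamma/n$, so $\ell$ now lies strictly above the mean by a multiplicative factor $\Theta(\kappa)$; the Chernoff upper-tail inequality, again combined with the chosen $M$, should give $\mathbb{P}(S_M\geq \ell)\leq \beta/2$, i.e.\ $\mathbb{P}(\mathrm{Pass}=\mathbf{True})\leq \beta/2$.

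The only real work is bookkeeping: one must verify that $\ell$ and $M$ as set in line~\ref{ln:settingml} sit at the correct relative position between the two candidate means $M\Gamma/n$ and $M(1+\kappa)\Gamma/n$, and are large enough to drive both Chernoff tails below $\beta/2$ simultaneously. The cleanest way is to pick the multiplicative midpoint $\ell/M=(1+\kappa/2)\Gamma/n$ and then solve for $M$ from the tighter of the two Chernoff inequalities; once the algorithm's concrete values match this calibration, both bounds follow from the same calculation. I expect this calibration step, rather than any probabilistic subtlety, to be the only genuine obstacle in turning the sketch above into a formal proof.
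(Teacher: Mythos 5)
Your proposal is correct and follows essentially the same route the paper indicates (the paper omits a detailed proof and simply states that $\ell$ and $M$ are calibrated via Chernoff bounds): reducing the sequential test to the single event $\{S_M\geq \ell\}$ by monotonicity of the partial sums, then applying the upper- and lower-tail Chernoff inequalities around the threshold $\ell/M=\frac{2(1+\kappa)}{2+\kappa}\cdot\frac{\Gamma}{n}$, which sits at equal relative distance $\frac{\kappa}{2+\kappa}$ from both candidate means $\Gamma/n$ and $(1+\kappa)\Gamma/n$. The constant $\frac{8(3+2\kappa)(1+\kappa)}{3\kappa^2}$ in line~\ref{ln:settingml} matches exactly the requirement $\mu\geq\frac{2+2\lambda/3}{\lambda^2}\ln\frac{2}{\beta}$ with $\lambda=\frac{\kappa}{2+\kappa}$, and the additive term $\frac{2(1+\kappa)\Gamma}{(2+\kappa)n}$ absorbs the floor in the definition of $M$, so the bookkeeping you defer does close.
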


\textcolor{black}{Note that $\mathsf{TEST}$ is called by $\mathsf{TEGC}$ with $A=S$, $\Gamma= (1-\alpha)\eta$ and $\kappa=\frac{\gamma}{2(1-\alpha)}$.} So Theorem~\ref{thm:betais1} implies that $\mathsf{TEGC}$ always returns a feasible solution with high probability. When $(1-\alpha)\eta \leq f(S) < (1-\alpha+\gamma/2)\eta$, it is possible that $\mathsf{TEST}$ returns $\mathrm{Pass}=\mathbf{False}$, but this does not harm the the correctness of $\mathsf{TEGC}$ and only results in more iterations; moreover, the probability for this event to happen can be very small as $\gamma$ is usually small.

      \begin{algorithm} [tp!]
  $\ell \leftarrow \left\lceil\frac{2(1+\kappa)\Gamma}{(2+\kappa)n}+\frac{8(3+2\kappa)(1+\kappa)}{3\kappa^2}\ln\frac{2}{\beta}\right\rceil$; \label{ln:settingml}
  $M\leftarrow \left\lfloor \frac{(2+\kappa)n\ell}{2(1+\kappa)\Gamma}\right\rfloor$\\
  $\mathcal{U}\leftarrow \emptyset;~\mathrm{Pass}\leftarrow \mathbf{False};~Z_0\leftarrow 0;$\\
  \If{$L\leq M$}{\label{ln:returnLrrsetsbegin}
    Generate $L$ RR-sets and add them into $\mathcal{U}$\\
    \Return{$(\mathcal{U}, \mathrm{Pass})$ \label{ln:returnLrrsetsend}}
    }
  \For{$j=1$ \KwTo $M$}{
    Generate an RR-set $U_j$ and add it into $\mathcal{U}$\\
    $Z_j\leftarrow Z_{j-1}+ \min\{|A\cap U_{j}|,1\}$\\
    \If{$Z_j= \ell$}{
        $\mathrm{Pass}\leftarrow \mathbf{True}$;~\Return{$(\mathcal{U},\mathrm{Pass})$}
        }
  }
  \Return{$(\mathcal{U}, \mathrm{Pass})$}
    \caption{$\mathsf{TEST}(A,\kappa, \Gamma,\beta,L)$}
    \label{alg:mca}
  \end{algorithm}

\subsection{Theoretical Comparisons for the BA Algorithms}
\label{sec:compareundergc}

We compare the theoretical performance of $\mathsf{BCGC}$ and $\mathsf{TEGC}$ with the state-of-the-art algorithms as follows.
\subsubsection{Comparing with Goyal et.al.'s Results~\cite{Goyal2013}}

To the best of our knowledge, \textbf{the only prior algorithm with a provable performance bound for the MCSS problem under the GC+NV setting is the one proposed in~\cite{Goyal2013}}, which is based on the $\mathsf{GreSSC}$ algorithm. We quote the result of~\cite{Goyal2013} in Fact~\ref{fct:goyalsresult}:

\begin{fact}[\cite{Goyal2013}]
Let $S'=\mathsf{GreSSC}((1-\alpha)\eta,\beta,0,0)$. Then we have $f(S')\geq (1-\alpha)\eta$ and $C(S')\leq (1+\phi)(1+\ln\frac{1}{\alpha})C(S_{opt})$, where $\phi$ and $\beta$ satisfy:
\begin{eqnarray}
\frac{\beta}{1-(1-\beta)(1-{1}/{C(S_{opt})})}=\frac{(1/\alpha)^{\phi}-1}{(1/\alpha)^{\phi+1}-1}
\end{eqnarray}
\label{fct:goyalsresult}
\end{fact}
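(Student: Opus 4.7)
The plan is to adapt the classical Wolsey-style analysis of greedy submodular set cover to track how the one-sided multiplicative noise $(1-\beta)f(A') \le \hat f(A') \le f(A')$ (which holds deterministically once $\theta=0$ and $\gamma_2=0$) propagates through the iterative inequalities. Feasibility is almost immediate: $\mathsf{GreSSC}$ terminates only when $\hat f(A) \ge \Phi=(1-\alpha)\eta$, and since $\hat f(A)\le f(A)$ we get $f(S')\ge (1-\alpha)\eta$ for free. All the work is in the cost bound.

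For the cost bound I would use a witness argument. At the start of iteration $i$, with the current partial solution $A_{i-1}$, submodularity and monotonicity of $f$ together with $f(S_{opt})\ge \eta\ge \Phi$ give
\begin{equation*}
\sum_{v\in S_{opt}} \bigl[f(A_{i-1}\cup\{v\})-f(A_{i-1})\bigr] \;\ge\; f(S_{opt}\cup A_{i-1})-f(A_{i-1}) \;\ge\; \Phi - f(A_{i-1}),
\end{equation*}
so by an averaging over costs there exists $v^*\in S_{opt}$ whose true marginal-over-cost is at least $(\Phi-f(A_{i-1}))/C(S_{opt})$. The greedy element $u_i$ maximises the same ratio but computed with $\hat f$ and with the cap $\min\{\hat f(\cdot),\Phi\}$, so comparing $u_i$ to the witness $v^*$ and pushing the $(1-\beta)$ factor through the numerator yields a recursion of the form
\begin{equation*}
\Phi - f(A_i) \;\le\; \bigl(\Phi - f(A_{i-1})\bigr)\left(1 - (1-\beta)\cdot\frac{C(\{u_i\})}{C(S_{opt})}\right),
\end{equation*}
up to a correction arising from the cap at $\Phi$ and from the fact that the last iteration overshoots. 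Telescoping this geometric decay, taking logarithms, and summing $\sum_i C(\{u_i\})$ produces the usual $\ln(1/\alpha)$ factor times $C(S_{opt})$, plus an additive $C(S_{opt})/(1-\beta)$ from the final step. The $(1+\phi)$ blowup over the noise-free bound $1+\ln(1/\alpha)$ of Fact~\ref{fct:goyalgreedy} then emerges by solving a fixed-point balance in which the per-iteration slack $\beta/[1-(1-\beta)(1-1/C(S_{opt}))]$ is equated with the geometric tail $[(1/\alpha)^\phi-1]/[(1/\alpha)^{\phi+1}-1]$ of the cost series distributed over the effective $\ln(1/\alpha)$ iterations; this is exactly the relation displayed in the Fact.

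The main obstacle, and the reason the relation between $\phi$ and $\beta$ is implicit rather than explicit, is that $\hat f$ is not itself submodular, so the standard ``greedy dominates any witness element" chain must be executed on $f$ while the algorithm's choices are made on $\hat f$. Two factors must be tracked simultaneously through the recursion: the multiplicative gap between $\hat f$ and $f$ in \emph{both} endpoints of each marginal gain, and the activation of the cap $\min\{\hat f(A\cup\{u\}),\Phi\}$ near termination, which can strictly reduce the effective marginal used by the greedy rule. Making these two effects line up against the witness inequality forces the fixed-point form in the statement, and a careful bookkeeping of the final ``overshoot" iteration is what produces the additive $+1$ in $(1+\ln(1/\alpha))$ after being absorbed into $(1+\phi)$.
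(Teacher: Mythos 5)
First, a point of reference: the paper does not prove this statement. Fact~\ref{fct:goyalsresult} is quoted verbatim from~\cite{Goyal2013}, and the surrounding discussion in Sec.~\ref{sec:compareundergc} actually argues that the result is of limited practical value (the requirement $\gamma_2=0$ forces exact upper estimates of a \#P-hard quantity). So there is no in-paper proof to compare your attempt against, and your proposal has to stand on its own.

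On its own merits, your feasibility argument is fine: termination forces $\hat f(S')\geq (1-\alpha)\eta$ and $\hat f\leq f$ gives $f(S')\geq(1-\alpha)\eta$. The cost bound, however, has a genuine gap exactly where the content of the statement lies. You claim that ``pushing the $(1-\beta)$ factor through the numerator'' turns the witness inequality into the recursion $\Phi-f(A_i)\leq(\Phi-f(A_{i-1}))\bigl(1-(1-\beta)C(\{u_i\})/C(S_{opt})\bigr)$. That step does not follow: with one-sided multiplicative error the estimated marginal of the witness $v^*$ satisfies only $\hat f(A\cup\{v^*\})-\hat f(A)\geq(1-\beta)f(A\cup\{v^*\})-f(A)=(1-\beta)\Delta-\beta f(A)$, where $\Delta$ is the true marginal. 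The extra $-\beta f(A)$ term is not a $(1-\beta)$ rescaling of $\Delta$; it grows as the algorithm progresses and can make the right-hand side negative precisely in the late iterations that determine the $\ln(1/\alpha)$ factor. Controlling this cross-term is the whole difficulty, and it is presumably why the true relation involves the quantity $1-(1-\beta)(1-1/C(S_{opt}))$ rather than a clean $(1-\beta)/C(S_{opt})$ decay. Finally, the displayed $\phi$--$\beta$ relation is not derived anywhere in your argument: saying that a ``fixed-point balance'' between a per-iteration slack and a geometric tail yields ``exactly the relation displayed in the Fact'' is reverse-engineering the answer, not proving it. As written, the proposal establishes feasibility but not the cost bound.
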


However, Goyal \textit{et.al.}~\cite{Goyal2013} did not mention how to implement $\mathsf{GreSSC}((1-\alpha)\eta,\beta,0,0)$. A crucial problem is that they require $\hat{f}(A')\leq {f}(A')~(\forall A'\subseteq V)$ in the implementation of $\mathsf{GreSSC}$ (i.e., $\gamma_2=0$); otherwise their proof for the approximation ratio in Fact~\ref{fct:goyalsresult} does not hold. However, as computing $f(A')$ is \#P-hard and $\hat{f}(A')$ in $\mathsf{GreSSC}$ is computed by monte-carlo simulations, $\gamma_2=0$ actually implies that infinite times of monte-carlo simulations should be conducted to compute $\hat{f}(A')$ according to the Chernoff bounds~\cite{FanL2006}. Due to this reason, we think that the approximation ratio shown in Fact~\ref{fct:goyalsresult} is mainly valuable on the theoretical side and is not likely to be implemented in polynomial time.

\subsubsection{Comparing with Kuhnle et.al.'s Results~\cite{Kuhnle2017}}

Very recently, Kuhnle \textit{et.al.}~\cite{Kuhnle2017} have proposed some elegant bi-criteria approximation algorithms for the MCSS problem, but only under the UC setting. We quote their results below:

\begin{fact}[\cite{Kuhnle2017}]
Define the CEA assumption as follows: for any $A\subseteq V$ such that $f(A)<\eta$, there always exists a node $u$ such that $f(A\cup \{u\})-f(A)\geq 1$. Under the UC setting, the STAB algorithms can find a set $S'\subseteq V$ satisfying $|S'|\leq (1+2\rho \eta +\log \eta)|S_{opt}|$ and 1) or 2) listed below with the probability of at least $1-\iota n^3$. The STAB-C1 algorithm has $\mathcal{O}((n+m)|S'|\log\frac{1}{\iota n^2}/\rho^2)$ time complexity. The STAB-C2 algorithm has $\mathcal{O}((n+m)|S'|^2\log |S'|\log\frac{1}{\iota n^2}/\rho^2)$ time complexity. 
\begin{enumerate}
\item $f(S')\geq (1-\rho)\eta$ when the CEA assumption holds.
\item $f(S')\geq \eta -(1+\varepsilon)|S_{opt}|$ without the CEA assumption.
\end{enumerate}
\label{fct:kuhapproxalg}
\end{fact}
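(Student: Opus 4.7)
The plan is to treat STAB as a sampling-based greedy procedure for submodular set cover: maintain a current seed set $S$, and at each step add the node whose estimated marginal influence gain is largest, using either Monte-Carlo simulation or RR-set sampling to approximate $f$. The cardinality and feasibility guarantees then follow from combining the classical submodular-cover analysis with Chernoff-type concentration bounds on the estimators, and a union bound over all estimation events (at most $O(n^2|S'|)$ of them) yields the overall failure probability of at most $\iota n^3$.

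For the cardinality bound $|S'| \leq (1 + 2\rho\eta + \log \eta)|S_{opt}|$, I would track the remaining deficit $\Phi_i := \eta - f(S_i)$ along the greedy trajectory $S_0 \subset S_1 \subset \cdots$. Submodularity guarantees that there is always some element of $S_{opt}$ whose true marginal gain on $S_i$ is at least $\Phi_i/|S_{opt}|$; if the empirical gains are within a multiplicative $(1 \pm \rho)$ factor of the truth w.h.p., then the element picked by STAB has true gain at least $(1 - O(\rho))\Phi_i/|S_{opt}|$. Telescoping this recursion yields the baseline $(1 + \log \eta)|S_{opt}|$ factor of the classical greedy cover analysis, while the extra $2\rho\eta$ term absorbs the cumulative multiplicative slack across the $O(|S'|)$ iterations.

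For the two feasibility guarantees I would split on the CEA assumption. Under CEA, whenever $f(S) < \eta$ there exists a node giving marginal gain at least $1$, so the greedy necessarily makes progress and halts only once $f(S)$ is essentially at $\eta$; because the empirical stopping threshold is within a $(1-\rho)$ factor of truth, $f(S') \geq (1-\rho)\eta$ w.h.p. Without CEA, progress may stall below $\eta$; in that case I would bound the final deficit by writing $\eta - f(S') \leq f(S_{opt}) - f(S') \leq \sum_{u \in S_{opt}} \bigl[f(S' \cup \{u\}) - f(S')\bigr]$ via submodularity and showing that at the termination criterion each of the $|S_{opt}|$ summands is at most $1 + \varepsilon$, giving $f(S') \geq \eta - (1+\varepsilon)|S_{opt}|$.

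The hard part will be calibrating the sample size and the stopping rule simultaneously so that the same sampling budget supports both the cardinality bound and both feasibility bounds: the $1/\rho^2$ factor in the time complexity should come from Chernoff applied to marginal-gain estimators at scale $\Theta(\rho)$, and the number of estimation events per iteration is what separates STAB-C1 from STAB-C2. In particular, the $|S'|^2 \log |S'|$ factor for STAB-C2 likely reflects full re-estimation of all residual marginal gains at every iteration (with a $\log |S'|$ priority-queue or binary-search overhead), whereas STAB-C1's linear-in-$|S'|$ factor likely reflects a lazy strategy that reuses prior estimates and only re-evaluates when necessary.
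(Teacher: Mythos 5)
There is nothing in the paper to check your argument against: the statement is a \emph{quoted} result, reproduced from Kuhnle et al.~\cite{Kuhnle2017}, and the paper supplies no proof of it anywhere (it is used only as a point of theoretical comparison in Sec.~III-C). So the only way to evaluate your proposal is on its own terms, as a reconstruction of the external reference's analysis.

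On those terms, the overall shape is reasonable (greedy recursion on the deficit $\eta-f(S_i)$, a union bound over estimation events, a case split on CEA, and the submodularity bound $\eta-f(S')\leq\sum_{u\in S_{opt}}[f(S'\cup\{u\})-f(S')]$ for the non-CEA case), but it is built on the wrong algorithmic object. The STAB algorithms are not Monte-Carlo or RR-set based; as this paper itself notes in Sec.~V, they rest on the min-hash/bottom-$k$ sketches of Cohen et al.~\cite{CohenDPW2014}, whose error is additive at scale $\rho\eta$ on the influence estimate rather than multiplicative on each marginal gain. That distinction is not cosmetic: the additive $2\rho\eta$ term in the cardinality bound and the very need for the CEA assumption both come from it. Under CEA each genuine greedy step gains at least $1$ in true influence, so an additive estimation error of order $\rho\eta$ costs at most on the order of $2\rho\eta$ extra iterations and lets one convert the estimated stopping threshold into $f(S')\geq(1-\rho)\eta$; without CEA the per-step progress can vanish, which is exactly why the guarantee degrades to $\eta-(1+\varepsilon)|S_{opt}|$. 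Your account instead attributes the $2\rho\eta$ term to ``cumulative multiplicative slack'' and uses CEA only as a progress guarantee, so it explains neither the form of the cardinality bound nor the role of the assumption; likewise, your claim that each summand $f(S'\cup\{u\})-f(S')$ is at most $1+\varepsilon$ at termination is asserted rather than derived, and would have to be justified from the calibration of the sketch estimator. These are the concrete gaps you would need to close before the sketch becomes a proof.
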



Note that Fact~\ref{fct:kuhapproxalg} depends on the CEA assumption, and there is a large gap between $\eta$ and $f(S')$ (at least $|S_{opt}|$) if without the CEA assumption. In contrast, the performance ratios of $\mathsf{BCGC}$ or $\mathsf{TEGC}$ do not require any special properties of the network and we can guarantee $f(S)\geq (1-\alpha)\eta$ w.h.p. for any $\alpha>0$. Most importantly, Fact~\ref{fct:kuhapproxalg} only holds under the UC setting while our performance bounds shown in Theorems~\ref{thm:approximationratio}-\ref{thm:artegc} hold under the GC setting.

%
%

\section{Approximation Algorithms for Uniform Costs}
\label{sec:approxforuc}


In this section, we propose approximation (AP) algorithms for the MCSS problem under the UC+NV setting. 

\subsection{The AAUC Algorithm}

We first propose an algorithm called $\mathsf{AAUC}$. $\mathsf{AAUC}$ first generates a set $\mathcal{R}$ of $T$ RR-sets, and then calls $\mathsf{MCA}$ to return a set $S$ satisfying $\bar{f}(\mathcal{R},S)\geq \Lambda=(1+\tau)\eta$, where $\tau$ is a variable in $(0,1)$. 
Although $\mathsf{AAUC}$ looks similar to $\mathsf{BCGC}$, its key idea (i.e., how to set the parameters $T$ and $\Lambda$) is very different from that in $\mathsf{BCGC}$, which is explained as follows.

Recall that $\mathsf{MCA}$ is a greedy algorithm. Suppose that $\mathsf{MCA}$ sequentially selects $x_1,x_2,\cdots,x_n$ after it is called by $\mathsf{AAUC}$. Let $B_i(\forall 1\leq i\leq n)$ denote $\{x_1,\cdots,x_i\}$. Let $s=\min\{i|\bar{f}(\mathcal{R},B_i)\geq (1-\tau)\eta\}$ and $t=\min\{i|\bar{f}(\mathcal{R},B_i)\geq (1+\tau)\eta\}$.
As each node has a cost $1$ and $\bar{f}(\mathcal{R},\cdot)$ is submodular, we can use the submodular functions' properties to prove:
\begin{lemma}
For any $\tau\in (0,1)$, we have $|B_{s}|\leq  \lceil \ln\frac{n\eta}{n-\eta}\rceil |D^*|$ $+1$, where $D^*=\arg\min_{A\subseteq V\wedge \bar{f}(\mathcal{R},A)\geq (1-\tau)\eta}|A|$.
\label{lma:thearoftminus1}
\end{lemma}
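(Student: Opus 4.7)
The plan is to combine two submodular lower bounds on the greedy marginal: one relative to $D^*$ and one relative to the full ground set $V$. Write $g(\cdot)=\bar{f}(\mathcal{R},\cdot)$; it is monotone, submodular, with $g(\emptyset)=0$ and $g(V)=n$ (every RR-set is non-empty, contributing $n/|\mathcal{R}|$ once hit). First I would observe that for $i\leq s-1$ one has $g(B_{i-1})<(1-\tau)\eta<(1+\tau)\eta=\Lambda$; moreover if some $u$ satisfied $g(B_{i-1}\cup\{u\})\geq\Lambda$ greedy would pick it and already cross $(1-\tau)\eta$, contradicting $i<s$. Hence the cap in $\mathsf{MCA}$ does not bind for $i\leq s-1$, and $\mathsf{MCA}$ reduces to the plain max-marginal greedy on $g$ under unit costs.

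Next I would record, by the standard averaging argument, two marginal lower bounds valid for $i\leq s-1$. Averaging singleton-marginals over the elements of $D^*$ gives $(\mathrm{a})\;g(B_i)-g(B_{i-1})\geq (g(D^*)-g(B_{i-1}))/|D^*|\geq ((1-\tau)\eta-g(B_{i-1}))/|D^*|$, while averaging over $V\setminus B_{i-1}$ (which has cardinality $\leq n$) together with $g(V)=n$ gives $(\mathrm{b})\;g(B_i)-g(B_{i-1})\geq (n-g(B_{i-1}))/n\geq (n-(1-\tau)\eta)/n$. The proof is then a two-phase analysis. Phase 1 iterates (a) to get the telescoping decay $(1-\tau)\eta-g(B_i)\leq(1-\tau)\eta(1-1/|D^*|)^i\leq(1-\tau)\eta\,e^{-i/|D^*|}$; running this until the deficit drops below the additive scale $(n-(1-\tau)\eta)/n$ supplied by (b) takes at most $i^\star=\lceil |D^*|\ln\frac{n(1-\tau)\eta}{n-(1-\tau)\eta}\rceil$ iterations. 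Phase 2 is a single extra greedy step: by (b), $g(B_{i^\star+1})\geq g(B_{i^\star})+(n-(1-\tau)\eta)/n\geq[(1-\tau)\eta-(n-(1-\tau)\eta)/n]+(n-(1-\tau)\eta)/n=(1-\tau)\eta$, so $s\leq i^\star+1$. Since $(1-\tau)\eta<\eta$ implies $\ln\frac{n(1-\tau)\eta}{n-(1-\tau)\eta}\leq\ln\frac{n\eta}{n-\eta}$, and $\lceil |D^*|\cdot x\rceil\leq|D^*|\lceil x\rceil$ for integer $|D^*|$, we conclude $|B_s|\leq\lceil\ln\frac{n\eta}{n-\eta}\rceil|D^*|+1$.

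The main obstacle is recognizing that (a) and (b) must be paired. A pure application of (a) leaves a residual whose size depends on $\tau$ and on the RR-set discretization $n/|\mathcal{R}|$, neither of which appears in the target bound; a pure application of (b) produces only the crude additive estimate $s=O(n\eta/(n-\eta))$. The trick is to let (a) shrink the residual logarithmically down to exactly the additive scale $(n-(1-\tau)\eta)/n$ guaranteed by (b), and then invoke (b) once to cross the threshold. This interlocking is what yields the clean logarithmic factor $\ln\frac{n\eta}{n-\eta}$, with the additive ``$+1$'' accounting for the single Phase-2 step.
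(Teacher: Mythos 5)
Your proof is correct and follows essentially the same route as the paper's: the paper also pairs the decay bound relative to $D^*$ (its Eqn.~(\ref{eqn:ditui}), obtained from Lemma~\ref{lma:alwaysabigone} with $\Gamma=(1-\tau)\eta$) with a single final step using the bound relative to $V$ (Lemma~\ref{lma:alwaysabigone} with $\Gamma=n$ and $C(H(n))\leq n$), its intermediate threshold $\lambda=(1+\frac{1}{n})(1-\tau)\eta-1$ being exactly your condition that the deficit has dropped to $(n-(1-\tau)\eta)/n$. The only differences are cosmetic bookkeeping ($l$ versus your fixed $i^\star$) and your added, correct verification that the cap $\Lambda$ in $\mathsf{MCA}$ does not bind before step $s$, a detail the paper leaves implicit.
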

\noindent Clearly, when $\tau$ is sufficiently small, $s$ and $t$ must be very close. Indeed, we can prove:
\begin{lemma}
When $\tau\in (0,\frac{n-\eta}{2n\eta+\eta}]$, we have $|B_{t}|\leq |B_{s}|+1$.
\label{lma:atminus1issmall}
\end{lemma}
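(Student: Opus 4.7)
The plan is to exploit the monotone submodularity of $\bar{f}(\mathcal{R},\cdot)$ together with the identity $\bar{f}(\mathcal{R},V)=n$ (every RR-set is nonempty, so $X(R,V)=1$ for every $R\in\mathcal{R}$) in order to show that a single greedy step after $B_s$ is already enough to drive the RR-estimate past $\Lambda=(1+\tau)\eta$. I may assume $\bar{f}(\mathcal{R},B_s)<\Lambda$, since otherwise $t\leq s$ and the claim is immediate.

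First, by the standard submodular averaging inequality applied to the set $V\setminus B_s$,
\[
\max_{u\in V\setminus B_s}\bigl[\bar{f}(\mathcal{R},B_s\cup\{u\})-\bar{f}(\mathcal{R},B_s)\bigr]\;\geq\;\frac{\bar{f}(\mathcal{R},V)-\bar{f}(\mathcal{R},B_s)}{|V\setminus B_s|}\;\geq\;\frac{n-\bar{f}(\mathcal{R},B_s)}{n}.
\]
Next I would verify that, under the hypothesis $\tau\leq\frac{n-\eta}{2n\eta+\eta}$ and using $\bar{f}(\mathcal{R},B_s)\geq(1-\tau)\eta$ (which holds by the definition of $s$), the right-hand side already dominates the gap $\Lambda-\bar{f}(\mathcal{R},B_s)=(1+\tau)\eta-\bar{f}(\mathcal{R},B_s)$ that $\mathsf{MCA}$ still needs to close. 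Substituting $\bar{f}(\mathcal{R},B_s)\geq(1-\tau)\eta$ into $\frac{n-\bar{f}(\mathcal{R},B_s)}{n}\geq(1+\tau)\eta-\bar{f}(\mathcal{R},B_s)$ and clearing denominators reduces the desired inequality to $n-\eta\geq\tau\eta(2n-1)$, which is strictly implied by $\tau\leq\frac{n-\eta}{\eta(2n+1)}$.

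Consequently there exists some $u\in V\setminus B_s$ with $\bar{f}(\mathcal{R},B_s\cup\{u\})\geq\Lambda$. For any such $u$, the truncated marginal $\min\{\bar{f}(\mathcal{R},B_s\cup\{u\}),\Lambda\}-\bar{f}(\mathcal{R},B_s)$ equals $\Lambda-\bar{f}(\mathcal{R},B_s)$, while any $u'$ with $\bar{f}(\mathcal{R},B_s\cup\{u'\})<\Lambda$ has strictly smaller truncated marginal; hence the greedy rule inside $\mathsf{MCA}$ (with unit costs) selects a node that already crosses $\Lambda$, so $\bar{f}(\mathcal{R},B_{s+1})\geq(1+\tau)\eta$. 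Therefore $t\leq s+1$, i.e., $|B_{t}|\leq|B_s|+1$.

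The main obstacle I foresee is threading the submodular-averaging bound through the truncation $\min\{\cdot,\Lambda\}$ hidden in the definition of the greedy rule of $\mathsf{MCA}$: I must argue cleanly that the existence of a node whose raw marginal reaches $\Lambda$ forces the truncated $\arg\max$ to pick such a node. Once this compatibility between the averaging argument and the truncated objective is settled, the remainder of the proof is routine algebra driven by $\bar{f}(\mathcal{R},V)=n$ and the non-increasing marginal property of greedy on a monotone submodular function.
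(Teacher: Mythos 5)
Your proof is correct and follows essentially the same route as the paper's: both reduce the claim to showing that a single greedy step after $B_s$ gains at least $\frac{n-\bar{f}(\mathcal{R},B_s)}{n}$ (you via the standard submodular averaging over $V\setminus B_s$, the paper via Goyal et al.'s lemma applied to the min-cost set attaining $\bar{f}(\mathcal{R},\cdot)\geq n$, which yields the identical bound after relaxing the denominator to $n$), and then verify that the hypothesis $\tau\leq\frac{n-\eta}{2n\eta+\eta}$ makes this gain close the remaining gap to $(1+\tau)\eta$. Your explicit treatment of the truncated greedy rule in $\mathsf{MCA}$ is a detail the paper leaves implicit, but it does not alter the substance of the argument.
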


\noindent Besides, according to line~\ref{ln:setT} and Lemmas~\ref{lma:utltbound}-\ref{lma:boundingoursolution}, we know that the set $\mathcal{R}$ generated by $\mathsf{AAUC}$ satisfies
\begin{eqnarray}
&&\mathbb{P}\{\exists A\in \mathcal{Q}[\eta]: \bar{f}(\mathcal{R},A)\geq (1+\tau)\eta\}\leq {\delta}/{2} \label{eqn:aaucfeasible}\\
&&\mathbb{P}\{\bar{f}(\mathcal{R},S_{opt})< (1-\tau)\eta\}\leq {\delta}/{2} \label{eqn:aaucoptlarge}
\end{eqnarray}
\textcolor{black}{for any $\tau\in (0,1)$ and $\mu\geq D(\eta)$.} This implies that $S$ is a feasible solution (i.e., $S\notin \mathcal{Q}[\eta]$) and $|D^*|\leq |S_{opt}|$ with high probability. Moreover, according to Lemmas~\ref{lma:thearoftminus1}-\ref{lma:atminus1issmall}, we have $|S|=|B_{t}|\leq |B_{s}|+1$ and $|B_s|\leq \lceil \ln\frac{n\eta}{n-\eta}\rceil |D^*|+1$. Combining all these results gives us:

    \begin{algorithm} [tp!]
    \KwIn{\textcolor{black}{$ G=(V,E), \eta,\delta, \tau, \mu$}}
    \KwOut{A set $S\subseteq V$ satisfying $f(S)\geq \eta$ w.h.p.}
    %
  $W_1\leftarrow \langle \eta, \tau,\frac{\delta}{2\mu} \rangle;~W_2\leftarrow \langle \eta, \tau,\frac{\delta}{2} \rangle$;~$T\leftarrow \mathsf{SetT}(W_1,W_2)$ \label{ln:setT}\\
  Generate a set $\mathcal{R}$ of RR-sets satisfying $|\mathcal{R}|=T$ \\
  $S\leftarrow \mathsf{MCA}(\mathcal{R},(1+{\tau})\eta)$;~\Return{$S$} \label{ln:return1plustau}
    \caption{Approximation Algorithm for Uniform Costs ($\mathsf{AAUC}$)}
    \label{alg:aaucbasic}
  \end{algorithm}

\begin{theorem}
When $\tau\in (0,\frac{n-\eta}{2n\eta+\eta}]$ and $\mu\geq D(\eta)$, $\mathsf{AAUC}$ returns a set $S\subseteq V$ such that $f(S)\geq \eta$ and $|S|\leq \lceil \ln\frac{n\eta}{n-\eta}\rceil |{S}_{opt}|+2$ with the probability of at least $1-\delta$.
\label{thm:arofaauc}
\end{theorem}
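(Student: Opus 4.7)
The plan is to combine the two probabilistic guarantees on the random RR-set sample $\mathcal{R}$ (namely Eqns.~(\ref{eqn:aaucfeasible})--(\ref{eqn:aaucoptlarge})) with the two deterministic structural lemmas on the greedy trajectory of $\mathsf{MCA}$ (Lemmas~\ref{lma:thearoftminus1}--\ref{lma:atminus1issmall}), and then conclude via a union bound.

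First, I would verify the two probabilistic statements. $\mathsf{AAUC}$ sets $W_1=\langle \eta, \tau, \delta/(2\mu)\rangle$, $W_2=\langle \eta, \tau, \delta/2\rangle$, and $T=\mathsf{SetT}(W_1,W_2)$, so $|\mathcal{R}|\geq \mathrm{ut}(\eta,\tau,\delta/(2\mu))$ and $|\mathcal{R}|\geq \mathrm{lt}(\eta,\tau,\delta/2)$. Since $\mu\geq D(\eta)$, Lemma~\ref{lma:boundingoursolution} (with $\Gamma=\eta$, $\beta=\tau$, $\theta=\delta/2$) yields Eqn.~(\ref{eqn:aaucfeasible}). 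Since $f(S_{opt})\geq \eta$ by optimality, Lemma~\ref{lma:utltbound} yields Eqn.~(\ref{eqn:aaucoptlarge}). Denoting their complements by $\mathcal{E}_1$ and $\mathcal{E}_2$, a union bound gives $\mathbb{P}(\mathcal{E}_1\cap \mathcal{E}_2)\geq 1-\delta$.

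Next, I would argue deterministically on the event $\mathcal{E}_1\cap \mathcal{E}_2$. For feasibility, $\mathsf{MCA}$ is invoked in line~\ref{ln:return1plustau} with threshold $(1+\tau)\eta$, so the returned $S$ satisfies $\bar{f}(\mathcal{R},S)\geq (1+\tau)\eta$; under $\mathcal{E}_1$ this forces $S\notin \mathcal{Q}[\eta]$, i.e., $f(S)\geq \eta$. For the cardinality bound, the key observation is that under $\mathcal{E}_2$ we have $\bar{f}(\mathcal{R},S_{opt})\geq (1-\tau)\eta$, so $S_{opt}$ is itself a candidate in the optimization problem defining $D^*$ in Lemma~\ref{lma:thearoftminus1}, giving $|D^*|\leq |S_{opt}|$.

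Finally, I would chain the two structural lemmas along the greedy trajectory $x_1,x_2,\dots$ produced by $\mathsf{MCA}$. Since $\mathsf{MCA}$ stops exactly when $\bar{f}(\mathcal{R},\cdot)$ first reaches $(1+\tau)\eta$, we have $S=B_t$ by definition of $t$. The hypothesis $\tau\in(0,(n-\eta)/(2n\eta+\eta)]$ matches exactly the premise of Lemma~\ref{lma:atminus1issmall}, yielding $|S|=|B_t|\leq |B_s|+1$. Lemma~\ref{lma:thearoftminus1} then gives $|B_s|\leq \lceil \ln\frac{n\eta}{n-\eta}\rceil |D^*|+1$. Combining these with $|D^*|\leq |S_{opt}|$ yields
\[
|S|\leq \lceil \ln\tfrac{n\eta}{n-\eta}\rceil |S_{opt}|+2,
\]
as desired. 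I expect the only obstacle to be bookkeeping: verifying that the thresholds $(1\pm\tau)\eta$ used in $\mathsf{MCA}$ and in $\mathsf{SetT}$ align cleanly with the premises of the two supporting lemmas, and that $S_{opt}$ is indeed an admissible competitor for $D^*$. All the genuinely delicate arguments were already absorbed into Lemmas~\ref{lma:thearoftminus1}--\ref{lma:atminus1issmall}, so the remainder of the proof is essentially a short assembly step.
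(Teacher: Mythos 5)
Your proposal is correct and follows essentially the same route as the paper's proof: feasibility from Eqn.~(\ref{eqn:aaucfeasible}) via the fact that $\mathsf{MCA}$ returns $S$ with $\bar{f}(\mathcal{R},S)\geq(1+\tau)\eta$, the cardinality bound from $|D^*|\leq|S_{opt}|$ under Eqn.~(\ref{eqn:aaucoptlarge}) chained with Lemmas~\ref{lma:thearoftminus1}--\ref{lma:atminus1issmall}, and a final union bound. Your write-up is if anything slightly more explicit than the paper's in verifying how the parameters in $\mathsf{SetT}$ instantiate Lemmas~\ref{lma:utltbound}--\ref{lma:boundingoursolution}.
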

\begin{proof} 
Note that $\bar{f}(\mathcal{R},S)\geq (1+\tau)\eta$ according to line~\ref{ln:return1plustau} of $\mathsf{AAUC}$. Therefore, when $\mu\geq D(\eta)$, the probability that $S$ is an infeasible solution in $\mathcal{Q}[\eta]$ must be no more than $\delta/2$ according to line~\ref{ln:setT} of $\mathsf{AAUC}$ and eqn.~(\ref{eqn:aaucfeasible}).

Let $D^*=\arg\min_{A\subseteq V\wedge \bar{f}(\mathcal{R},A)\geq (1-\tau)\eta}|A|$. When $\tau\in (0,\frac{n-\eta}{2n\eta+\eta}]$, we must have $|S|=|B_t|\leq \lceil \ln\frac{n\eta}{n-\eta}\rceil |D^*|+2$ according to Lemmas~\ref{lma:thearoftminus1}-\ref{lma:atminus1issmall}. Therefore, we have 
\begin{eqnarray}
&&\mathbb{P}\{|S|> \lceil \ln\frac{n\eta}{n-\eta}\rceil |{S}_{opt}|+2\}\leq \mathbb{P}\{|D^*|>|{S}_{opt}|\}\nonumber\\
&\leq& \mathbb{P}\{\bar{f}(\mathcal{R},S_{opt})< (1-\tau)\eta\}\leq {\delta}/{2} \label{eqn:lessthandeltadevide2}
\end{eqnarray}
where (\ref{eqn:lessthandeltadevide2}) is due to the definition of $D^*$ and equation~(\ref{eqn:aaucoptlarge}). The theorem then follows by using the union bound.
\end{proof}

By very similar reasoning with that in Theorem~\ref{thm:timecomplexityBCGC}, we can also prove the time complexity of $\mathsf{AAUC}$ as follows:

\begin{theorem}
Let $q=\max\{f(v)|v\in V\}$. $\mathsf{AAUC}$ can achieve the performance bounds shown Theorem~\ref{thm:arofaauc} under the expected time complexity of $\mathcal{O}(\frac{m q}{\varrho^2}\ln\frac{n}{\delta\eta})$ where $\varrho=\frac{n-\eta}{2n\eta+\eta}$.
\label{thm:timeofaauc}
\end{theorem}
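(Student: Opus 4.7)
The plan is to mirror the template already used for Theorem~\ref{thm:timecomplexityBCGC}: the expected runtime of $\mathsf{AAUC}$ splits into the cost of generating the $T$ RR-sets in line~2 and the cost of executing $\mathsf{MCA}$ in line~\ref{ln:return1plustau}. With the same heap-based greedy implementation plus an inverted index from RR-sets to nodes that underpins the analysis of $\mathsf{BCGC}$, the cost of $\mathsf{MCA}$ is proportional to $\sum_{R\in\mathcal{R}}|R|$, so it is asymptotically dominated by the generation phase. Consequently, it suffices to bound $T\cdot\mathrm{EPT}$, where $\mathrm{EPT}$ denotes the expected time to build a single RR-set.

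For the per-sample cost, I would invoke the same standard bound that is implicit in the proof of Theorem~\ref{thm:timecomplexityBCGC}: under the triggering model, $\mathrm{EPT}=\mathcal{O}(mq/n)$, since an RR-set rooted at a uniformly random node explores an expected number of edges proportional to $m$ times the probability that a uniformly random root reaches a given edge, and aggregating this probability over roots yields a factor of $q/n$.

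Next I would bound $T=\lceil\max\{\mathrm{ut}(W_1),\mathrm{lt}(W_2)\}\rceil$ under the choices $\tau=\varrho$ and $\mu=D(\eta)$ prescribed by Theorem~\ref{thm:arofaauc}. Using the second branch of the $\min$ in (\ref{eqn:defiofut}) together with the estimate $\ln D(\eta)=\lfloor\eta\rfloor\ln(en/\lfloor\eta\rfloor)=\mathcal{O}(\eta\ln(n/\eta))$, I get
\begin{equation*}
\mathrm{ut}(W_1)\;\leq\;\frac{2n(\varrho+3)}{3\varrho^2\eta}\ln\frac{2D(\eta)}{\delta}\;=\;\mathcal{O}\!\left(\frac{n}{\varrho^2\eta}\Big(\ln\tfrac{1}{\delta}+\eta\ln\tfrac{n}{\eta}\Big)\right)\;=\;\mathcal{O}\!\left(\frac{n}{\varrho^2}\ln\frac{n}{\delta\eta}\right),
\end{equation*}
while $\mathrm{lt}(W_2)=(2n/(\varrho^2\eta))\ln(2/\delta)$ is absorbed by the same bound. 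Multiplying $T=\mathcal{O}((n/\varrho^2)\ln(n/(\delta\eta)))$ by $\mathrm{EPT}=\mathcal{O}(mq/n)$ then yields the claimed expected runtime $\mathcal{O}((mq/\varrho^2)\ln(n/(\delta\eta)))$.

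The main obstacle is purely bookkeeping rather than conceptual: one has to pick the second (rather than the first) branch of the $\min$ in $\mathrm{ut}$ so that the $1/\eta^{2}$ factor is reduced to $1/\eta$ and the $\eta$ hidden inside $\ln D(\eta)$ cancels, leaving only the single logarithmic factor $\ln(n/(\delta\eta))$; choosing $\mu=D(\eta)$ as the tightest value permitted by Theorem~\ref{thm:arofaauc} is what keeps that logarithmic term from blowing up. Everything else (the MCA runtime being dominated by generation, and the EPT bound) is inherited verbatim from the analysis of $\mathsf{BCGC}$.
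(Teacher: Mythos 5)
Your proposal is correct and follows the same route the paper intends: the paper proves this theorem ``by very similar reasoning with that in Theorem~\ref{thm:timecomplexityBCGC}'', i.e., by bounding the number of generated RR-sets $T$ (via the second branch of the $\min$ in $\mathrm{ut}$ together with $\ln D(\eta)=\mathcal{O}(\eta\ln(n/\eta))$, under the tight choices $\tau=\varrho$ and $\mu=\Theta(D(\eta))$) and multiplying by the standard per-RR-set expected cost $\mathcal{O}(mq/n)$, with the greedy $\mathsf{MCA}$ phase dominated by RR-set generation. Your bookkeeping for $\mathrm{ut}(W_1)$ and $\mathrm{lt}(W_2)$ matches the claimed $\mathcal{O}(\frac{mq}{\varrho^2}\ln\frac{n}{\delta\eta})$ bound, so there is nothing to add.
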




\subsection{An Adaptive Trial-and-Error Algorithm}

It can be seen from Theorem~\ref{thm:timeofaauc} that the running time of $\mathsf{AAUC}$ is inversely proportional to $\varrho$, which can be a small number. To address this problem, we propose an adaptive trial-and-error algorithm called $\mathsf{ATEUC}$, shown in Algorithm~\ref{alg:ateuc}. $\mathsf{ATEUC}$ uses a dynamic parameter $\alpha$ for generating RR-sets, and adaptively changes the value of $\alpha$ until a satisfying approximate solution is found.



More specifically, $\mathsf{ATEUC}$ first determines a threshold $T$ using $\varrho$ (lines~\ref{ln:ateucsett1}-\ref{ln:ateucsett2}), then $\mathsf{ATEUC}$ calls a function $\mathsf{Shrink}$ with any $\alpha$ that is larger than $\varrho$. Note that the performance bound of $\mathsf{ATEUC}$ (see Theorem~\ref{thm:arofateuc}) does not depend on the value of $\alpha$, and setting $\alpha>\varrho$ is only for reducing the number of generated RR-sets.  
In each iteration, $\mathsf{Shrink}$ first generates a set $\mathcal{R}$ of RR-sets in a similar way with that in $\mathsf{TEGC}$ (line~\ref{ln:taegenrrset2}) to ensure $\mathbb{P}\{\bar{f}(\mathcal{R},S_{opt})< (1-\alpha)\eta\}\leq {\theta}/{3}$, where $\theta$ is set by a similar way with that in $\mathsf{TEGC}$.
 After that, $\mathsf{Shrink}$ calls $\mathsf{MCA}$ to find $S_1\subseteq S_2\subseteq V$ such that $\bar{f}(\mathcal{R}, S_1)\geq (1-\alpha)\eta$ and $\bar{f}(\mathcal{R}, S_2)\geq (1+\alpha)\eta$.  If $|S_2|> 2|S_1|$, $\mathsf{Shrink}$ decreases $\alpha$ and $\theta$ and enters the next iteration (lines~\ref{ln:decreasethetaandalpha}). If $|S_2|\leq 2|S_1|$, it implies that $\alpha$ is small enough, so $\mathsf{Shrink}$ calls $\mathsf{TEST}$ to judge whether $f(S_2)\geq \eta$ (line~\ref{ln:ateuccalltest}). If $\mathsf{TEST}$ returns $\mathrm{Pass}=\mathbf{True}$, then $\mathsf{Shrink}$ returns $S_2$ as the solution (line~\ref{ln:returns2withrlessT}), otherwise it decreases the value of $\theta$ and enters the next iteration (line~\ref{ln:decreasetheta}).

    \begin{algorithm} [tp!]
    \KwIn{\textcolor{black}{$G=(V,E), \delta, \eta, \alpha$}}
    \KwOut{A set $S\subseteq V$ satisfying $f(S)\geq \eta$ w.h.p.
    }
  $\varrho\leftarrow \frac{|V|-\eta}{2|V|\eta+\eta}$;~$W_1\leftarrow \langle \eta, \varrho,\frac{\delta}{6D(\eta)} \rangle$; \label{ln:ateucsett1}\\
  $W_2\leftarrow \langle \eta, \varrho,\frac{\delta}{6} \rangle$;~$T\leftarrow \mathsf{SetT}(W_1,W_2)$; \label{ln:ateucsett2}\\
  $(\mathcal{R},S)\leftarrow \mathsf{Shrink}(T,\eta,\alpha,\delta,\varrho)$;~\Return{$S$}
  \hrule
  \textbf{Function} $\mathsf{Shrink}(T, \eta,\alpha,\delta,\varrho)$\\
  $\theta\leftarrow {\delta}/{3};~\mathcal{R}\leftarrow \emptyset$\\
  \While{$|\mathcal{R}|\leq T$}{
    Generate some RR-sets and add them into $\mathcal{R}$ until $|\mathcal{R}|=\min\{T,\lceil \mathrm{lt}(\langle \eta,\alpha,\frac{\theta}{3}\rangle)\rceil\}$ \label{ln:taegenrrset2} \\
    \lIf{$|\mathcal{R}|=T$}{$\alpha\leftarrow \varrho$ \label{ln:settingtau}}
    $S_1\leftarrow \mathsf{MCA}(\mathcal{R}, (1-\alpha)\eta)$;~$S_2\leftarrow \mathsf{MCA}(\mathcal{R}, (1+\alpha)\eta)$\\
    \If{$S_2\neq \emptyset$}{
        \If{$|S_2|\leq 2|S_1|$ \label{ln:s2lessthan2s1}}{
            \lIf{$|\mathcal{R}|=T$}{\Return{$(\mathcal{R},S_2)$}\label{ln:returns2withrequalT}}
            $(\mathcal{U},\mathrm{Pass})\leftarrow \mathsf{TEST}(S_2,\frac{1}{2}\alpha, \eta,{\frac{2\theta}{3}},T-|\mathcal{R}|)$ \label{ln:ateuccalltest}\\
            \lIf{$\mathrm{Pass}= \mathbf{True}$}{\Return{$(\mathcal{R},S_2)$} \label{ln:returns2withrlessT}}
            $\mathcal{R}\leftarrow \mathcal{R}\cup \mathcal{U};~\theta\leftarrow {\theta}/{2}$;~\textbf{continue};  \label{ln:decreasetheta}
        }
    }
    $\alpha\leftarrow {\alpha}/{\sqrt{2}};~\theta\leftarrow {\theta}/{2}$ \label{ln:decreasethetaandalpha}\\
  }
  \Return{$(\mathcal{R},\emptyset)$}
    \caption{The Adaptive Trial-and-Error Algorithm for Uniform Costs ($\mathsf{ATEUC}$)}
    \label{alg:ateuc}
  \end{algorithm}

In the case that the number of RR-sets generated in $\mathsf{Shrink}$ reaches $T$, $\mathsf{Shrink}$ returns $S_2$ by setting $\alpha=\varrho$ (lines~\ref{ln:settingtau},\ref{ln:returns2withrequalT}), so we must have $|S_2|\leq |S_1|+1$ and hence $|S_2|\leq \lceil \ln\frac{n\eta}{n-\eta}\rceil |{S}_{opt}|+2$ according to similar reasoning with that of Lemma~\ref{lma:atminus1issmall} and Theorem~\ref{thm:arofaauc}. However, as $\mathsf{Shrink}$ can return $S_2$ when $|S_2|\leq 2|S_1|$, the value of $\alpha$ is probably much larger than $\varrho$ when $\mathsf{Shrink}$ terminates, so less number of RR-sets can be generated. Moreover, by the setting of $\mathcal{R}$ in line~\ref{ln:taegenrrset2}, we always have $\bar{f}(\mathcal{R},S_{opt})\geq (1-\alpha)\eta$ w.h.p. and hence $|S_1|\leq \lceil \ln\frac{n\eta}{n-\eta}\rceil |S_{opt}|+1$ w.h.p. (using similar reasoning with that in Lemma~\ref{lma:thearoftminus1}). Based on these discussions, we can prove the approximation ratio of $\mathsf{ATEUC}$ as follows:



\begin{theorem}
\textcolor{black}{
With the probability of at least $1-\delta$, $\mathsf{ATEUC}$ returns a set $S\subseteq V$ satisfying $f(S)\geq \eta$ and $|S|\leq 2\lceil \ln\frac{n\eta}{n-\eta}\rceil |{S}_{opt}|+2$. }
\label{thm:arofateuc}
\end{theorem}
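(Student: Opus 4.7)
The plan is to decompose the analysis of $\mathsf{ATEUC}$ according to the two ways its subroutine $\mathsf{Shrink}$ can terminate: path~(a), exiting via $\mathsf{TEST}$ returning $\mathbf{True}$ while $|\mathcal{R}|<T$ (line~\ref{ln:returns2withrlessT}), and path~(b), exiting at line~\ref{ln:returns2withrequalT} after $|\mathcal{R}|=T$ has forced $\alpha\leftarrow\varrho$. In both paths the returned set $S$ equals the $S_2$ produced by $\mathsf{MCA}(\mathcal{R},(1+\alpha)\eta)$ at the terminating iteration, so $\bar{f}(\mathcal{R},S)\geq (1+\alpha)\eta$ holds deterministically and both claims can be proved by combining the concentration tools (Lemmas~\ref{lma:utltbound}--\ref{lma:boundingoursolution} and Theorem~\ref{thm:betais1}) with the structural greedy lemmas (Lemmas~\ref{lma:thearoftminus1}--\ref{lma:atminus1issmall}) applied to the deterministic $\mathsf{MCA}$.

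First I would establish the size bound conditional on the high-probability event $\bar{f}(\mathcal{R},S_{opt})\geq (1-\alpha)\eta$ at the terminating iteration. Under that event, $|D^*|\leq |S_{opt}|$ where $D^*=\arg\min\{|A|:\bar{f}(\mathcal{R},A)\geq (1-\alpha)\eta\}$, and Lemma~\ref{lma:thearoftminus1} gives $|S_1|\leq \lceil\ln\frac{n\eta}{n-\eta}\rceil|S_{opt}|+1$. In path~(a) the exit condition $|S_2|\leq 2|S_1|$ is verified explicitly in line~\ref{ln:s2lessthan2s1}, so $|S_2|\leq 2\lceil\ln\frac{n\eta}{n-\eta}\rceil|S_{opt}|+2$. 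In path~(b), the reset $\alpha\leftarrow\varrho$ places $\alpha$ in the range required by Lemma~\ref{lma:atminus1issmall}, yielding the tighter bound $|S_2|\leq |S_1|+1\leq \lceil\ln\frac{n\eta}{n-\eta}\rceil|S_{opt}|+2$. In either case the claimed bound holds.

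Next I would translate the sample-size choices into the feasibility guarantee and the error budget. In path~(b), $W_1=\langle\eta,\varrho,\delta/(6D(\eta))\rangle$ and $W_2=\langle\eta,\varrho,\delta/6\rangle$ together with Lemmas~\ref{lma:utltbound}--\ref{lma:boundingoursolution} imply (i) no $A\in\mathcal{Q}[\eta]$ satisfies $\bar{f}(\mathcal{R},A)\geq (1+\varrho)\eta$ except with probability $\delta/6$, so $f(S_2)\geq\eta$, and (ii) $\bar{f}(\mathcal{R},S_{opt})\geq (1-\varrho)\eta$ except with probability $\delta/6$, so $|D^*|\leq |S_{opt}|$; the total budget for path~(b) is $\delta/3$. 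Along each iteration of path~(a), line~\ref{ln:taegenrrset2} ensures $|\mathcal{R}|\geq \mathrm{lt}(\langle\eta,\alpha,\theta/3\rangle)$, so Lemma~\ref{lma:utltbound} gives $\bar{f}(\mathcal{R},S_{opt})\geq (1-\alpha)\eta$ except with probability $\theta/3$; and the call $\mathsf{TEST}(S_2,\alpha/2,\eta,2\theta/3,\cdot)$ in line~\ref{ln:ateuccalltest} has, by Theorem~\ref{thm:betais1}, at most a $\theta/3$ probability of returning $\mathbf{True}$ when $f(S_2)<\eta$. Hence the per-iteration failure probability along path~(a) is at most $2\theta/3$.

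Finally, since $\theta$ is initialised to $\delta/3$ and halved each iteration (line~\ref{ln:decreasetheta} and line~\ref{ln:decreasethetaandalpha}), summing the per-iteration failure over all rounds of path~(a) yields at most $\sum_{i\geq 0}\tfrac{2}{3}\cdot\tfrac{\delta}{3\cdot 2^{i}}=\tfrac{4\delta}{9}$, and combining with the $\delta/3$ budget for path~(b) gives total failure probability at most $\delta$ by a union bound. The main obstacle I expect is the bookkeeping for the event $\bar{f}(\mathcal{R},S_{opt})\geq (1-\alpha)\eta$, which must hold in whichever (random) iteration the algorithm happens to terminate: both the threshold $\mathrm{lt}(\langle\eta,\alpha,\theta/3\rangle)$ and the parameter $\alpha$ drift across iterations, and the geometric halving of $\theta$ is the key device that keeps the summed failure probability finite while still permitting $\alpha$ to shrink adaptively.
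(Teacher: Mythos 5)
Your proposal is correct and follows essentially the same route as the paper, which proves Theorem~\ref{thm:arofateuc} by exactly this two-path case split (early exit via $\mathsf{TEST}$ with the explicit $|S_2|\leq 2|S_1|$ check, versus the $|\mathcal{R}|=T$ exit where $\alpha=\varrho$ lets Lemma~\ref{lma:atminus1issmall} give $|S_2|\leq|S_1|+1$), combined with Lemma~\ref{lma:thearoftminus1} for $|S_1|$ and the geometric halving of $\theta$ to union-bound the failure probability by $\delta$. Your accounting of the per-iteration budget ($2\theta/3$ per round plus $\delta/3$ for the terminal case) is in fact slightly tighter than the paper's.
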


\begin{figure*}[htb!]
	  \centering
	   \includegraphics[width=0.65\textwidth]{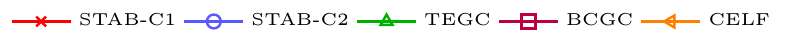}
	   \label{fig:lineTitle_C1_C2_TEGC_BCGC_CELF_alpha02}
	
	  \begin{minipage}[htb!]{1.035\textwidth}
	  	
	    \subfigure[wiki-Vote (RT)]{
	      \includegraphics[width=0.18\textwidth]{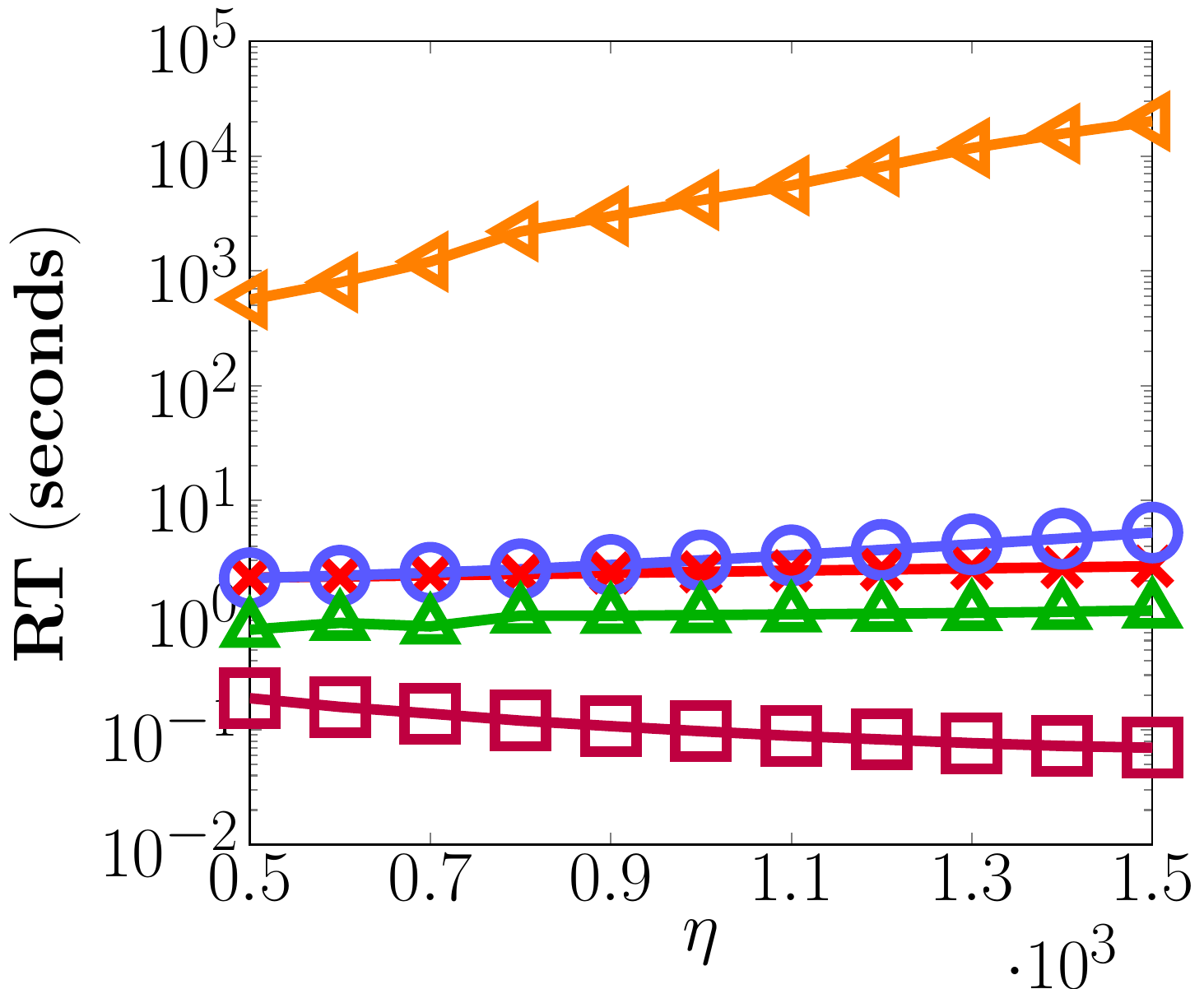}
	      \label{fig:wikiVote_generalCost_RunningTime_alpha02}
	    }
	    \subfigure[Pokec (RT)]{
	      \includegraphics[width=0.18\textwidth]{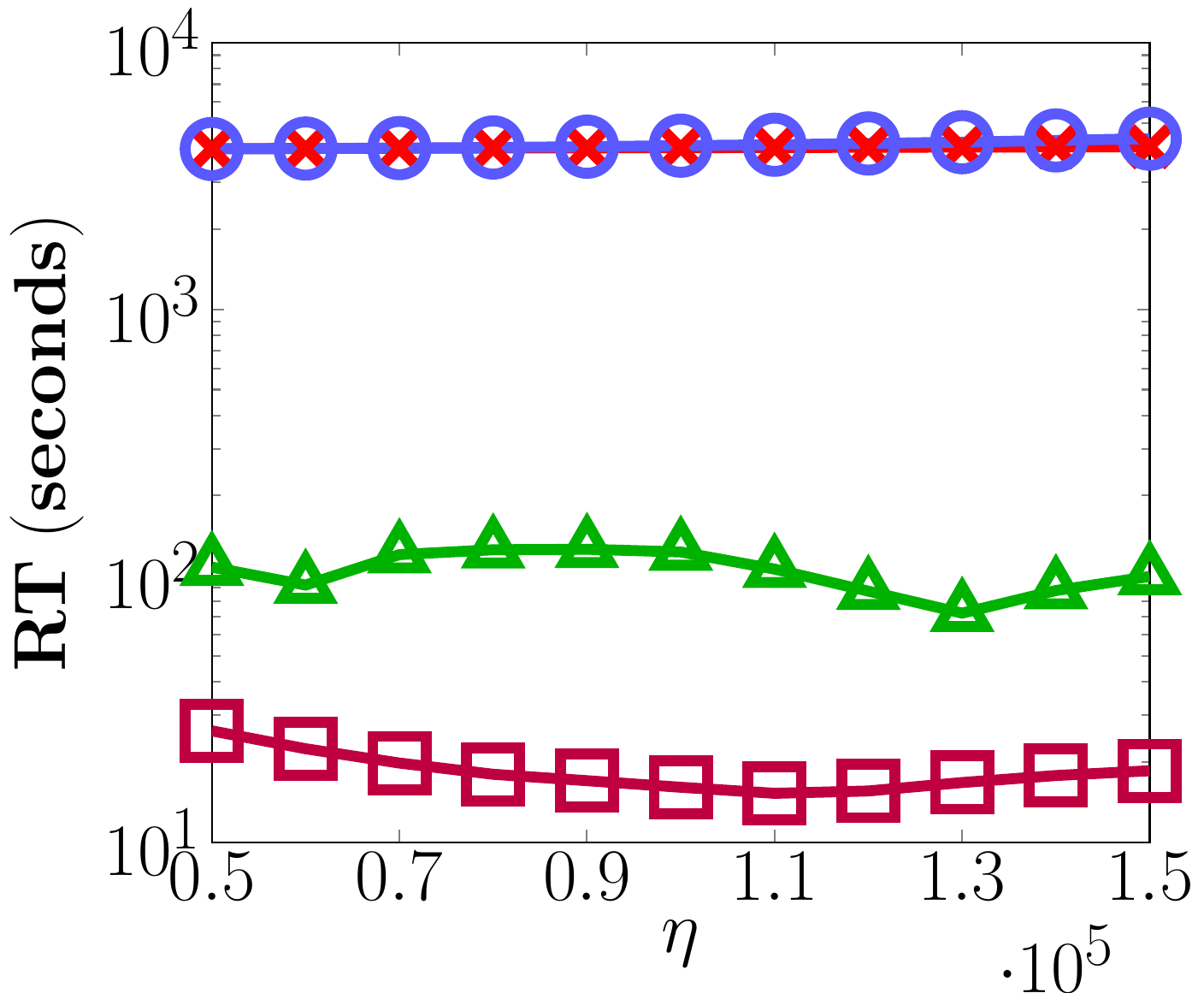}
	      \label{fig:pokec_generalCost_RunningTime_alpha02}
	    }
	     \subfigure[LiveJournal (RT)]{
	      \includegraphics[width=0.18\textwidth]{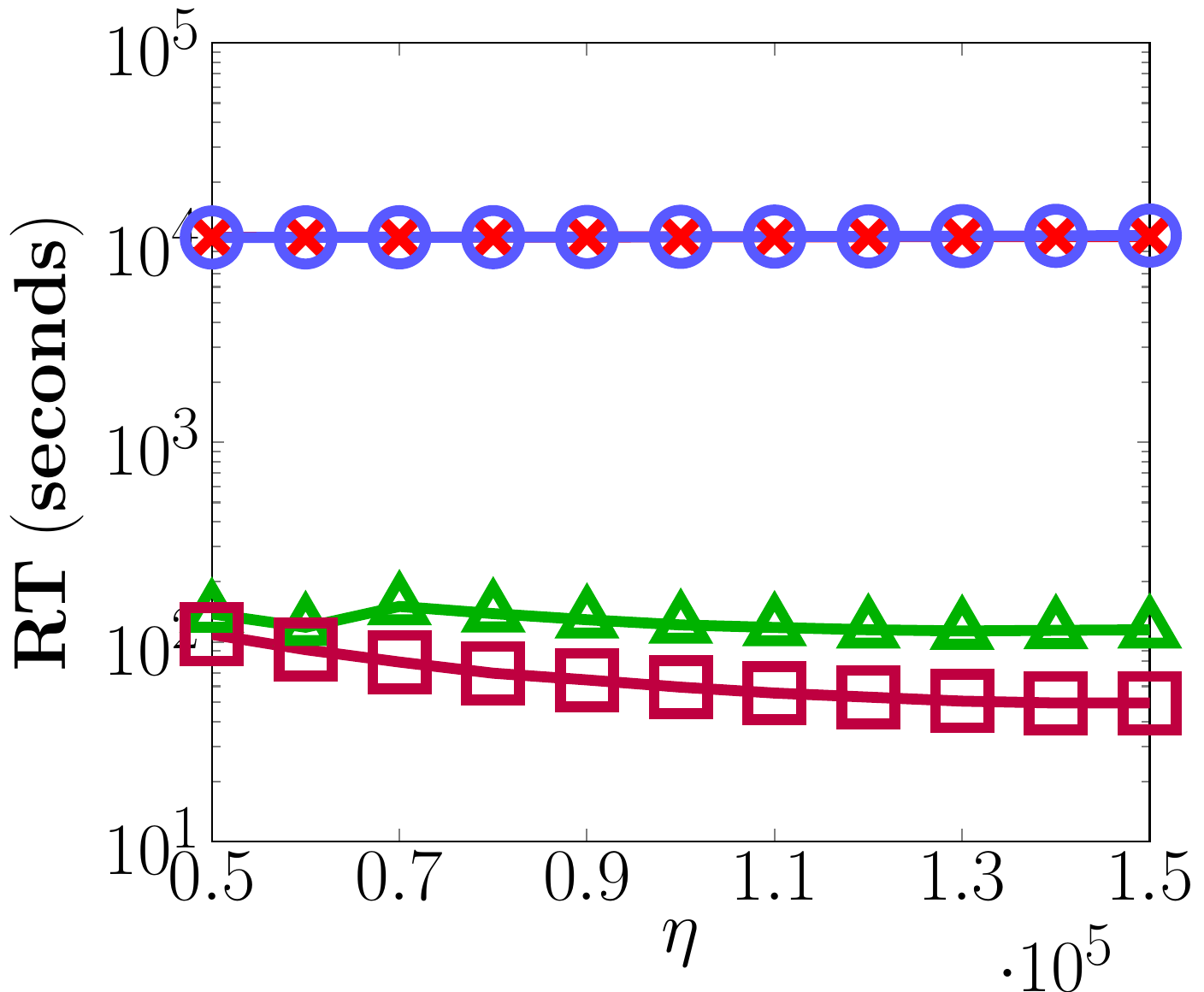}
	      \label{fig:LiveJournal_generalCost_RunningTime_alpha02}
	    }
	    \subfigure[Orkut (RT)]{
	      \includegraphics[width=0.18\textwidth]{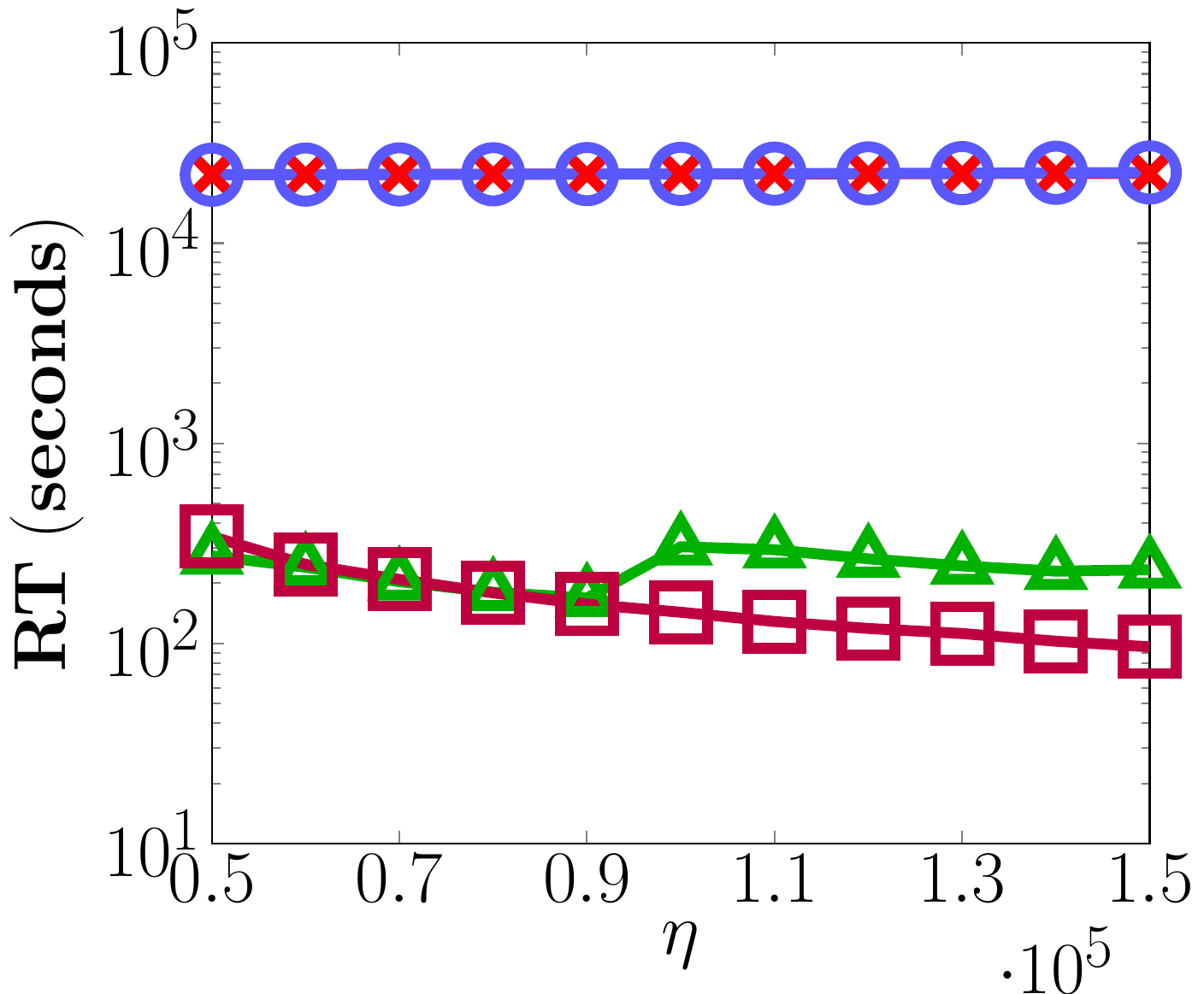}
	      \label{fig:orkut_generalCost_RunningTime_alpha02}
	    }
	    \subfigure[Twitter (RT)]{
	      \includegraphics[width=0.18\textwidth]{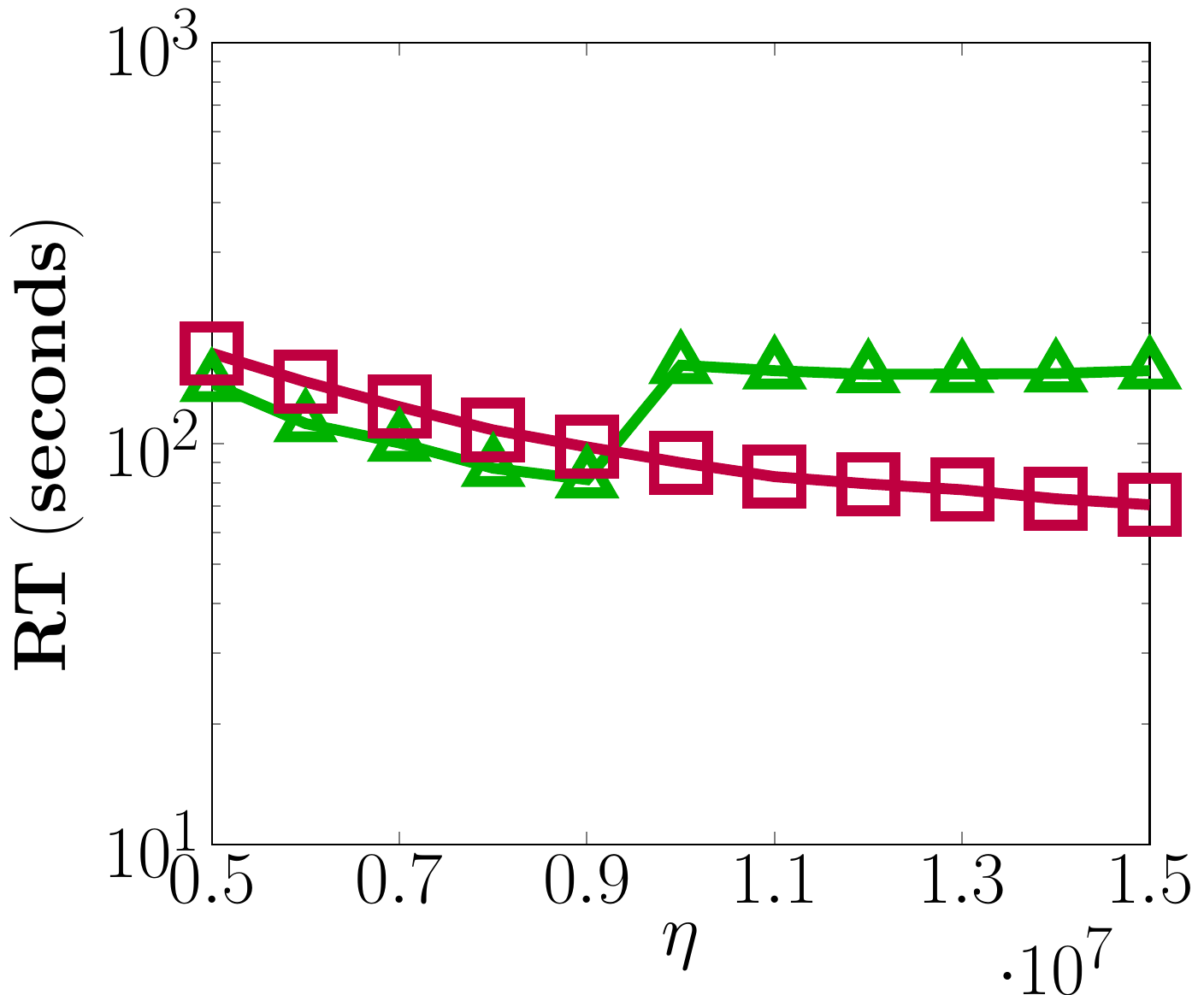}
	      \label{fig:twitter_generalCost_RunningTime_alpha02}
	    } \\
	
	    \subfigure[wiki-Vote (IS)]{
	      \includegraphics[width=0.18\textwidth]{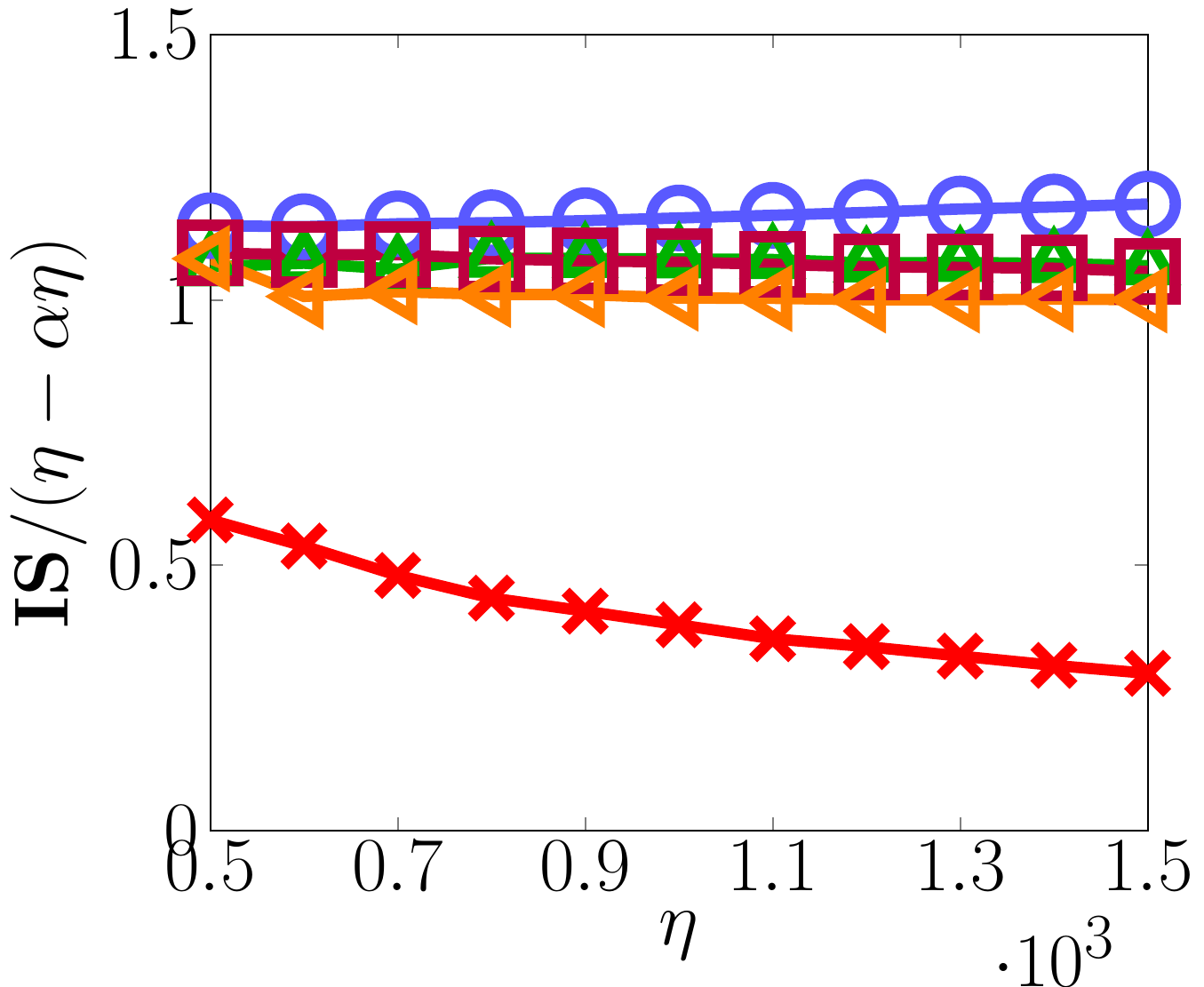}
	      \label{fig:wikiVote_generalCost_Influence_alpha02}
	    }
	    \subfigure[Pokec (IS)]{
	      \includegraphics[width=0.18\textwidth]{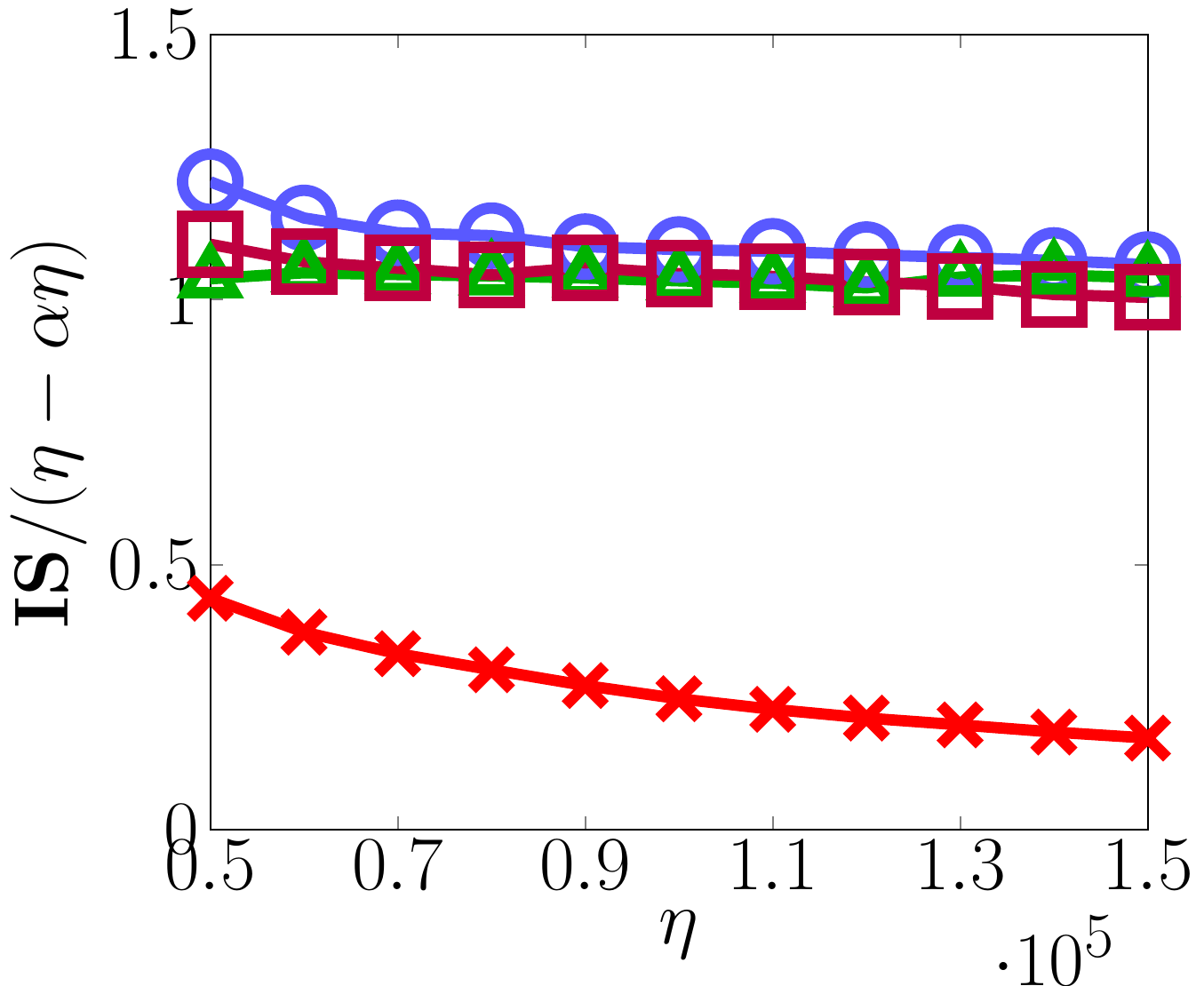}
	      \label{fig:pokec_generalCost_Influence_alpha02}
	    }
	    \subfigure[LiveJournal (IS)]{
	      \includegraphics[width=0.18\textwidth]{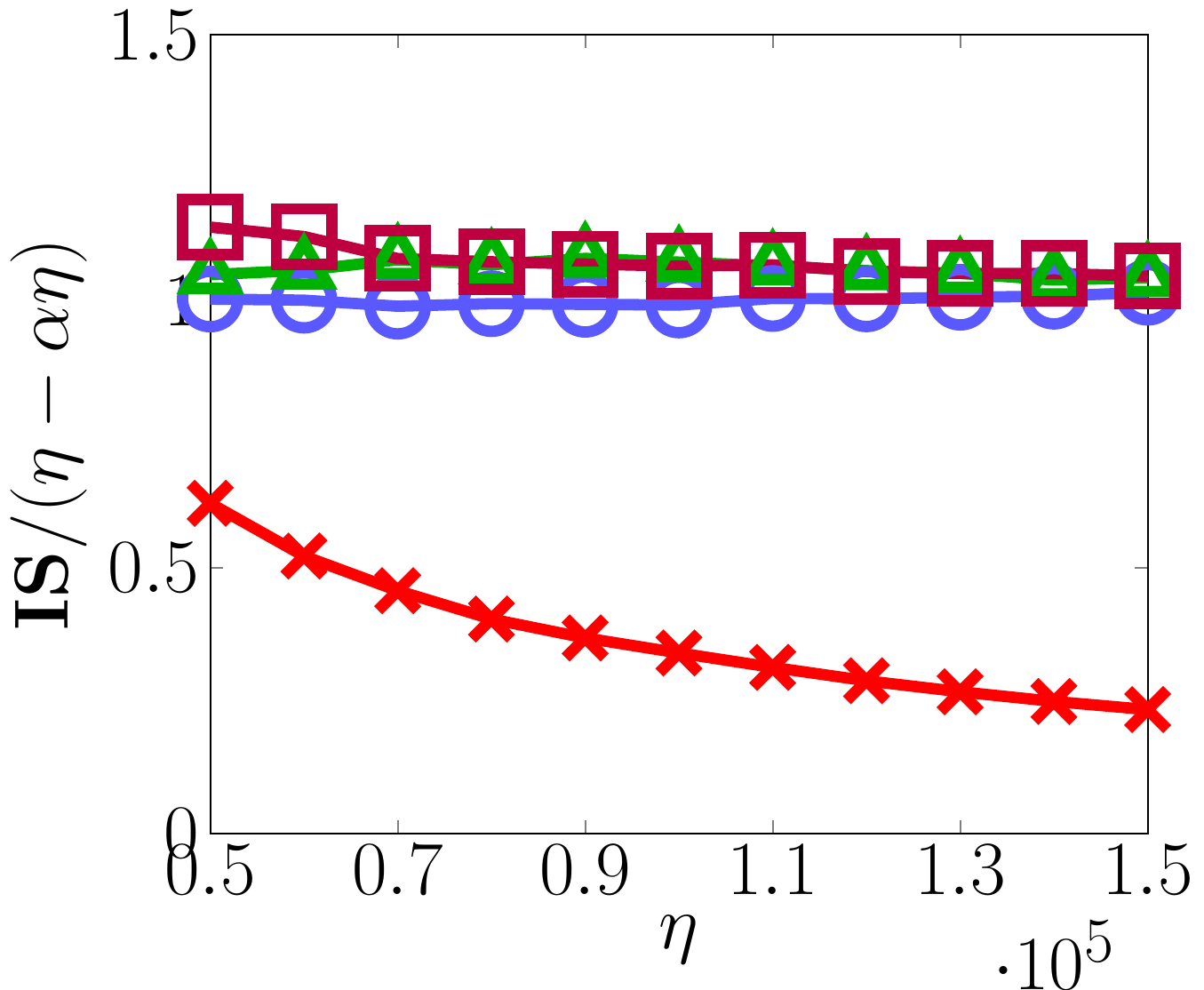}
	      \label{fig:LiveJournal_generalCost_Influence_alpha02}
	    }
	     \subfigure[Orkut (IS)]{
	      \includegraphics[width=0.18\textwidth]{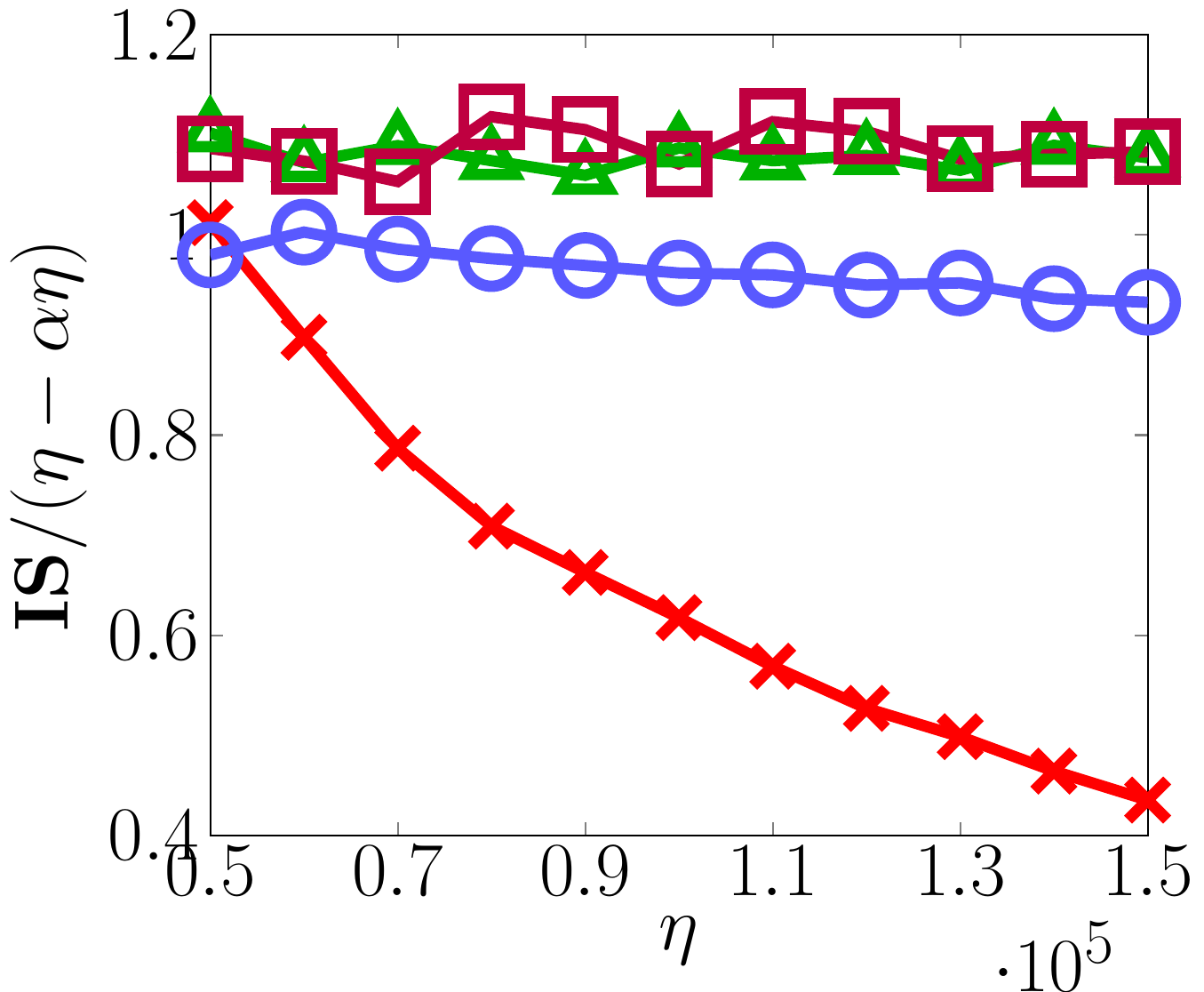}
	      \label{fig:orkut_generalCost_Influence_alpha02}
	    }
	    \subfigure[Twitter (IS)]{
	      \includegraphics[width=0.18\textwidth]{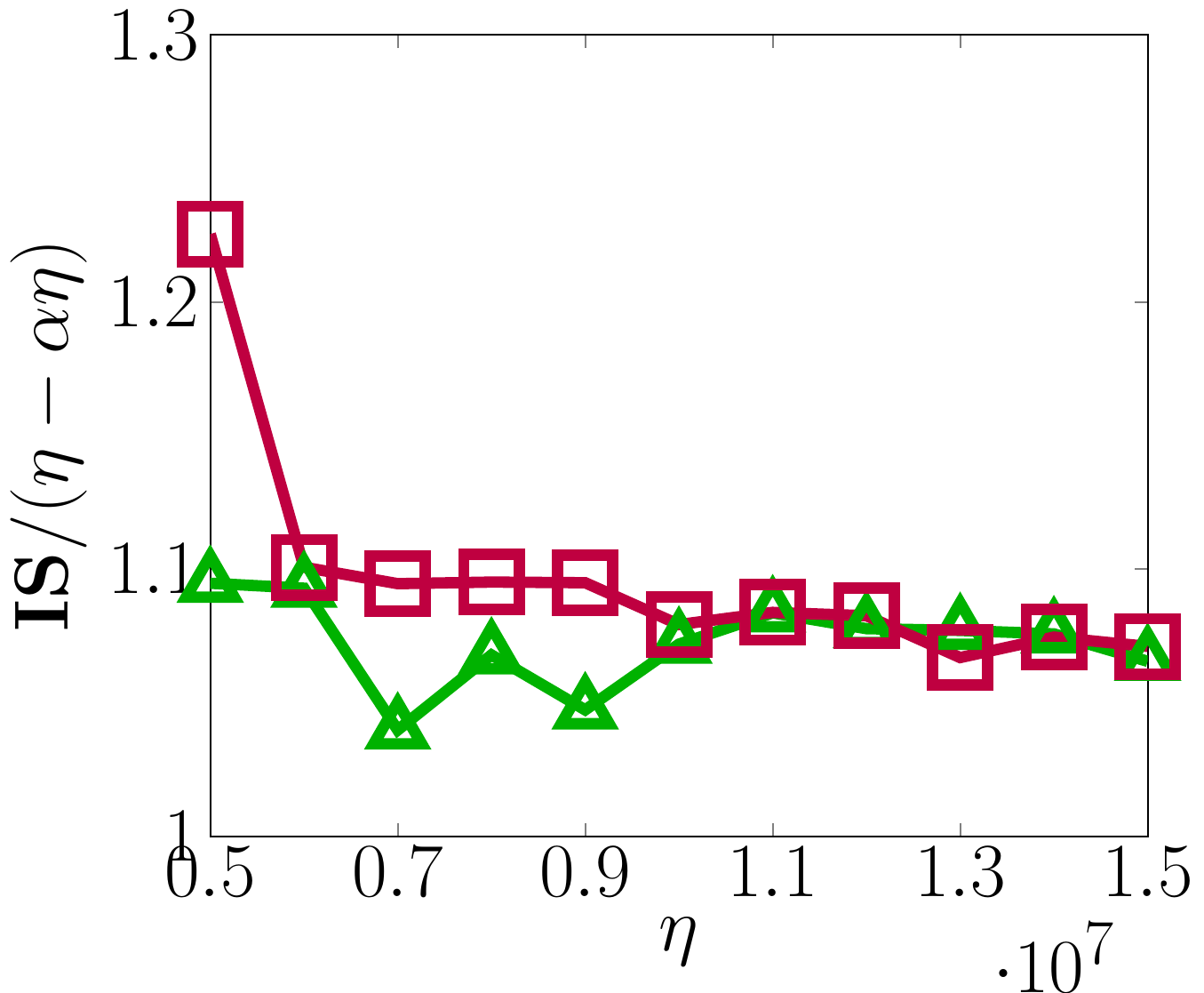}
	      \label{fig:twitter_generalCost_Influence_alpha02}
	    } \\
	
	    \subfigure[wiki-Vote (Cost)]{
	      \includegraphics[width=0.18\textwidth]{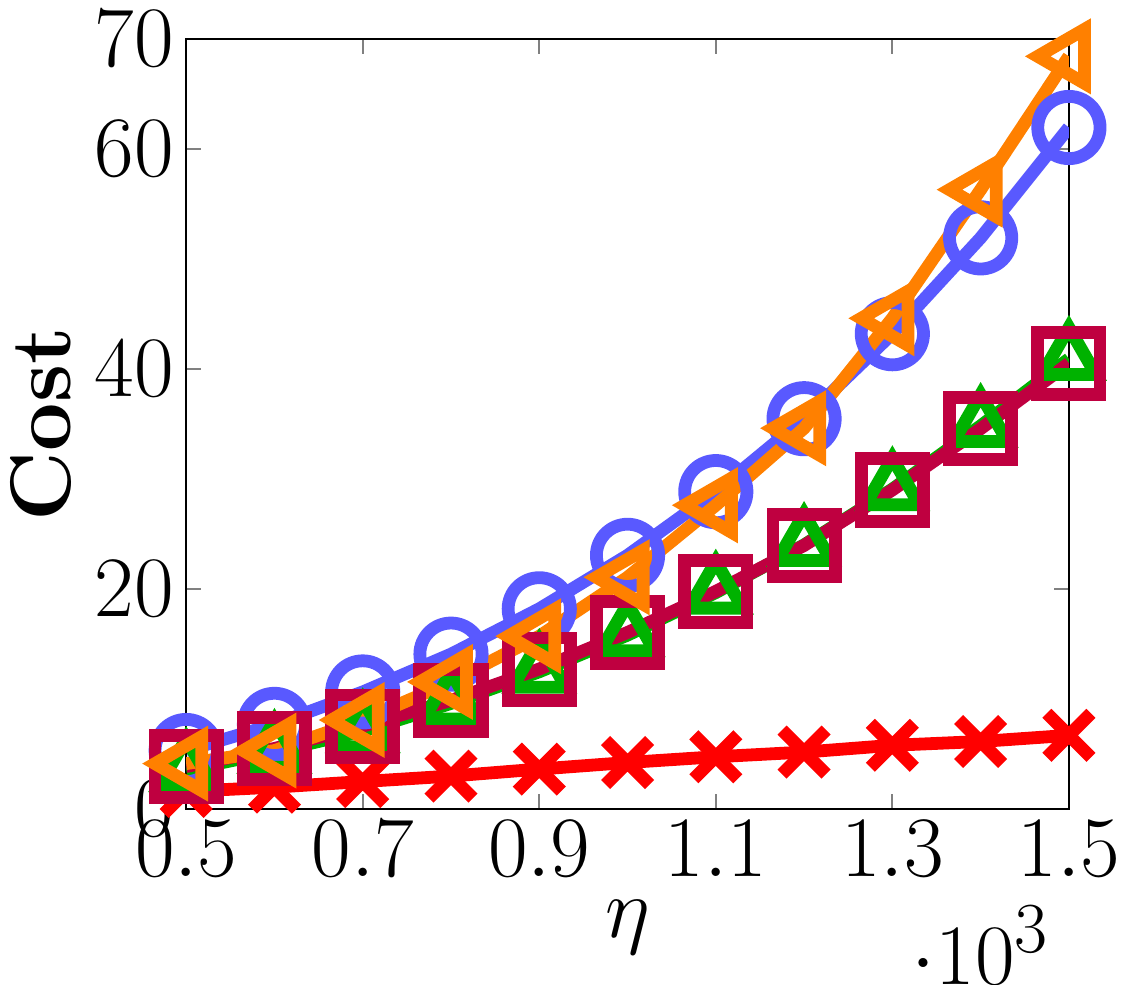}
	      \label{fig:wikiVote_generalCost_Cost_alpha02}
	    }
	    \subfigure[Pokec (Cost)]{
	      \includegraphics[width=0.18\textwidth]{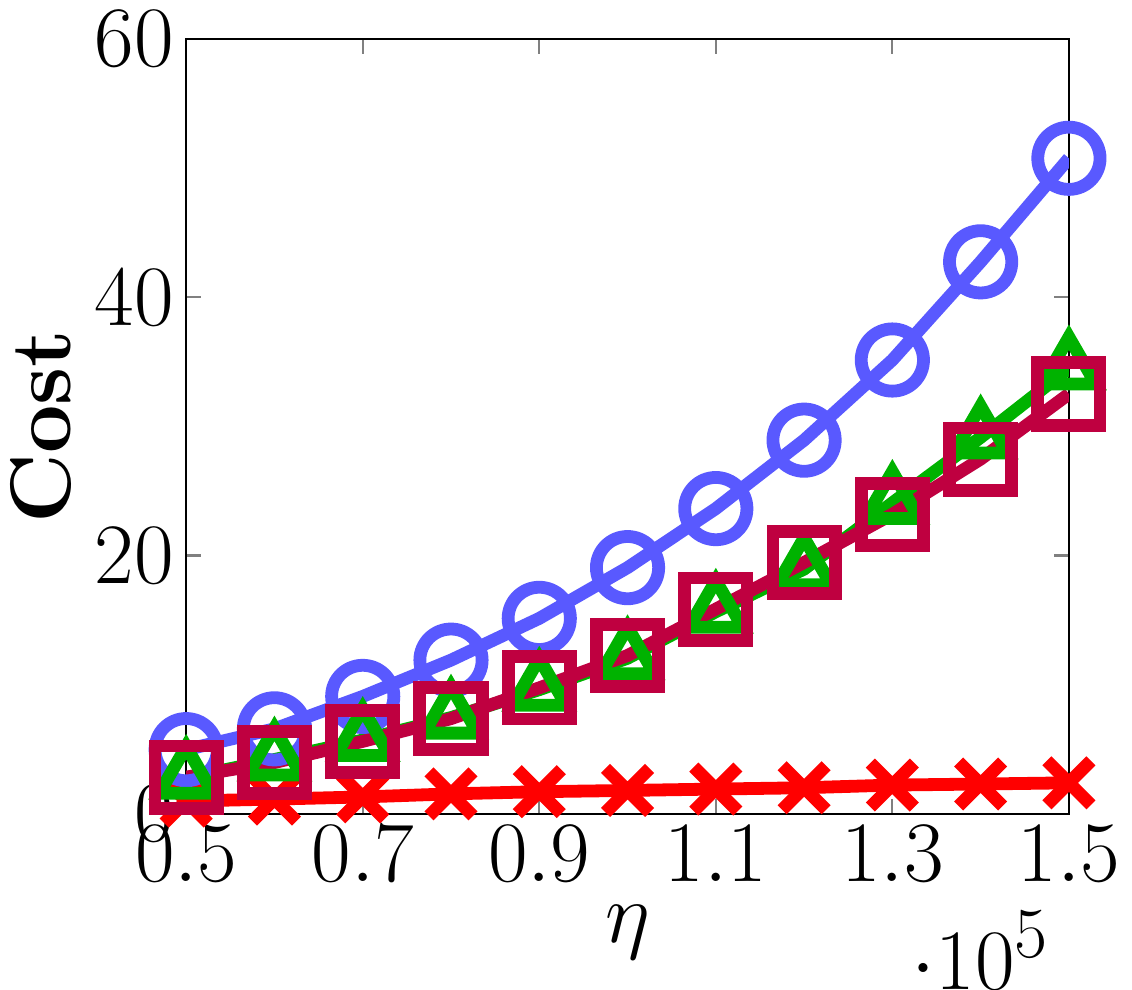}
	      \label{fig:pokec_generalCost_Cost_alpha02}
	    }
	     \subfigure[LiveJournal (Cost)]{
	      \includegraphics[width=0.18\textwidth]{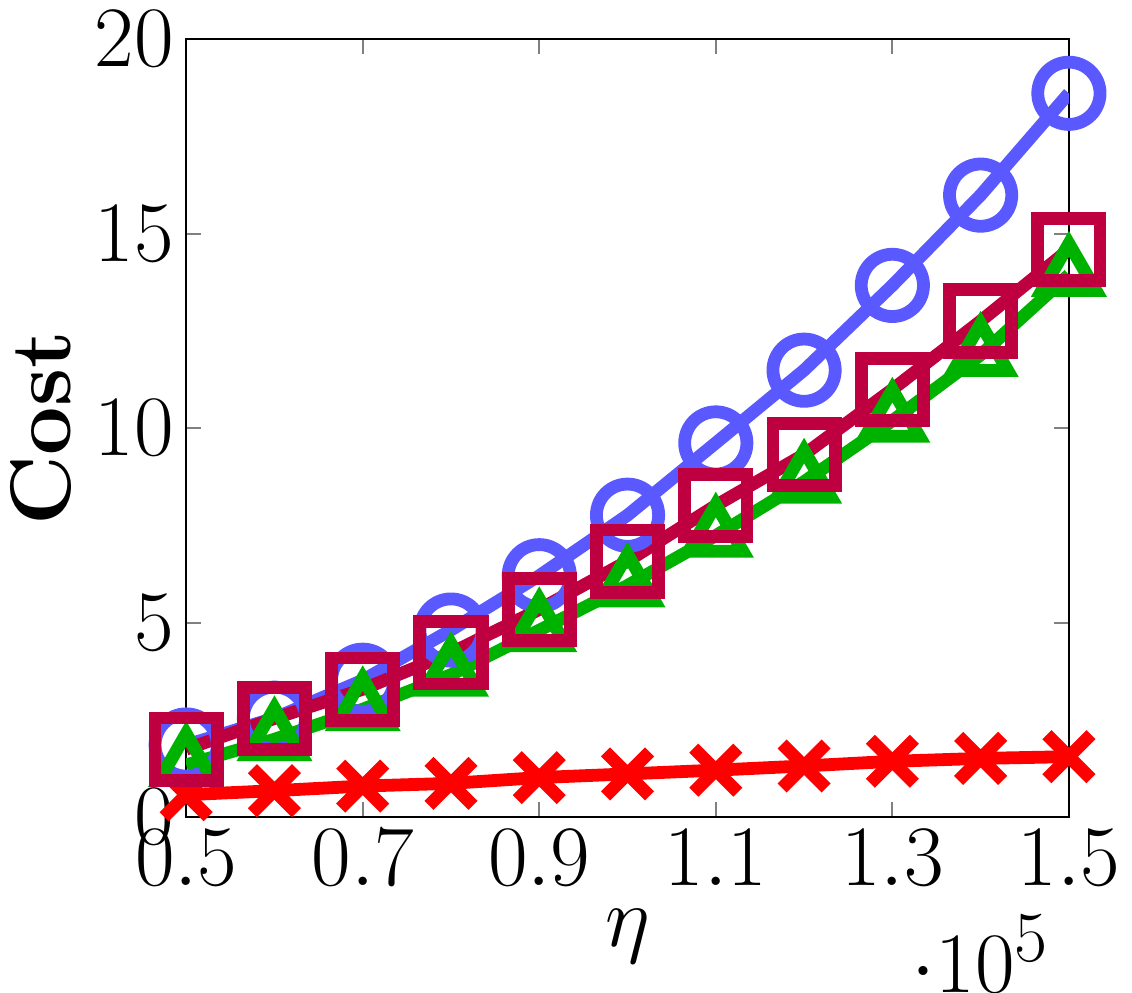}
	      \label{fig:LiveJournal_generalCost_Cost_alpha02}
	    }
	     \subfigure[Orkut (Cost)]{
	      \includegraphics[width=0.18\textwidth]{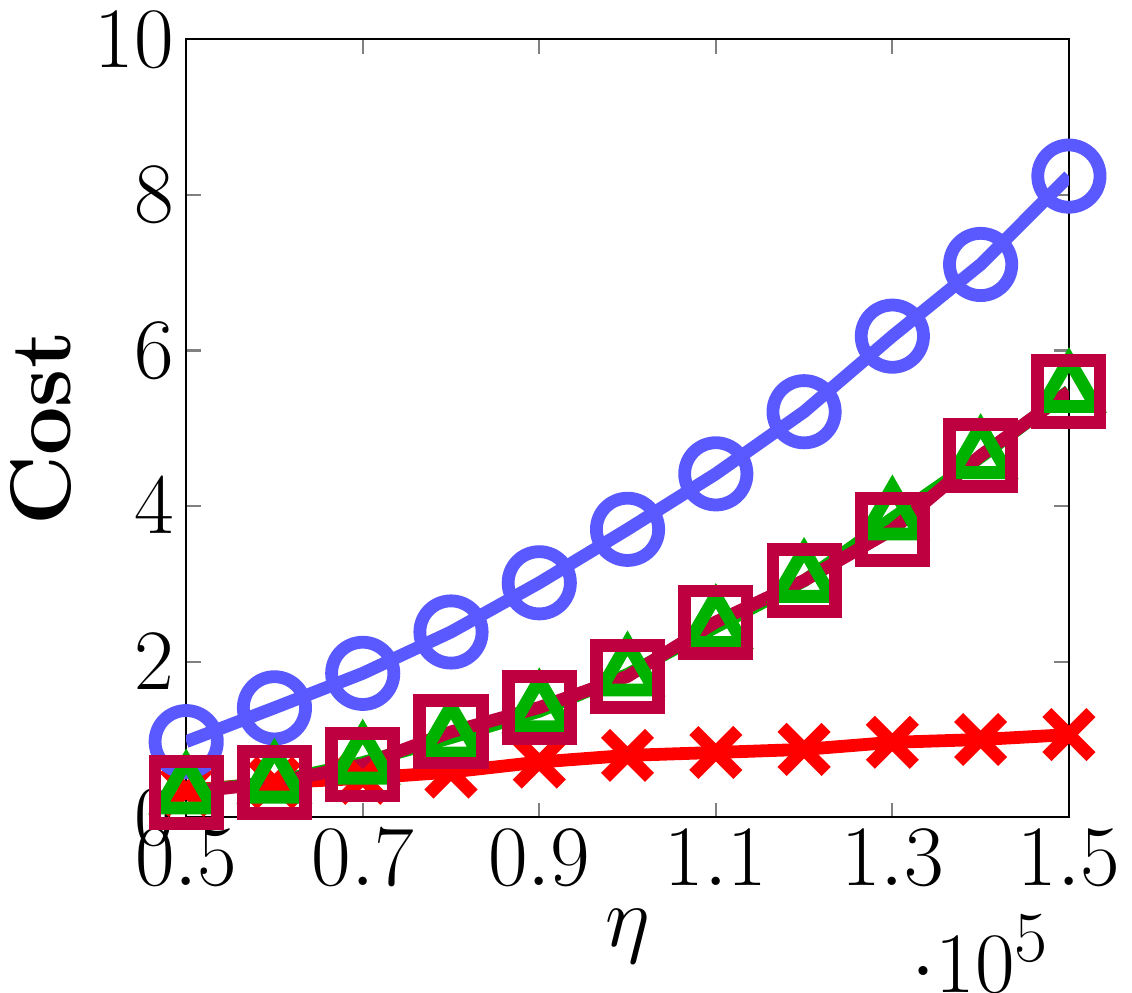}
	      \label{fig:orkut_generalCost_Cost_alpha02}
	    }
	    \subfigure[Twitter (Cost)]{
	      \includegraphics[width=0.18\textwidth]{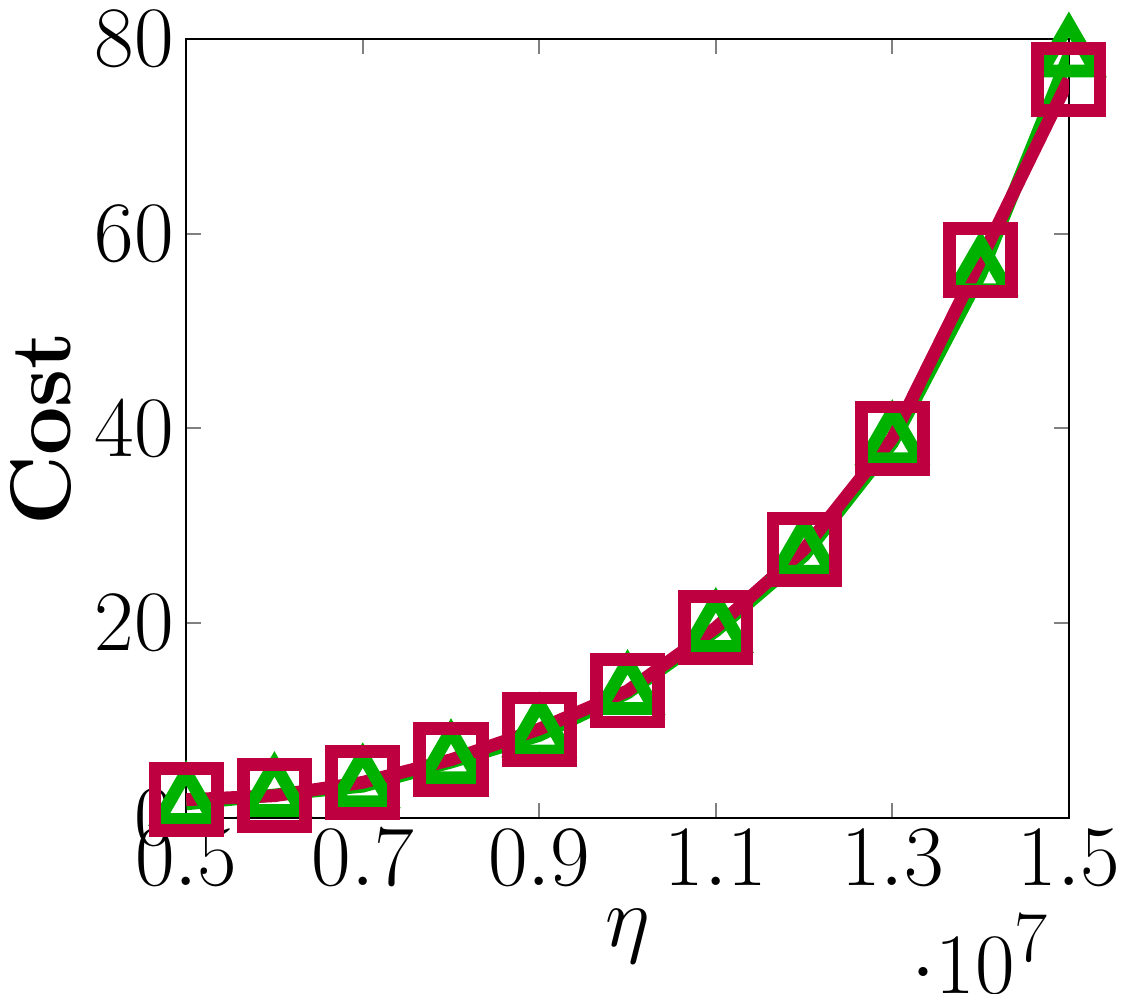}
	      \label{fig:twitter_generalCost_Cost_alpha02}
	    }\\
	
	  \end{minipage}
	   \renewcommand{\figurename}{Fig.}
	  \caption{Comparing the algorithms under the GC setting (IS: influence spread; RT: running time)}
	  \label{fig:the GC setting_alpha02}
	\end{figure*}

\begin{figure*}[htb!]
	  \centering
	   \includegraphics[width=0.65\textwidth]{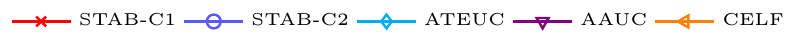}
	   \label{fig:lineTitle_ATEUC_C1_C2_AAUC_CELF_alpha02}
	
	  \begin{minipage}[htb!]{1.035\textwidth}
	  	
	    \subfigure[wiki-Vote (RT)]{
	      \includegraphics[width=0.18\textwidth]{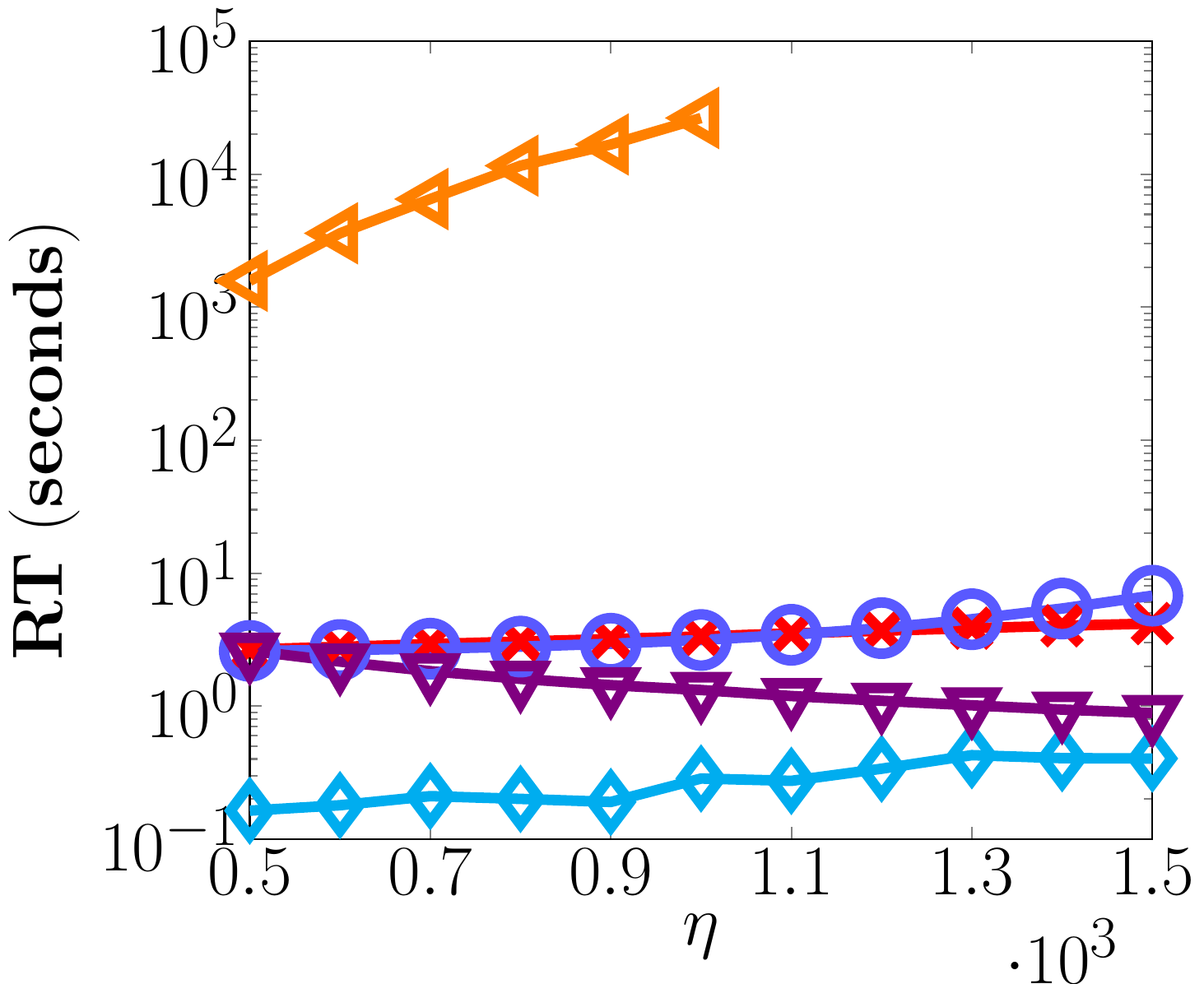}
	      \label{fig:wikiVote_uniformCost_RunningTime_alpha02}
	    }
	    \subfigure[Pokec (RT)]{
	      \includegraphics[width=0.18\textwidth]{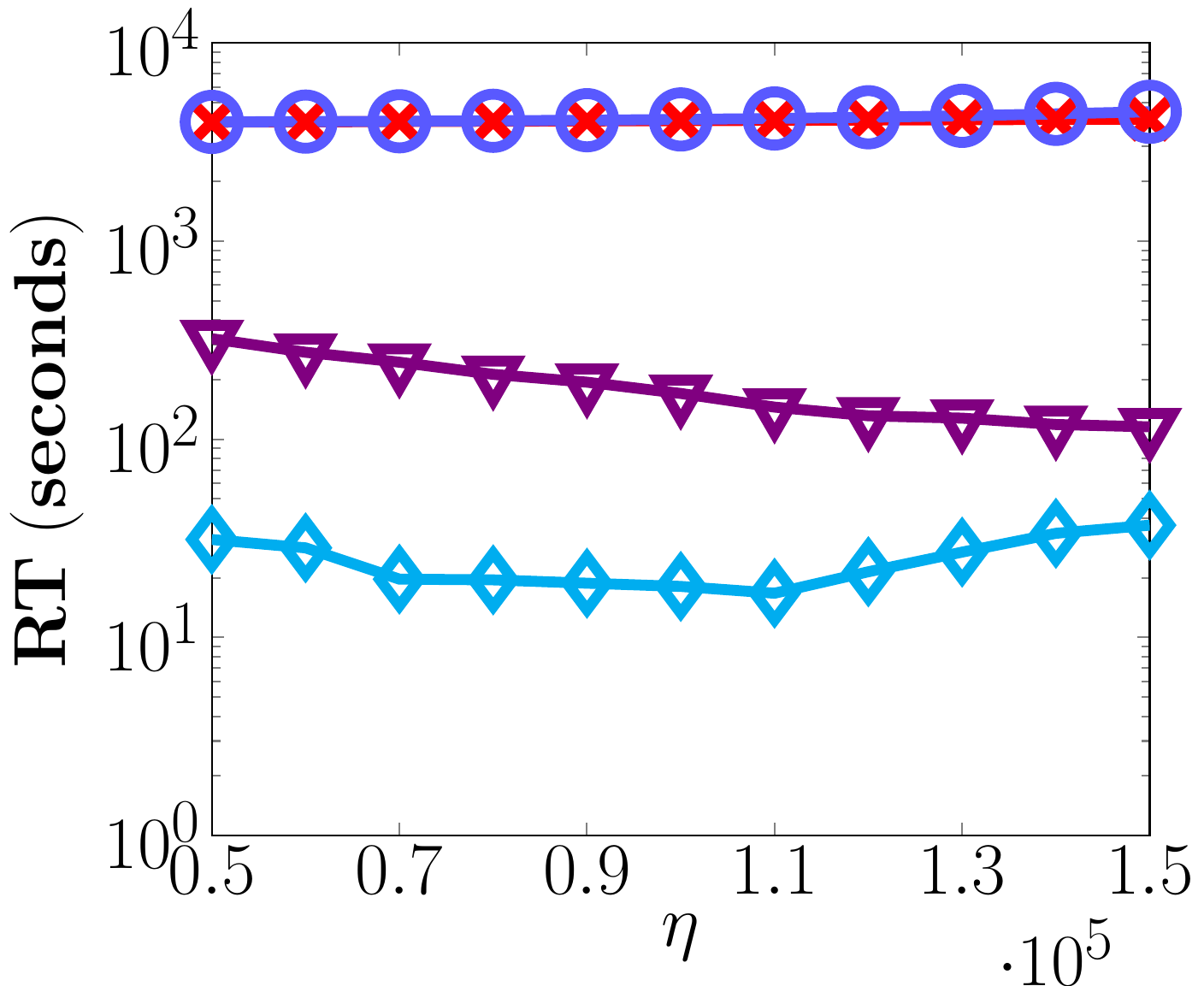}
	      \label{fig:pokec_uniformCost_RunningTime_alpha02}
	    }
	      \subfigure[LiveJournal (RT)]{
	      \includegraphics[width=0.18\textwidth]{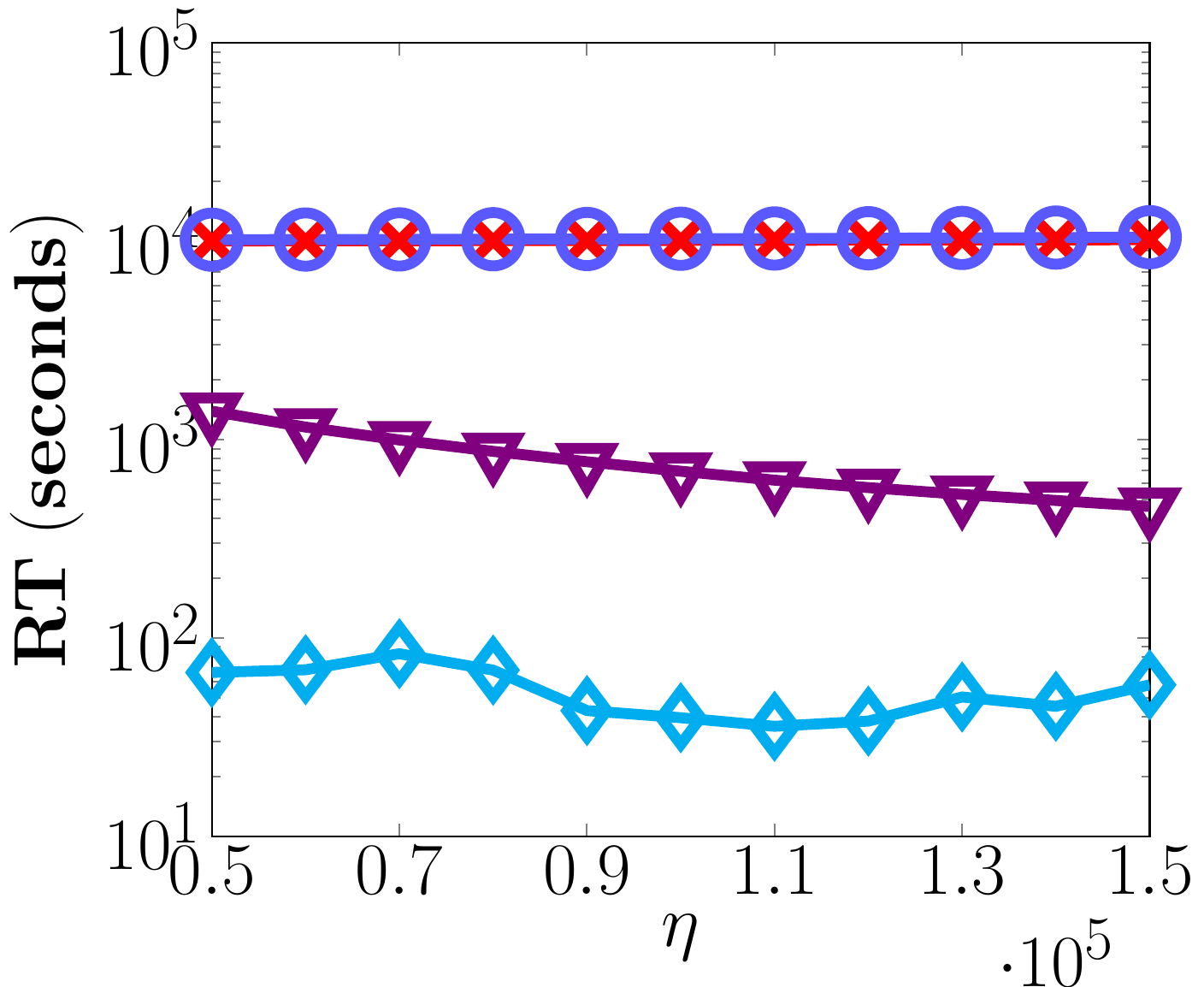}
	      \label{fig:LiveJournal_uniformCost_RunningTime_alpha02}
	    }
	     \subfigure[Orkut (RT)]{
	      \includegraphics[width=0.18\textwidth]{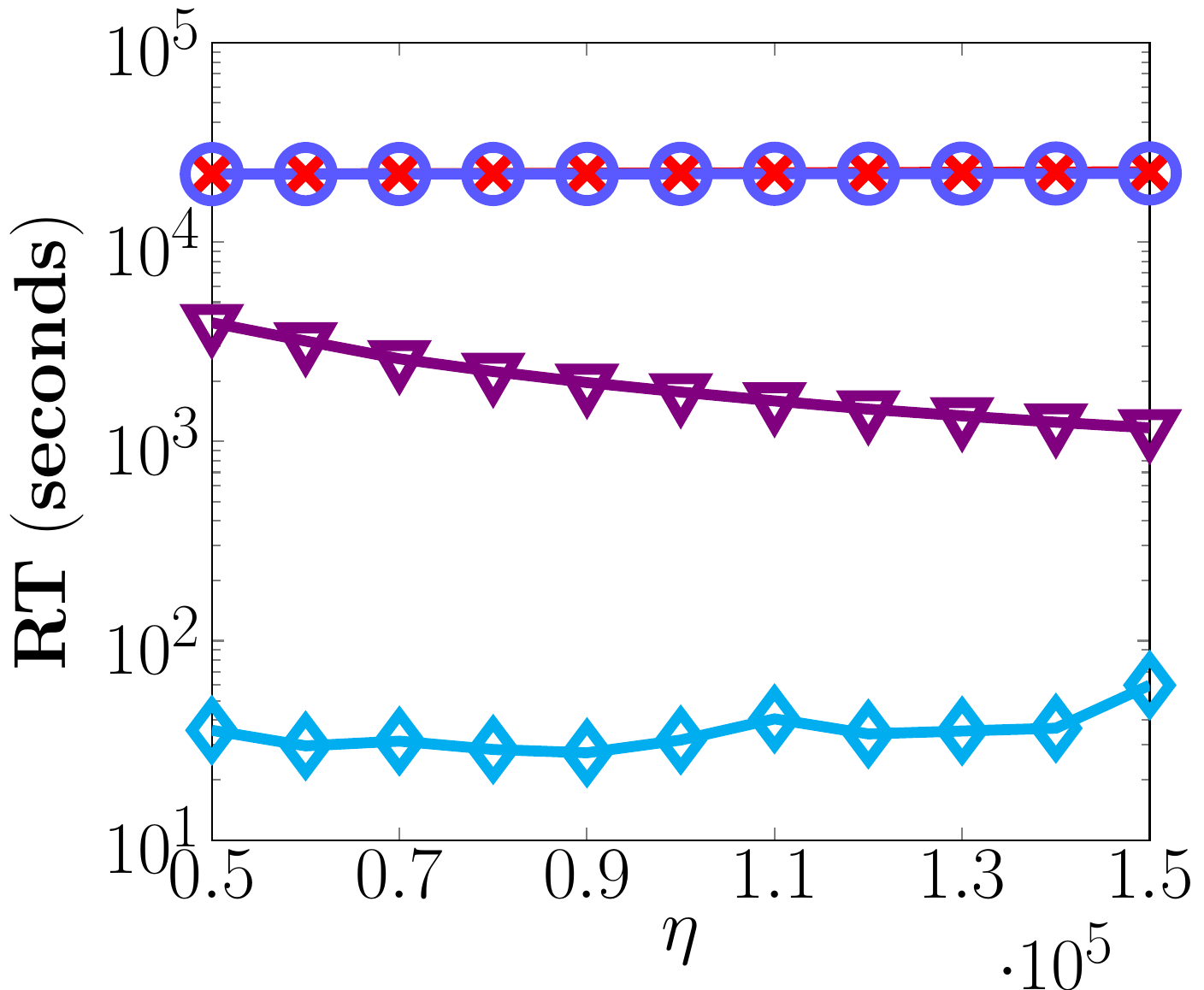}
	      \label{fig:orkut_uniformCost_RunningTime_alpha02}
	    }
	    \subfigure[Twitter (RT)]{
	      \includegraphics[width=0.18\textwidth]{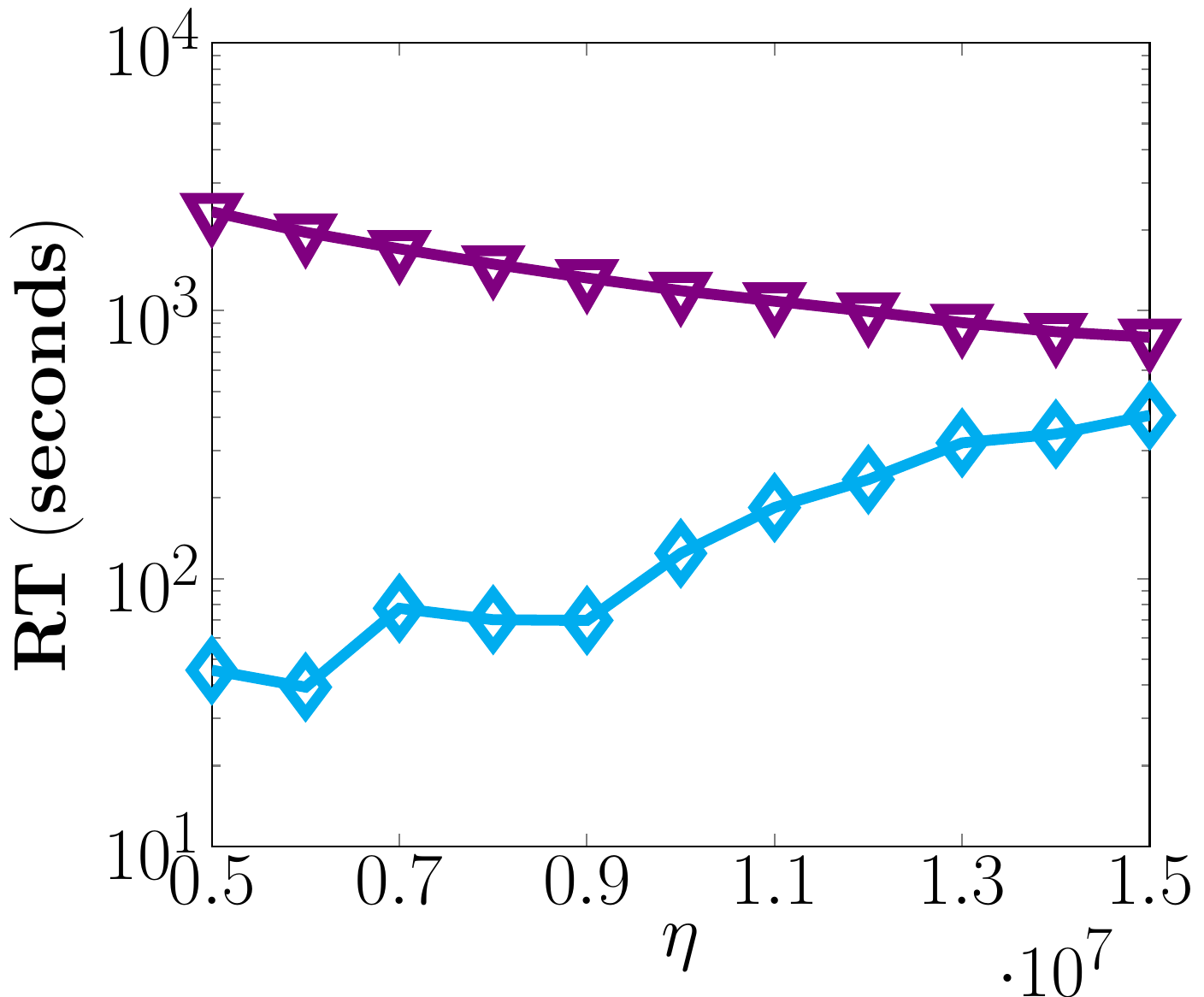}
	      \label{fig:twitter_uniformCost_RunningTime_alpha02}
	    } \\
	
	    \subfigure[wiki-Vote (IS)]{
	      \includegraphics[width=0.18\textwidth]{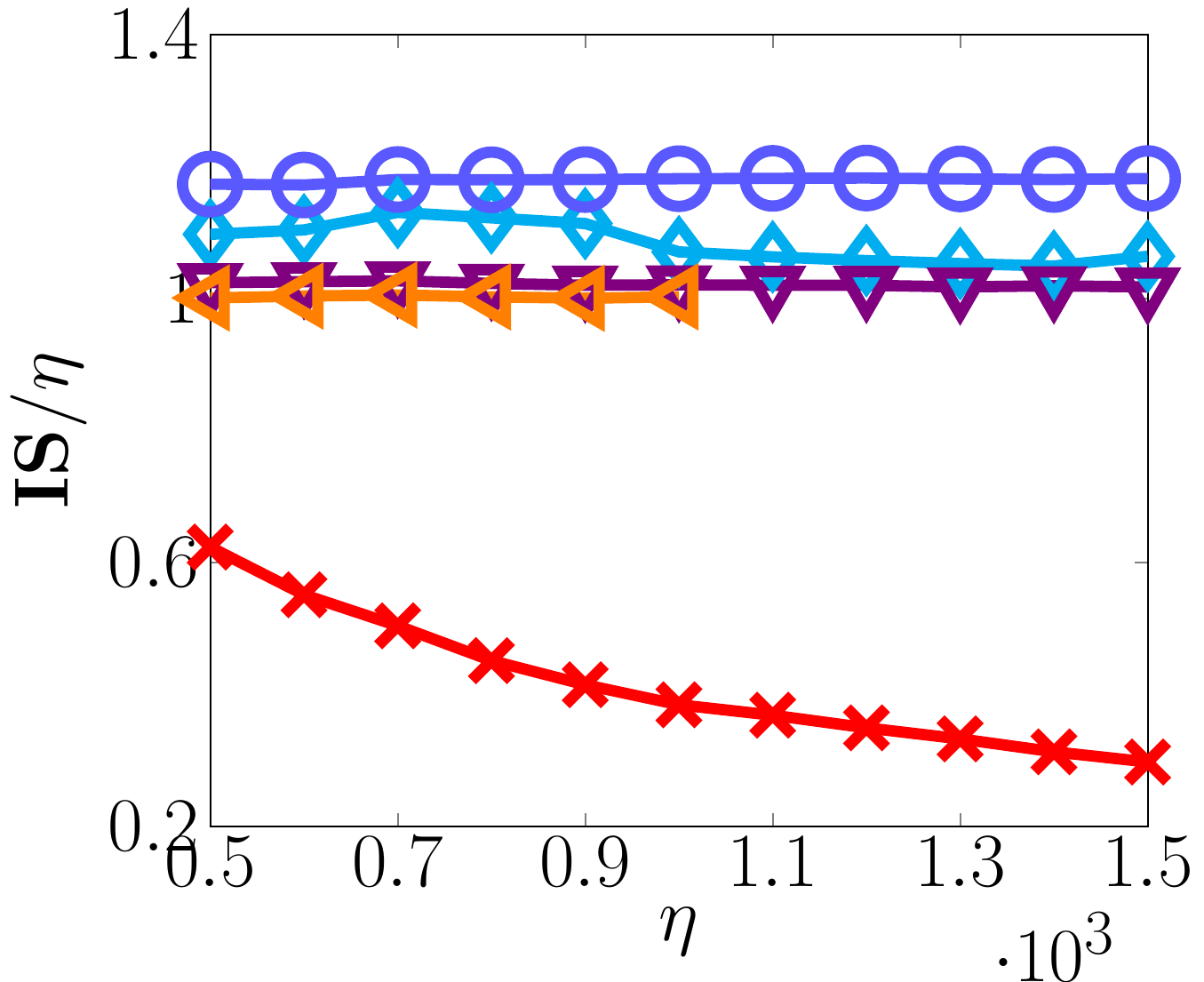}
	      \label{fig:wikiVote_uniformCost_Influence_alpha02}
	    }	
	    \subfigure[Pokec (IS)]{
	      \includegraphics[width=0.18\textwidth]{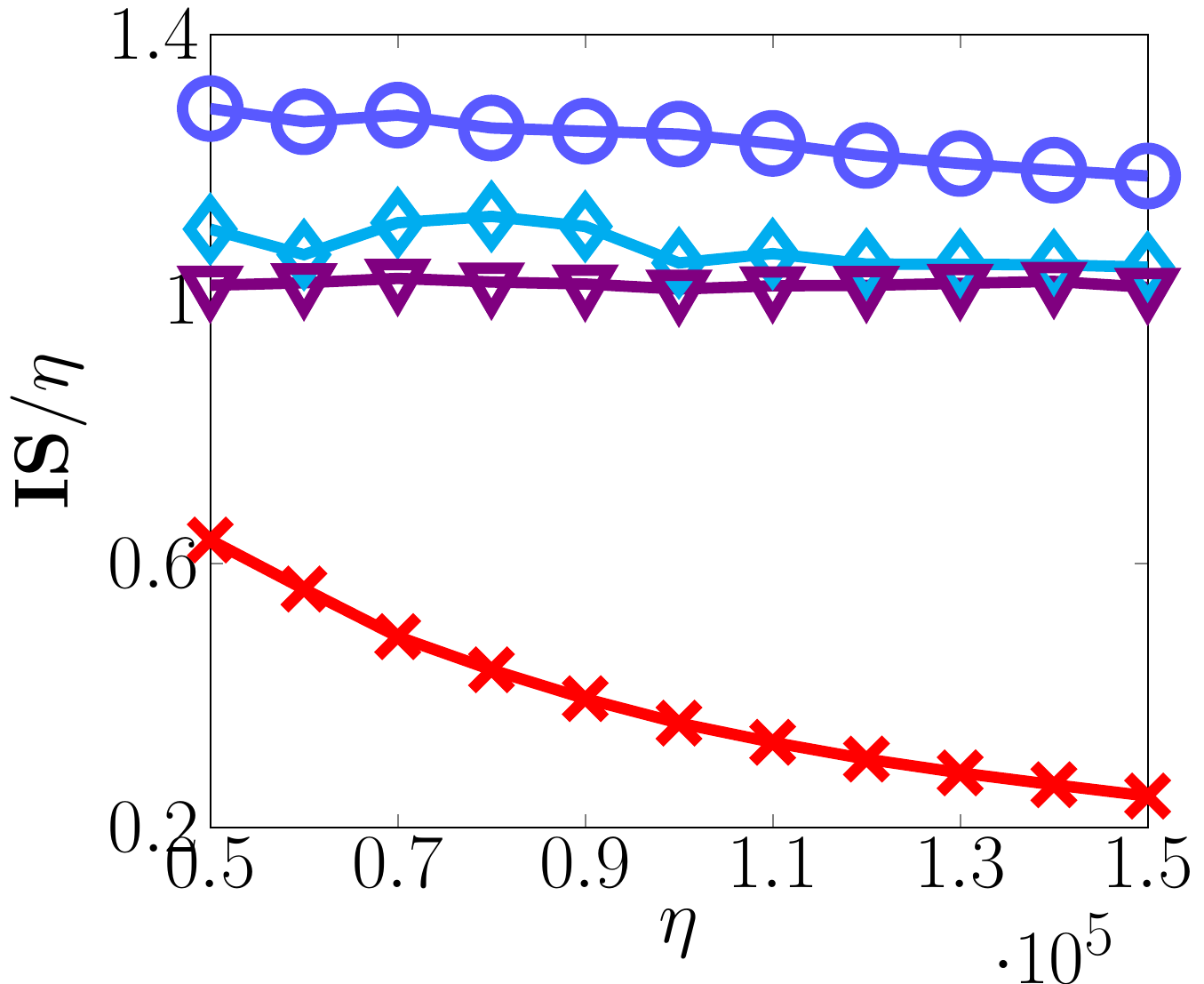}
	      \label{fig:pokec_uniformCost_Influence_alpha02}
	    }
	      \subfigure[LiveJournal (IS)]{
	      \includegraphics[width=0.18\textwidth]{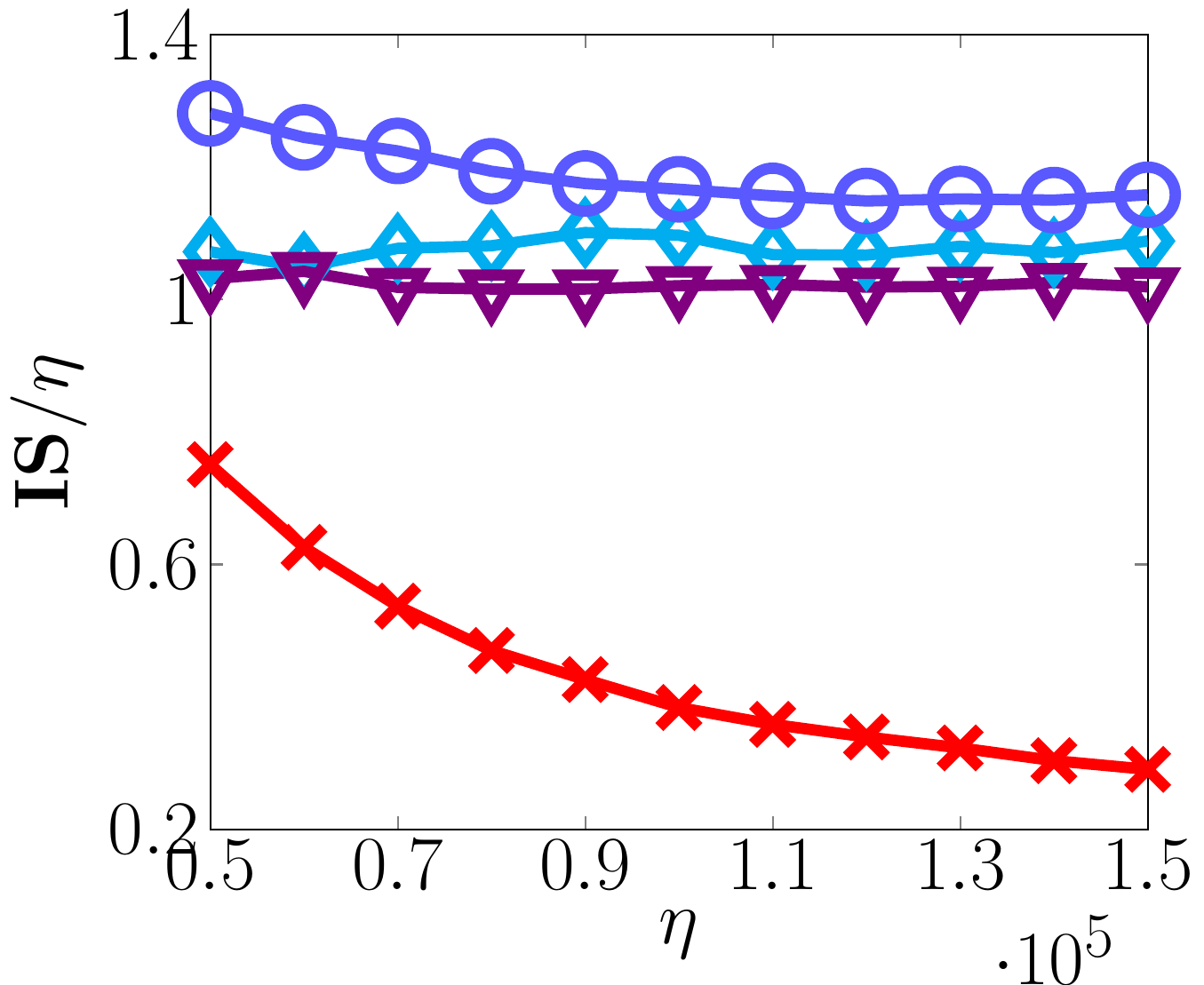}
	      \label{fig:LiveJournal_uniformCost_Influence_alpha02}
	    }
	     \subfigure[Orkut (IS)]{
	      \includegraphics[width=0.18\textwidth]{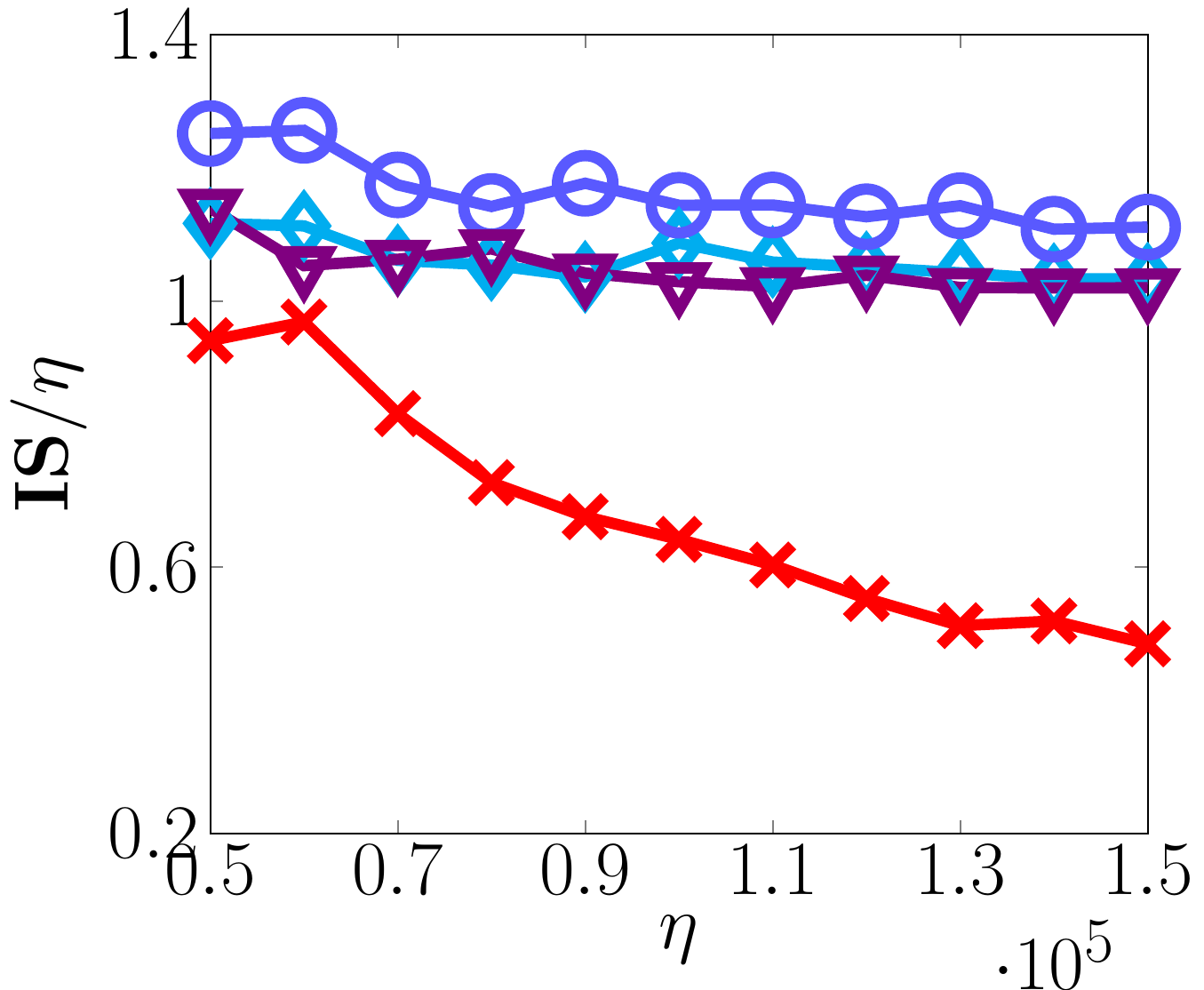}
	      \label{fig:orkut_uniformCost_Influence_alpha02}
	    }
	    \subfigure[Twitter (IS)]{
	      \includegraphics[width=0.18\textwidth]{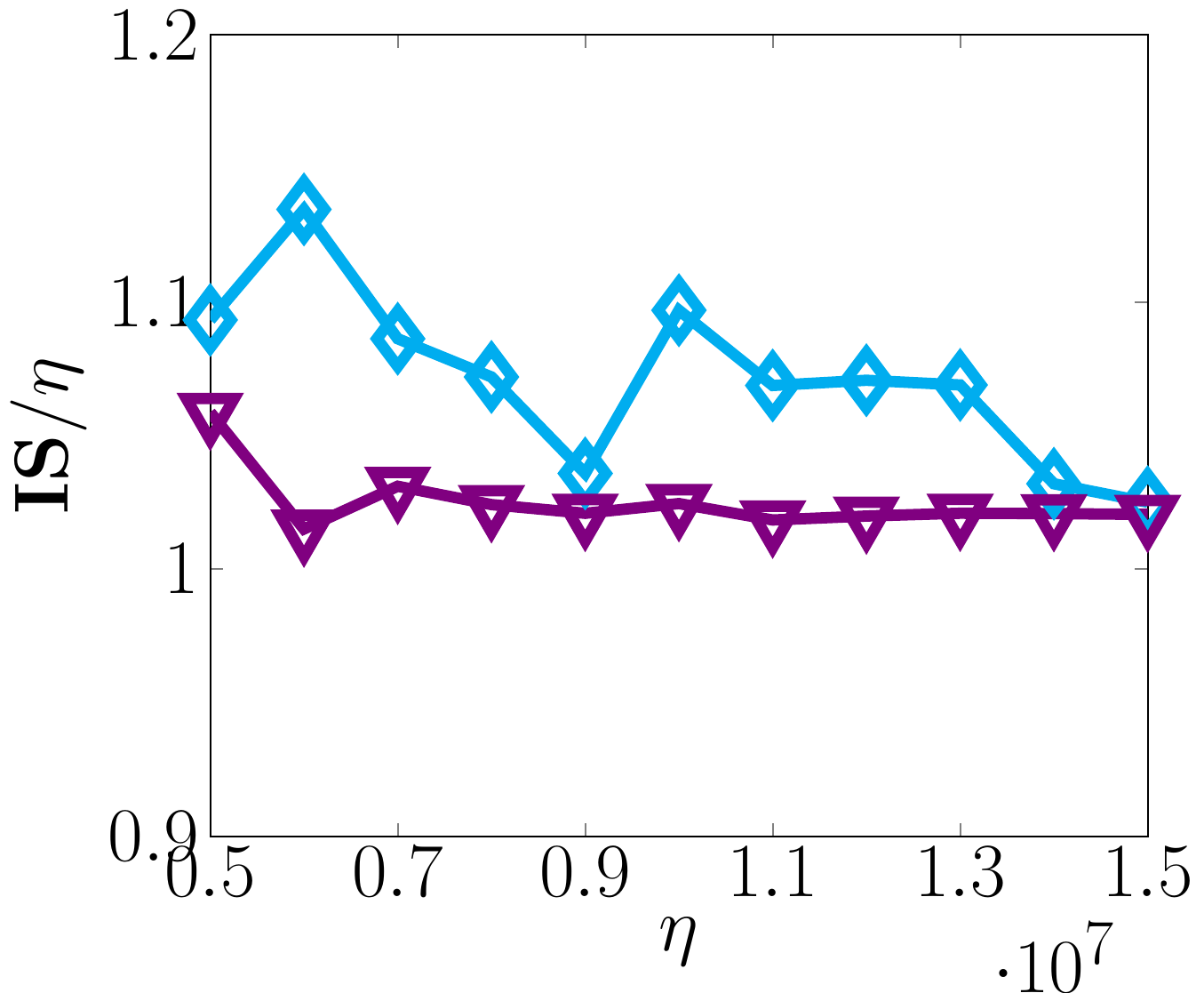}
	      \label{fig:twitter_uniformCost_Influence_alpha02}
	    } \\
	
	    \subfigure[wiki-Vote (Cost)]{
	      \includegraphics[width=0.18\textwidth]{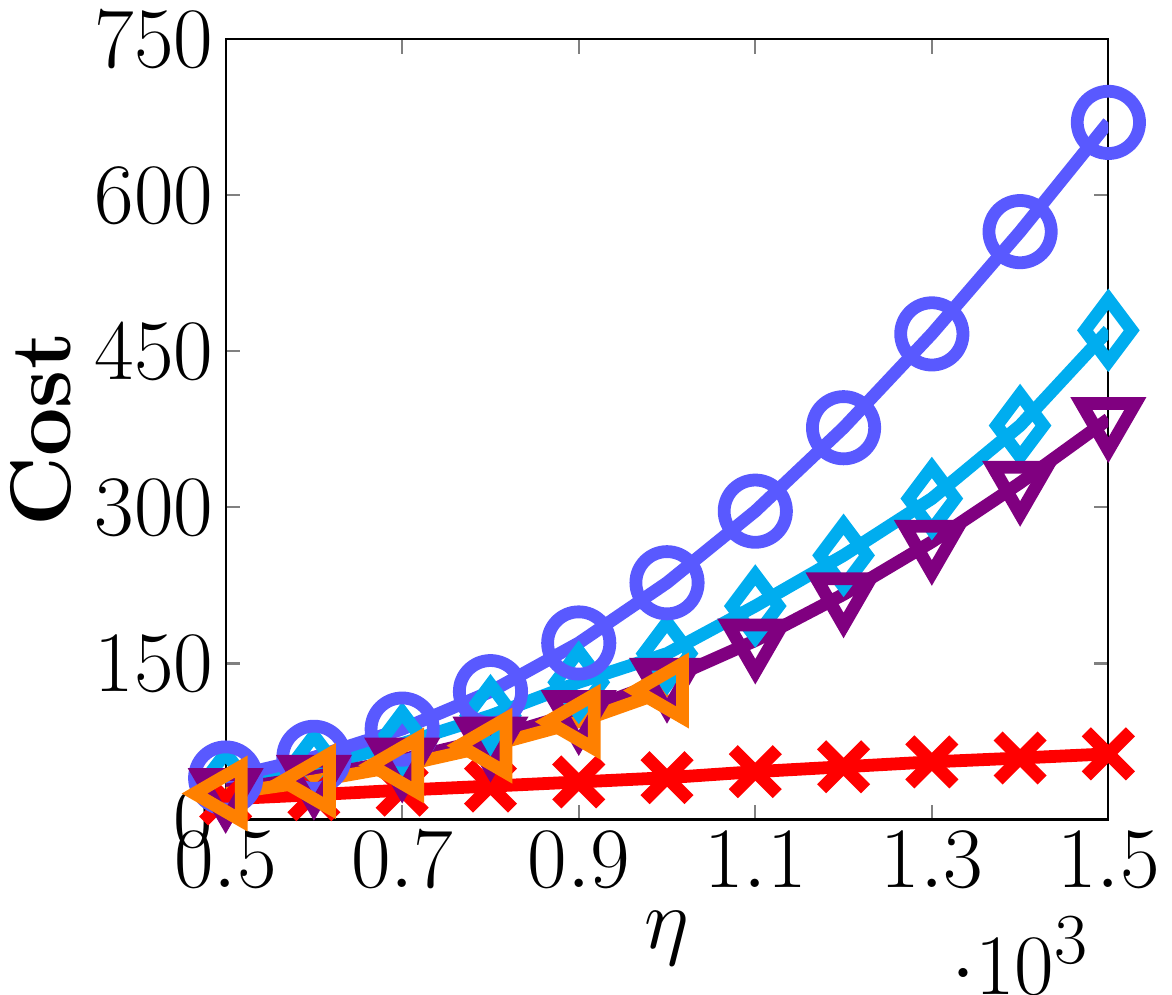}
	      \label{fig:wikiVote_uniformCost_Cost_alpha02}
	    }
	    \subfigure[Pokec (Cost)]{
	      \includegraphics[width=0.18\textwidth]{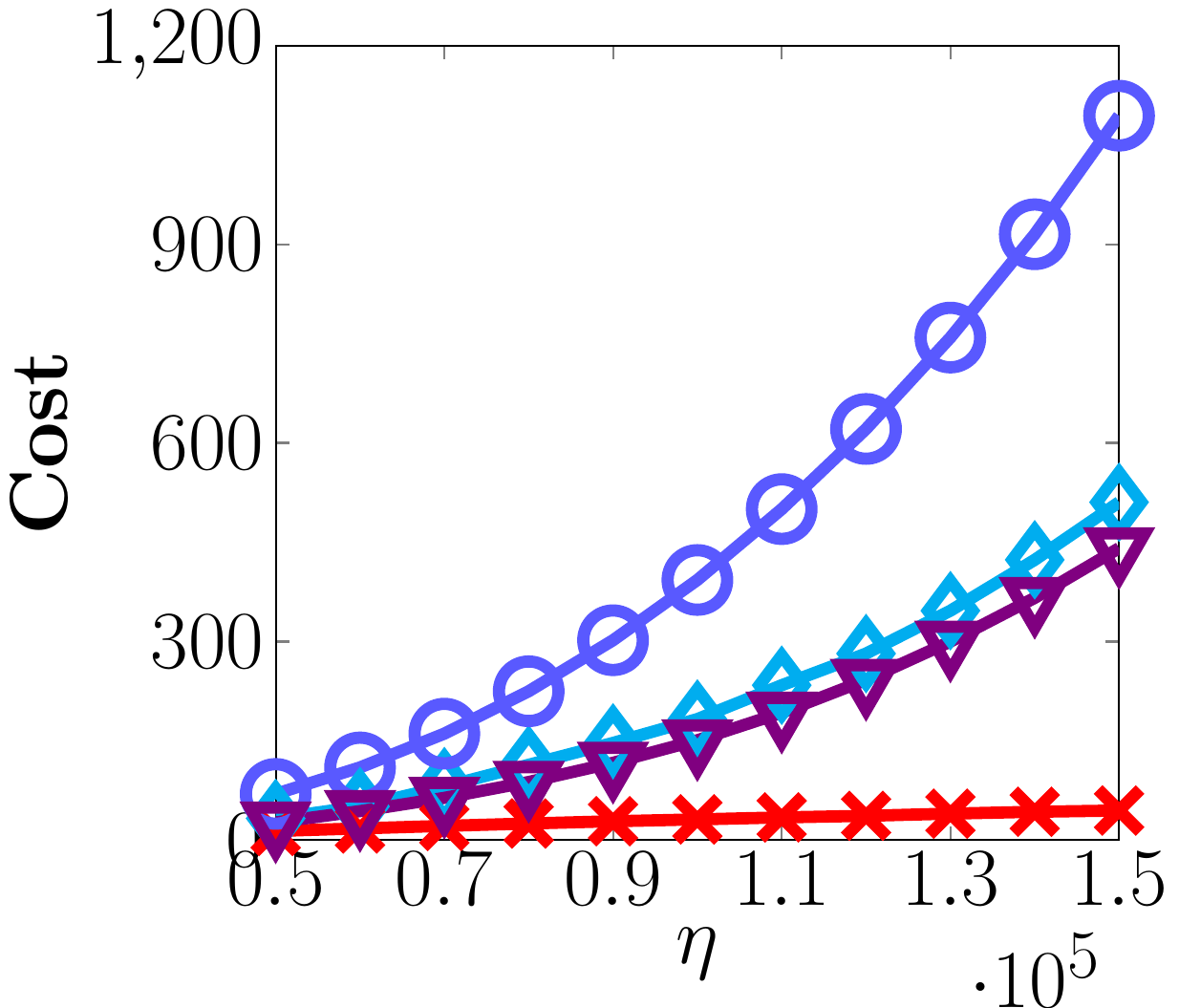}
	      \label{fig:pokec_uniformCost_Cost_alpha02}
	    }
	     \subfigure[LiveJournal (Cost)]{
	      \includegraphics[width=0.18\textwidth]{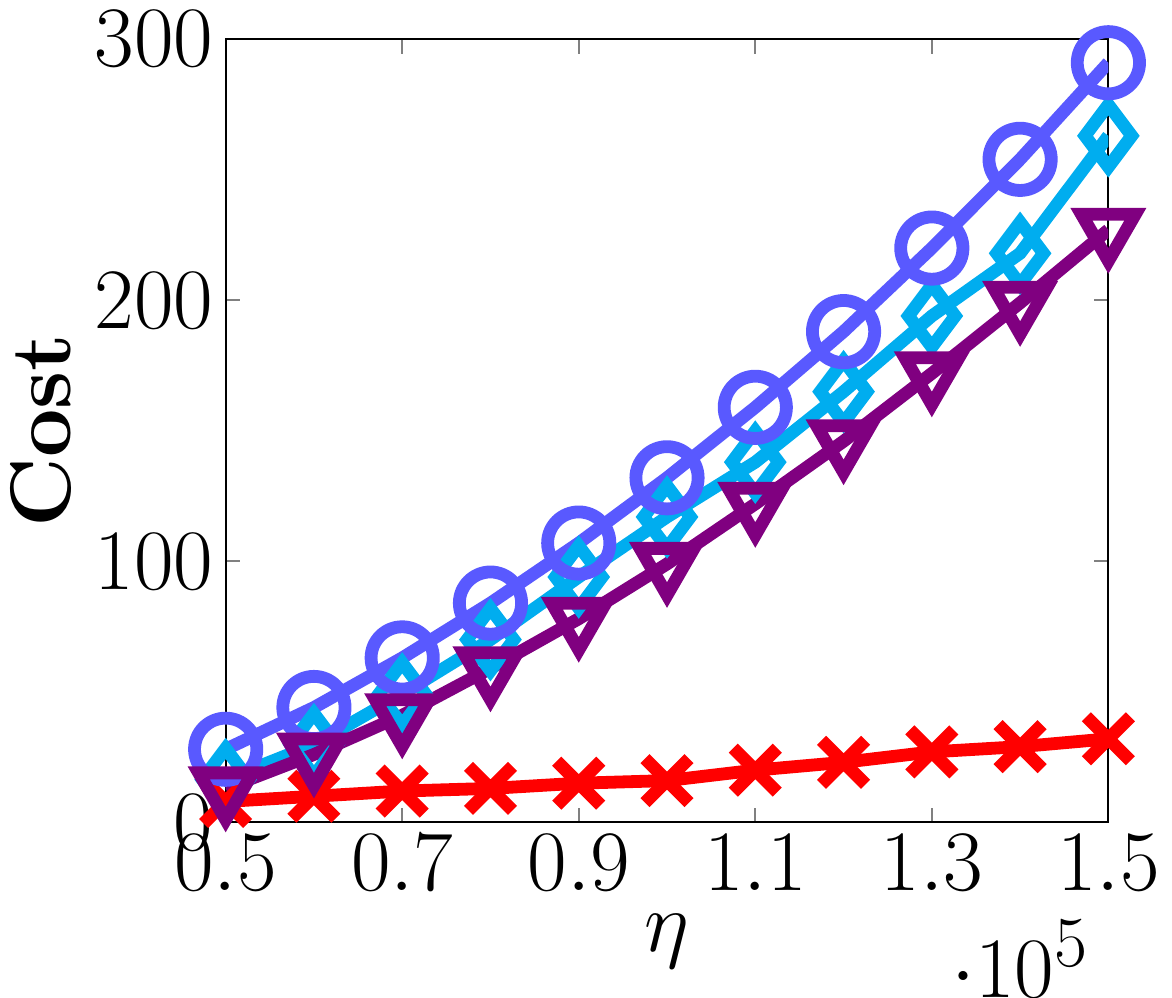}
	      \label{fig:LiveJournal_uniformCost_Cost_alpha02}
	    }
	     \subfigure[Orkut (Cost)]{
	      \includegraphics[width=0.18\textwidth]{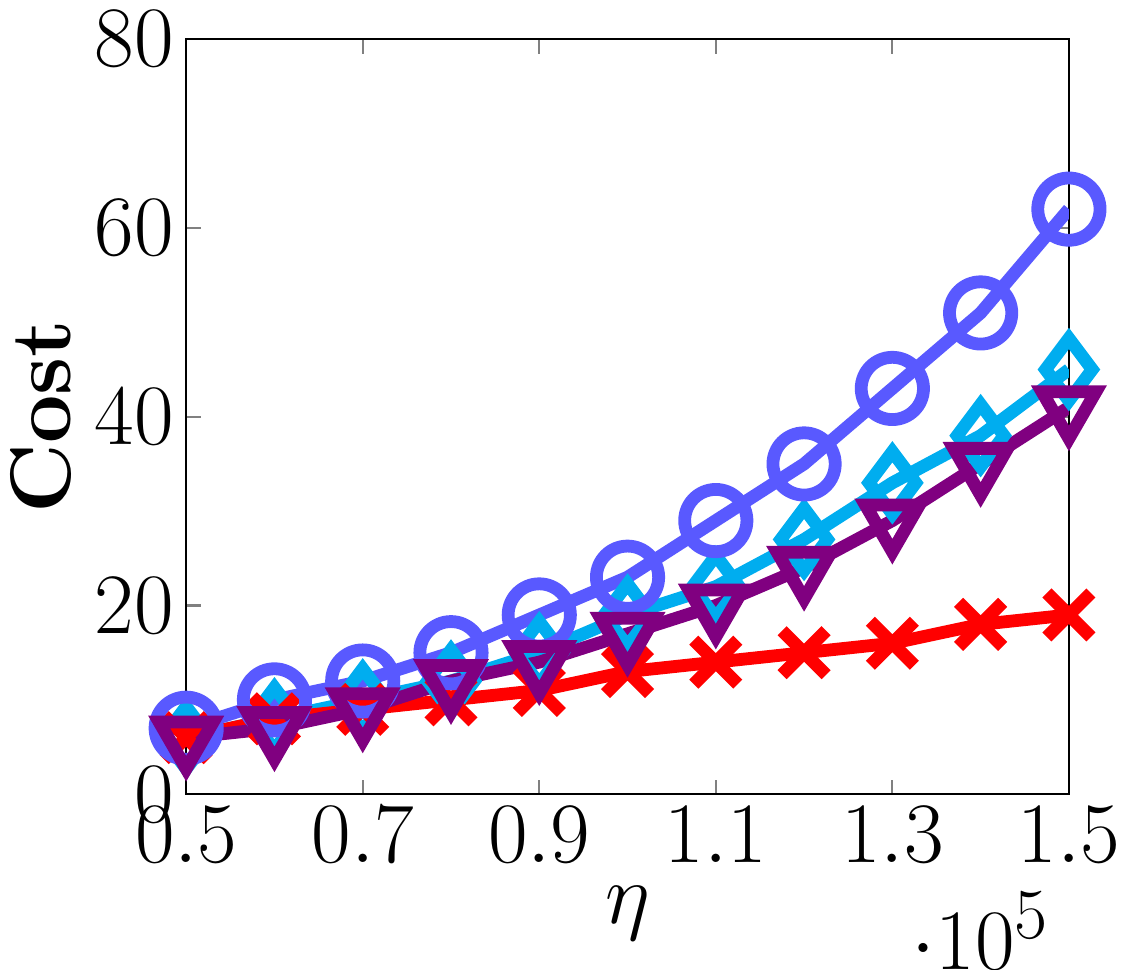}
	      \label{fig:orkut_uniformCost_Cost_alpha02}
	    }
	     \subfigure[Twitter (Cost)]{
	      \includegraphics[width=0.18\textwidth]{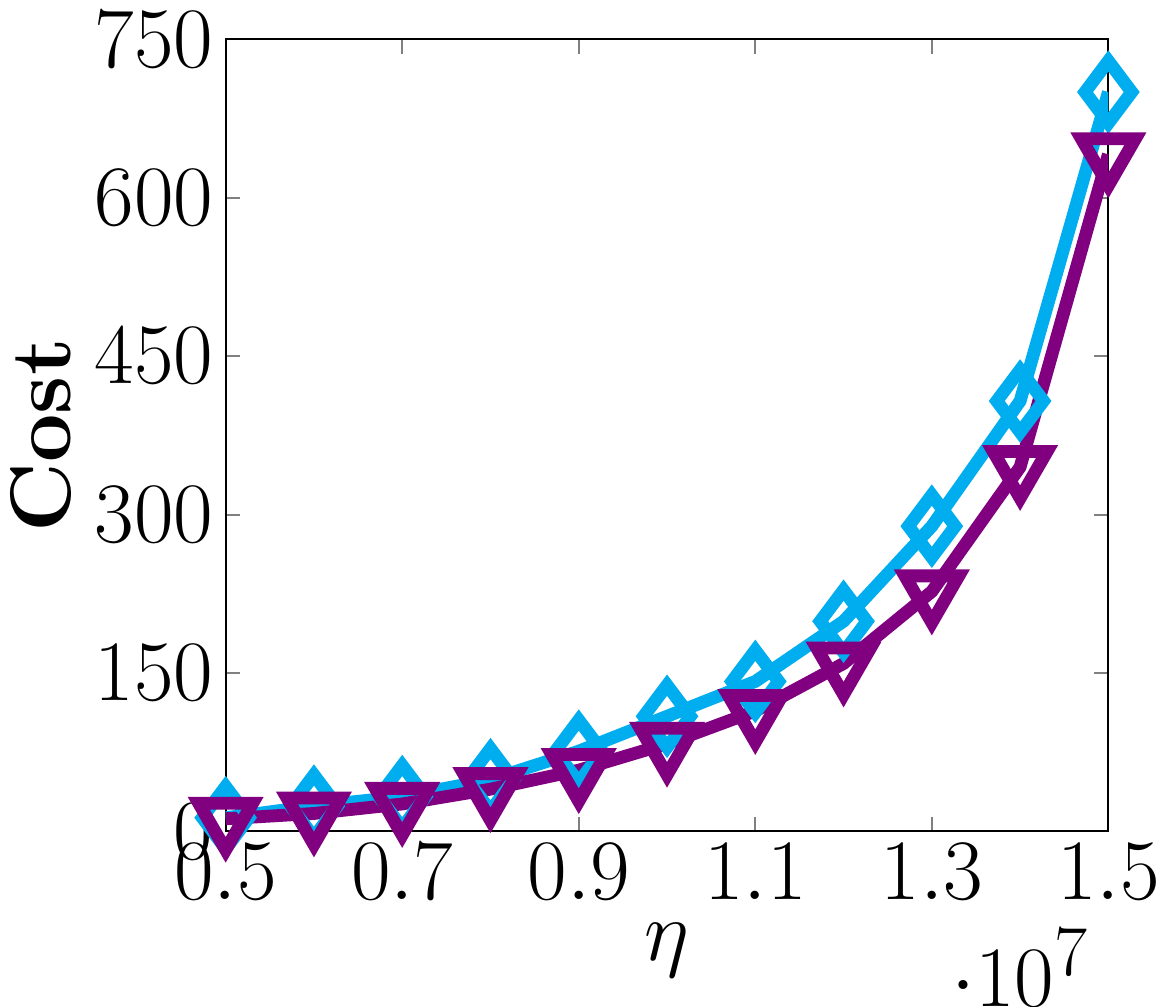}
	      \label{fig:twitter_uniformCost_Cost_alpha02}
	    } \\
	
	  \end{minipage}
	   \renewcommand{\figurename}{Fig.}
	  \caption{Comparing the algorithms under the UC setting (IS: influence spread; RT: running time)}
	  \label{fig:the UC setting_alpha02}
	\end{figure*}

\subsection{Theoretical Comparisons for the AP Algorithms}
\label{sec:compareunderuc}

To the best of our knowledge, there are no AP algorithms for MCSS with provable performance bounds under the GC+NV setting, and \textbf{only~\cite{Chen2014,Kuhnle2017} have proposed AP algorithms with provable approximation ratios under the UC+NV setting.}


\subsubsection{Comparing with Zhang et.al.'s Results~\cite{Chen2014}}

The work in~\cite{Chen2014} has proved that the $\mathsf{GreSSC}$ algorithm achieves a performance bound stated as follows:

\vspace{-1ex}
\begin{fact}[\cite{Chen2014}]
Let $S'$ be the output of $\mathsf{GreSSC}((1+\xi)\eta,\xi,\xi,0)$. Under the UC setting, if $\xi\leq \frac{\epsilon(n-\eta)}{8n^2(\eta+1)}$ where $\epsilon\in (0,1]$, then we have $f(S')\geq \eta$ and $|S'|\leq \lceil \ln\frac{(1+\epsilon)n\eta}{n-\eta} \rceil |S_{opt}|+1$ for any $\eta\in (0,n)$.
\label{fct:zhangsar}
\end{fact}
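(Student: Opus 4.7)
The plan is to mimic the classical greedy analysis for submodular set cover, threading through the multiplicative noise $\xi$ and exploiting the uniform-cost structure of the MCSS instance. Write $S' = \{x_1,\ldots,x_T\}$ for the nodes produced by $\mathsf{GreSSC}$ and $B_i = \{x_1,\ldots,x_i\}$. Since $\theta=0$, I may treat the bound $(1-\xi)f(A)\leq \hat f(A)\leq (1+\xi)f(A)$ as holding deterministically for every probed set.

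Feasibility is the easy half. The algorithm terminates the instant $\hat f(S')\geq \Phi=(1+\xi)\eta$, and the upper estimate $\hat f(S')\leq (1+\xi) f(S')$ then yields $f(S')\geq \eta$ immediately. For the size bound, I use $S_{opt}$ as a witness: from $f(S_{opt})\geq \eta$ we get $\hat f(S_{opt})\geq (1-\xi)\eta$, and monotonicity and submodularity of $\hat f$ imply that, whenever $\hat f(B_i)<(1-\xi)\eta$, there exists $u\in S_{opt}\setminus B_i$ with
$$\hat f(B_i\cup\{u\})-\hat f(B_i)\geq \frac{(1-\xi)\eta-\hat f(B_i)}{|S_{opt}|}.$$
Because greedy picks the element of maximum truncated marginal gain, iterating this recurrence gives the usual geometric decay $(1-\xi)\eta-\hat f(B_i)\leq (1-\xi)\eta\cdot e^{-i/|S_{opt}|}$, which is the source of the $\ln\tfrac{(1+\epsilon)n\eta}{n-\eta}$ factor once the slack is calibrated against an ``atomic'' greedy gain.

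That atomic gain is where the delicate choice $\xi\leq \epsilon(n-\eta)/(8n^2(\eta+1))$ enters. Whenever $\hat f(B_i)<(1+\xi)\eta$, the sandwich bounds force $f(B_i)\leq (1+\xi)\eta/(1-\xi)<n$; and from the identity $f(B)=\sum_v \Pr[v\text{ activated by }B]$ one extracts, under uniform costs, a node in $V\setminus B_i$ whose $f$-marginal is at least $(n-f(B_i))/(n-|B_i|)\geq (n-\eta)/n$ up to lower-order terms. Translating this into a $\hat f$-marginal loses an additive $O(\xi\eta)$, and the stated bound on $\xi$ is exactly what is needed to guarantee that (i) the per-step $\hat f$-gain remains a constant fraction of $(n-\eta)/n$ throughout the run and (ii) once the first phase reaches $\hat f(B_i)\geq (1-\xi)\eta$, a \emph{single} additional greedy step pushes $\hat f$ across the terminating threshold $(1+\xi)\eta$, which accounts for the ``$+1$'' in the stated bound.

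The main obstacle I anticipate is calibrating the interplay between the two phases. One must keep every $\xi$-dependent slippage small enough that neither the logarithmic factor nor the ``$+1$'' accounting is inflated, and in particular verify that the constant $\epsilon(n-\eta)/(8n^2(\eta+1))$ is precisely tight enough that the gap $2\xi\eta$ to be bridged in Phase B is dominated by the worst-case $\hat f$-marginal lower bound derived above. Once the arithmetic is carried out carefully, summing Phase A (at most $\lceil\ln\tfrac{(1+\epsilon)n\eta}{n-\eta}\rceil |S_{opt}|$ iterations) with the single Phase B step yields the claimed bound.
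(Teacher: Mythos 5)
First, a point of reference: the paper itself does not prove Fact~\ref{fct:zhangsar} --- it is quoted from \cite{Chen2014} without proof, and the closest arguments in the paper are the appendix proofs of Lemmas~\ref{lma:thearoftminus1}--\ref{lma:atminus1issmall}, which run the same two-phase greedy analysis (geometric decay via Lemma~\ref{lma:alwaysabigone}, then a one-step bridge across the $2\xi\eta$ gap) but for the empirical function $\bar{f}(\mathcal{R},\cdot)$, which is \emph{exactly} monotone and submodular. Your sketch follows that correct template, and your feasibility half is fine: termination gives $\hat{f}(S')\geq(1+\xi)\eta$ and the upper sandwich bound yields $f(S')\geq\eta$. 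The gap is at the central step of the size bound, where you invoke ``monotonicity and submodularity of $\hat f$.'' The oracle $\hat f$ in $\mathsf{GreSSC}$ is only a family of pointwise estimates satisfying $(1-\xi)f(A)\leq\hat f(A)\leq(1+\xi)f(A)$; it is not itself a submodular (or even monotone) set function. Submodularity must be applied to $f$ and then transferred through the sandwich, and that transfer does not produce your displayed recurrence. What it actually gives is the existence of $u\in S_{opt}\setminus B_i$ with $\hat f(B_i\cup\{u\})-\hat f(B_i)\geq(1-\xi)\bigl(f(B_i\cup\{u\})-f(B_i)\bigr)-2\xi f(B_i)\geq(1-\xi)\frac{\eta-f(B_i)}{|S_{opt}|}-2\xi f(B_i)$. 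The slippage term $2\xi f(B_i)=\Theta(\xi\eta)$ sits \emph{outside} the $1/|S_{opt}|$ factor and recurs at every iteration; controlling its accumulation over the $O(|S_{opt}|\ln\frac{n\eta}{n-\eta})$ iterations of Phase A is precisely why the hypothesis requires the very small $\xi\leq\epsilon(n-\eta)/(8n^2(\eta+1))$, and it is the only place the $(1+\epsilon)$ inside the logarithm can come from.

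As written, your display states the clean noiseless recurrence, which would prove a $\lceil\ln\frac{n\eta}{n-\eta}\rceil|S_{opt}|+1$ bound with no restriction on $\xi$ whatsoever --- which cannot be right, since for large $\xi$ the noisy greedy can be led arbitrarily astray. Your prose does acknowledge an ``additive $O(\xi\eta)$ loss,'' but the sketch never carries that term through the geometric decay, never derives the degraded decay rate, and never shows how the specific constant $8n^2(\eta+1)$ converts the accumulated slippage into exactly the $(1+\epsilon)$ factor in the logarithm. That calibration is the substance of the result; asserting that ``the arithmetic works out'' is where the proof is, not a detail to defer. The bridging (``$+1$'') step is sound modulo the same issue: the atomic gain $(n-f(B_i))/n\geq(n-\eta)/n$ from Lemma~\ref{lma:alwaysabigone} with $\Gamma=n$ must also be transferred to a $\hat f$-marginal, but there the loss is incurred only once and is easily absorbed by the bound on $\xi$.
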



However, no previous studies including~\cite{Chen2014} have analyzed the time complexity of $\mathsf{GreSSC}$. So we analyze the time complexity of Zhang \textit{et.al.}'s algorithm by the following lemma:

\vspace{-1ex}
\begin{lemma}
Let $l_i=\min\{f(A)| |A|=i\}$ for any $i\in \{1,\cdots,n\}$ and $\delta$ be any number in $(0,1)$. With probability of at least $1-\delta$, Zhang \textit{et.al.}'s algorithm~\cite{Chen2014} can find a solution $S'$ satisfying the performance bound shown in Fact~\ref{fct:zhangsar} under the time complexity of $\mathcal{O}(\sum_{i=1}^{|S'|} \frac{n^2 m}{l_i\xi^2}\ln\frac{|S'|}{\delta})$.
\label{lma:greedystimecomplexity}
\end{lemma}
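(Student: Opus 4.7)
The plan is to account for the cost of the Monte-Carlo estimation inside $\mathsf{GreSSC}((1+\xi)\eta,\xi,\xi,0)$, which is exactly what Zhang \textit{et.~al.}'s algorithm reduces to. Since Fact~\ref{fct:zhangsar} is proved under the assumption that every call to $\hat{f}(\cdot)$ returns a $(1\pm\xi)$-relative approximation of $f(\cdot)$, the strategy is to bound the number of such calls, bound the cost of a single call, and then union-bound the per-call failure probabilities into the global budget $\delta$; conditioned on the union-bound event, Fact~\ref{fct:zhangsar} directly yields the cardinality/coverage guarantee on $S'$.

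First I would count estimations. The outer loop of $\mathsf{GreSSC}$ runs $|S'|$ iterations, and inside each iteration $\hat{f}(A\cup\{u\})$ is evaluated for every $u\in V\setminus A$, giving at most $n|S'|$ individual estimations. Setting the per-estimation failure probability to $\theta=\delta/(n|S'|)$ and taking a union bound pushes the overall failure below $\delta$, so in what follows I can assume every estimation is within relative error $\xi$.

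Next I would cost one estimation. A forward Monte-Carlo simulation from seed set $A'$ on a sampled live-edge graph runs in $O(m)$ time and produces a random variable $Y\in[0,n]$ with $\mathbb{E}[Y]=f(A')$. Applying a relative-error Chernoff bound to $Y/n\in[0,1]$, together with an adaptive stopping rule so that $f(A')$ need not be known in advance, a $(1\pm\xi)$-estimate is obtained with confidence $1-\theta$ using $O\!\left(\tfrac{n}{f(A')\xi^2}\ln\tfrac{1}{\theta}\right)$ simulations, so the cost of one estimation is $O\!\left(\tfrac{nm}{f(A')\xi^2}\ln\tfrac{n|S'|}{\delta}\right)$. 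Now I would aggregate: in iteration $i$ every probed set $A\cup\{u\}$ has cardinality $i$, so by the very definition of $l_i$ we have $f(A\cup\{u\})\ge l_i$, which lets me replace $1/f(A')$ by $1/l_i$. Summing the cost over the at most $n$ candidates in iteration $i$ and then over $i=1,\dots,|S'|$ gives $O\!\left(\sum_{i=1}^{|S'|}\tfrac{n^2 m}{l_i\xi^2}\ln\tfrac{n|S'|}{\delta}\right)$, and absorbing the $\ln n$ factor into the surviving logarithm recovers the stated form.

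The main obstacle I expect is the subtlety that a relative-error Chernoff bound depends inversely on the \emph{unknown} mean $f(A')$: a naive fixed-sample implementation would have to guard against $f(A')$ being arbitrarily small and thus spend $\Omega(n/\xi^2)$ simulations per call, which would erase the $1/l_i$ dependence and defeat the purpose of the lemma. Handling this is the crux of the argument, and the way I would handle it is to invoke an adaptive stopping-rule estimator (in the spirit of Dagum--Karp--Luby--Ross optimal approximation), whose sample complexity is data-dependent and matches the $1/f(A')$ scaling in high-probability form; only then does the worst-case substitution $f(A')\ge l_i$ yield the claimed bound.
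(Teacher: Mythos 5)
The paper never actually proves this lemma---the appendix only supplies proofs for Lemma~\ref{lma:thearoftminus1} and Lemma~\ref{lma:atminus1issmall}---so there is no official argument to compare against. Your proof is the natural one and its structure is sound: bound the number of influence estimates by $n|S'|$ (one per surviving candidate per greedy round), allocate failure probability by a union bound, cost each estimate by a relative-error concentration bound for a $[0,n]$-valued Monte-Carlo sample at $O(m)$ per simulation, and replace the unknown mean $f(A\cup\{u\})$ by the worst case $l_i$ over sets of cardinality $i$. You also correctly identify the real crux---that a fixed-sample Chernoff argument cannot exploit $1/f(A')$ when $f(A')$ is unknown---and the stopping-rule estimator of Dagum--Karp--Luby--Ross (which the paper itself cites as~\cite{DagumKLR1995}) is exactly the right tool; it gives the $O\bigl(\frac{n}{f(A')\xi^2}\ln\frac{1}{\theta}\bigr)$ sample count with high probability, not merely in expectation, so the overall statement remains a high-probability time bound.

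The one step that does not go through as written is the last one: your union bound over $n|S'|$ estimates yields $\ln\frac{n|S'|}{\delta}$, and $\ln\frac{n|S'|}{\delta}=\ln n+\ln\frac{|S'|}{\delta}$ cannot be ``absorbed'' into $\ln\frac{|S'|}{\delta}$ in general (take $|S'|/\delta=O(1)$ and $n$ large). Nor can the union bound over the $n$ candidates within a round be avoided: the selected node is itself a function of the estimates, so controlling its overestimation error requires controlling all candidates simultaneously. So what you have honestly proved is the bound with $\ln\frac{n|S'|}{\delta}$ in place of $\ln\frac{|S'|}{\delta}$---a weaker (larger) upper bound than the lemma states, though one that still supports the paper's downstream conclusion that $\mathsf{AAUC}$ is $\Omega(n^2/\ln n)$ times faster. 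A second, minor point: setting $\theta=\delta/(n|S'|)$ presumes $|S'|$ is known in advance; allocate $\delta/(n\cdot j(j+1))$ to round $j$ instead, which changes nothing asymptotically. If you want the lemma exactly as stated, you would need an argument that spends only one ``failure event'' per round rather than $n$, and I do not see how to get one here; state your version with the $\ln\frac{n|S'|}{\delta}$ factor and note the discrepancy.
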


Recall that $\mathsf{AAUC}$ outputs $S$ satisfying $f(S)\geq \eta$ and $|S|\leq \lceil \ln\frac{n\eta}{n-\eta}\rceil |{S}_{opt}|+2$ w.h.p. under $\mathcal{O}(\frac{m q}{\varrho^2}\ln\frac{n}{\delta\eta})$ running time, where $q=\max\{f(v)|v\in V\}$ (see Theorem~\ref{thm:timeofaauc}). According to Lemma~\ref{lma:greedystimecomplexity}, even if we set $|S'|=1$, $l_i=n (\forall i)$ and $q=n$ in favor of Zhang \textit{et.al.}'s algorithm, the time complexity of $\mathsf{AAUC}$ is still $\Omega(n^2/\ln n)$ times smaller than that of Zhang \textit{et.al.}'s algorithm, as $\varrho=\Omega(n \xi)$.

Moreover, Theorem~\ref{thm:arofaauc} shows that, although $\mathsf{AAUC}$'s approximation ratio is larger than that in Fact~\ref{fct:zhangsar} by an additive factor 1, its multiplicative factor (i.e., $\lceil \ln\frac{n\eta}{n-\eta}\rceil$) is always smaller than that in Fact.~\ref{fct:zhangsar} (i.e., $\lceil \ln\frac{(1+\epsilon)n\eta}{n-\eta} \rceil$) due to $\epsilon>0$ (note that smaller $\epsilon$ results in higher time complexity and $\epsilon=0$ implies the impractical EV setting). As $\mathsf{AAUC}$ is $\Omega(n^2/\ln n)$ times faster than Zhang \textit{et.al.}'s algorithm even when we set $\epsilon=1$ in Fact~\ref{fct:zhangsar}, the approximation ratio of $\mathsf{AAUC}$ can be better than Zhang \textit{et.al.}'s algorithm~\cite{Chen2014} under the same running time. 

\subsubsection{Comparing with Kuhnle et.al.'s Results~\cite{Kuhnle2017}}
The work in~\cite{Kuhnle2017} has also proposed a theoretical bound stated as follows:
\begin{fact}
For any $u\in V$, let $r_u$ be the probability that $u$ remains inactive when all of the neighbors of $u$ in G are activated. Let $r^*=\min_{u\in V} r_u$. There exists an algorithm that can output a set $S'$ satisfying $f(S')\geq\eta$ w.h.p. and $|S'|\leq \left( \frac{1+\varepsilon}{r^*}+\log\frac{\eta}{|S_{opt}|}\right) |S_{opt}|$.
\label{fct:mythaiaa}
\end{fact}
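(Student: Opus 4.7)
The plan is to prove the bound via a two-phase greedy argument that pairs the classical submodular set cover analysis with a structural lemma linking $r^*$ to a uniform lower bound on marginal gains. First I would establish the key structural inequality: for any seed set $A$ and any $v\notin A$, the true marginal gain $\Delta(v|A)\triangleq f(A\cup\{v\})-f(A)$ satisfies $\Delta(v|A)\geq r_v\geq r^*$. This follows from monotonicity of the triggering model: the probability that $v$ fails to be activated by $A$ is lower bounded by $r_v$ (the pessimistic case in which every in-neighbor of $v$ has already been activated and still fails to trigger $v$ has failure probability exactly $r_v$, and having fewer activated in-neighbors only increases the failure probability), so promoting $v$ to the seed set contributes at least $r^*$ additional activated nodes in expectation.

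Next I would instantiate a greedy algorithm that, at each step, selects the node maximizing an estimator $\widehat{\Delta}(v|A)$ computed from RR-sets (or Monte-Carlo simulations) which is multiplicatively $(1\pm\varepsilon')$-accurate with high probability, where $\varepsilon'$ is a constant multiple of $\varepsilon$. A Chernoff-style concentration argument, with the per-iteration sample size chosen large enough to resolve gains as small as $r^*$, combined with a union bound over all iterations and all candidate nodes, ensures the accuracy guarantee holds simultaneously throughout the run; this is what underwrites the ``w.h.p.'' clause in Fact~\ref{fct:mythaiaa}.

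The core analysis then partitions the execution into two phases. Let $A_i$ denote the seeds after the $i$-th greedy addition. In the \emph{coverage phase} (while $\eta-f(A_i)>|S_{opt}|$), submodularity of $f$ applied to $S_{opt}$ gives $\max_v\Delta(v|A_i)\geq (\eta-f(A_i))/|S_{opt}|$; combined with the $(1\pm\varepsilon')$ estimation error this implies $\eta-f(A_{i+1})\leq (\eta-f(A_i))\bigl(1-\tfrac{1-O(\varepsilon)}{|S_{opt}|}\bigr)$, so this phase terminates after at most $\tfrac{|S_{opt}|}{1-O(\varepsilon)}\log\tfrac{\eta}{|S_{opt}|}$ greedy steps. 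In the \emph{refinement phase} (once $\eta-f(A_i)\leq |S_{opt}|$), the structural lemma guarantees every greedy step increases $f$ by at least $(1-O(\varepsilon))r^*$, so at most $\tfrac{|S_{opt}|}{(1-O(\varepsilon))r^*}$ additional seeds suffice before $f(A)\geq\eta$. Summing the two bounds and absorbing the $(1-O(\varepsilon))^{-1}$ distortion into a clean $(1+\varepsilon)$ prefactor on the $1/r^*$ term yields the claimed cardinality bound $\left(\tfrac{1+\varepsilon}{r^*}+\log\tfrac{\eta}{|S_{opt}|}\right)|S_{opt}|$.

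The hardest part will be the sampling cost in the refinement phase: achieving a multiplicative $\varepsilon'$ estimate of marginal gains of order $r^*$ naively requires $\Omega\bigl(1/((r^*)^2\varepsilon'^2)\bigr)$ samples per candidate, and this must be reconciled with the union bound over up to $|S'|$ iterations and $n$ candidates. I would address this by decoupling the sample complexity from the approximation ratio (so that only the running time, not the cardinality bound asserted in Fact~\ref{fct:mythaiaa}, inherits a $1/(r^*)^2$ dependence) and by budgeting the failure probability geometrically across iterations so the overall w.h.p.\ guarantee survives the union bound.
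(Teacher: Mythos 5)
First, a caveat on the ground truth: the paper does not prove Fact~\ref{fct:mythaiaa} at all. It is a result quoted from \cite{Kuhnle2017} for comparison purposes, and the surrounding text explicitly remarks that \cite{Kuhnle2017} ``has not explained how to implement an algorithm that can achieve the theoretical bounds shown in Fact~\ref{fct:mythaiaa}.'' So there is no in-paper argument to compare yours against; I can only judge your reconstruction on its own terms. On those terms, your two central ingredients are sound. The structural lemma $\Delta(v\mid A)\geq r_v\geq r^*$ is correct: coupling the propagation processes for $A$ and $A\cup\{v\}$ on the same realization of triggering sets gives $\Delta(v\mid A)\geq \mathbb{P}\{v\notin \mathrm{act}(A)\}$, and $v$ can only be activated if its triggering set is nonempty, so $\mathbb{P}\{v\notin \mathrm{act}(A)\}\geq \mathbb{P}\{I(v)=\emptyset\}=r_v\geq r^*$. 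The two-phase split (geometric decay of the deficit down to $|S_{opt}|$, then $r^*$-sized steps) is the natural way to produce a bound of the shape $\bigl(\frac{1}{r^*}+\log\frac{\eta}{|S_{opt}|}\bigr)|S_{opt}|$, and it is exactly the role $r^*$ plays in \cite{Kuhnle2017} as a substitute for the CEA assumption of Fact~\ref{fct:kuhapproxalg}.

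The genuine gap is the termination rule. Fact~\ref{fct:mythaiaa} asserts $f(S')\geq\eta$ exactly, not $f(S')\geq(1-\rho)\eta$, yet your refinement phase stops ``before $f(A)\geq\eta$'' as if the true influence spread were observable. With only $(1\pm\varepsilon')$-accurate estimates you must stop when the \emph{estimate} crosses an inflated target $(1+\xi)\eta$ in order to certify feasibility w.h.p., and the extra seeds consumed between estimated value $\eta$ and estimated value $(1+\xi)\eta$ are charged at rate roughly $r^*$ per seed, i.e., up to order $\xi\eta/r^*$ additional seeds that your cardinality accounting never pays for. This overshoot is not a constant-factor nuisance: it is precisely the AP-versus-BA difficulty the host paper is organized around (its Lemma~\ref{lma:atminus1issmall} and Theorem~\ref{thm:arofaauc} exist solely to bound such an overshoot by a single extra seed, and only under uniform costs with a carefully chosen $\tau\leq\frac{n-\eta}{2n\eta+\eta}$). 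You need an analogous lemma, or a choice of $\xi$ on the order of $\varepsilon|S_{opt}|/\eta$ with the attendant sample-complexity blowup, before the bound as stated follows. The secondary issue --- that the $(1-O(\varepsilon))^{-1}$ distortion from the coverage phase cannot in general be ``absorbed'' into the $\frac{1+\varepsilon}{r^*}$ term, since $\log\frac{\eta}{|S_{opt}|}$ may exceed $\frac{1}{r^*}$ --- is repairable by shrinking the internal accuracy $\varepsilon'$ relative to $\varepsilon$, but it should be stated as an explicit parameter choice rather than waved through.
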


Note that the parameter $r^*$ in Fact.~\ref{fct:mythaiaa} can be very small. For example, in the IC model, we have $r^*=\min_{u\in V} \Pi_{v\in N_{in}(u)}(1-p(v,u))$, where $N_{in}(u)=\{v\in V: \langle v,u\rangle \in E\}$ and $p(v,u)$ is probability that $v$ can activate $u$. Whenever there exists an edge $\langle v,u\rangle\in E$ with $p(v,u)=1$, we will have $r^*=0$ and hence the approximation ratio in Fact~\ref{fct:mythaiaa} becomes infinite. In contrast, $\mathsf{AAUC}$ and $\mathsf{ATEUC}$ always have logarithmic approximation ratios which do not depend on the influence propagation probabilities of the network.


The work in~\cite{Kuhnle2017} has not explained how to implement an algorithm that can achieve the theoretical bounds shown in Fact.~\ref{fct:mythaiaa}, so its time complexity is unclear (note that we cannot follow Fact~\ref{fct:kuhapproxalg} to implement an AP algorithm by setting $\rho=0$, because this would result in infinite time complexity).

\section{Performance Evaluation}
\label{sec:pe}
In this section, we evaluate the performance of our algorithms through extensive experiments on real OSNs.



\subsection{Experimental Settings}
\label{sec:expsetting}

We use public OSN datasets in the experiments, which are shown in Table~\ref{table:datasets}. These data sets are widely used in the literature~\cite{TangSX2015,TangXS2014,NguyenTD2016}, and they can be downloaded from~\cite{snap,TangSX2015}.



\begin{table}[h]
   \centering
   \small
   \begin{tabular}{|c|r|r|c|c|}\hline
	    Name & \makecell[c]{$n$} & \makecell[c]{$m$} & Type & Average degree\\ \hline
	    wiki-Vote & 7.1K & 103.7K & directed & 29.1\\ \hline
	    Pokec & 1.6M & 30.6M & directed & 37.5\\ \hline
	    LiveJournal & 4.8M & 69.0M & directed & 28.5\\ \hline
	    Orkut & 3.1M & 117.2M & undirected & 76.3\\ \hline
	    Twitter & 41.7M & 1.5G & directed & 70.5\\ \hline
    \end{tabular}
    \setcounter{table}{0}
    \caption{Datasets}
    \label{table:datasets}
\end{table}

Our experiments are conducted on a linux PC with an Intel(R) Core(TM) i7-6700K 4.0GHz CPU and 64GB memory for the Twitter dataset, while we reduce the memory to 32GB for the other datasets. Following~\cite{Kuhnle2017}, we use the independent cascade model~\cite{Kempe2003} in the experiments. For any $\langle u,v\rangle\in E$, the probability that $u$ can activate $v$ is set to $1/d_{in}(v)$, where $d_{in}(v)$ is the in-degree of $v$. Under the GC setting, the cost of each node is randomly sampled from the uniform distribution with the support $(0,1]$. These settings have been widely adopted in the literature~\cite{TangSX2015,TangXS2014,NguyenTD2016,NguyenZ2013,Goyal2013}.
As in~\cite{TangXS2014,TangSX2015,NguyenTD2016}, our reported data are the average of 10 runs, and we also limit the running time of each algorithm in a run to be within 500 minutes.
The implemented algorithms include:

\subsubsection{CELF} \label{sec:celf}
Following~\cite{Kuhnle2017,Goyal2013}, we implement $\mathsf{GreSSC}$ by adapting the the CELF algorithm~\cite{LeskovecKGFVG2007}, which has used a lazy evaluation technique to reduce the time complexity. CELF uses 10000 monte-carlo simulations to estimate $f(A)$ for any $A\subseteq V$ whenever needed~\cite{LeskovecKGFVG2007}. We notice that Zhang \textit{et.al.}~\cite{Chen2014} have also implemented $\mathsf{GreSSC}$ for Fact~\ref{fct:zhangsar}, but they have not used the parameters in Fact~\ref{fct:zhangsar} and simply used 10000 monte-carlo simulations to estimate $f(A) (\forall A\subseteq V)$. 
Therefore, the implementation of $\mathsf{GreSSC}$ in~\cite{Chen2014} does not guarantee the performance bound shown in Fact~\ref{fct:zhangsar} and is essentially the same with CELF. 

\subsubsection{STAB-C1 and STAB-C2} These algorithms (from~\cite{Kuhnle2017}) are the state-of-the-art algorithms for the MCSS problem, and they are based on the min-hash sketches proposed by~\cite{CohenDPW2014}. Moreover, it is shown in~\cite{Kuhnle2017} that these algorithms significantly outperform the traditional $k$SS algorithms. However, both STAB-C1 and STAB-C2 are BA algorithms designed only for the UC setting, so we have to adapt them to our case. As they are both greedy algorithms, we adapt them to the GC setting by selecting the node $u^*=\arg\max_{u}\Delta_u/C(\{u\})$ at each step, where $\Delta_u$ is the marginal gain computed by their estimators. To convert them into AP algorithms, we change their stopping condition to $\hat{f}(A)\geq (1+\rho)\eta$ ($\hat{f}(A)$ is their estimated IS of any $A\subseteq V$), because otherwise the solutions output by them are found to be infeasible in our experiments. We also follow the other parameter settings in~\cite{Kuhnle2017} and set $\iota=0.01,\rho=0.2$~\cite{Kuhnle2017}, where $\iota$ and $\rho$ are clarified in Fact~\ref{fct:kuhapproxalg}. 

\subsubsection{Our Algorithms} We implement $\mathsf{AAUC}$, $\mathsf{BCGC}$, $\mathsf{TEGC}$ and $\mathsf{ATEUC}$ in our experiments, where we set $\delta=0.01$ in favor of the STAB algorithms\footnote{Note that $\delta$ corresponds to $\iota n^3$ in Fact.~\ref{fct:kuhapproxalg}. However, if we set $\delta=\iota n^3$, then $\iota$ would be very small, and hence the time complexity of the STAB algorithms would be much larger than that in the current setting.}. Note that none of the algorithms compared to us has provable performance bounds in our case. More specifically, the STAB algorithms only provide bi-criteria performance bounds under the UC case; and Zhang~\textit{et.al.}'s implementation~\cite{Chen2014} is essentially the same as CELF, which does not obey the approximation ratio stated in Fact~\ref{fct:zhangsar} but has much less time complexity (see \ref{sec:celf}). Therefore, for fair comparison, we set $\sigma=\gamma={\alpha}/{3}$, $\tau=0.02$ and $\mu=n^8$ to implement our algorithms. Under these parameter settings, the $\mathsf{BCGC}$ and $\mathsf{AAUC}$ algorithms may not obey the theoretical performance bounds shown in Theorem~\ref{thm:approximationratio} and Theorem~\ref{thm:arofaauc}, as these bounds require $\mu={\Theta}(D(\eta))$. However, both $\mathsf{TEGC}$ and $\mathsf{ATEUC}$ obey the theoretical performance bounds shown in Theorem~\ref{thm:artegc} and Theorem~\ref{thm:arofateuc} in all our experiments, and they scale well to billion-scale networks (we will see this shortly).
Finally, we set $\alpha=0.2$ in our algorithms, which corresponds to the setting of $\rho=0.2$ in the STAB algorithms. 


\subsection{Experimental Results}
In Fig.~\ref{fig:the GC setting_alpha02}, we compare the algorithms under the GC case. 
It can be seen from Fig.~\ref{fig:wikiVote_generalCost_RunningTime_alpha02}-\ref{fig:twitter_generalCost_RunningTime_alpha02} that CELF runs most slowly, and its running time exceeds the time limit for all the datasets except wiki-Vote.
Moreover, both $\mathsf{BCGC}$ and $\mathsf{TEGC}$ significantly outperform the other algorithms on the running time, and they are the only implemented algorithms running (in minutes) within the time limit for the billion-scale network Twitter. This can be explained by the reason that CELF uses the time consuming monte-carlo sampling method, while the STAB algorithms take a long time for building the sketches. 

In Figs.~\ref{fig:wikiVote_generalCost_Influence_alpha02}-\ref{fig:twitter_generalCost_Influence_alpha02}, we plot the normalized influence spread (i.e., IS/$(\eta-\alpha\eta)$) of the implemented algorithms, where the IS of any solution is evaluated by $10^4$ monte-carlo simulations. It can be seen that CELF, $\mathsf{BCGC}$, $\mathsf{TEGC}$ and STAB-C2 can output feasible solutions with the influence spread larger than $(1-\alpha)\eta$, but STAB-C1 may output infeasible solutions (especially when $\eta$ is large). This phenomenon has also been reported in~\cite{Kuhnle2017}, which can be explained by the reason that the influence spread estimator used in STAB-C1 is less accurate than that in STAB-C2, so it may output solutions with poorer qualities.

From Figs.~\ref{fig:wikiVote_generalCost_Cost_alpha02}-\ref{fig:twitter_generalCost_Cost_alpha02}, it can be seen that the total costs of the solutions output by our algorithms are lower than those of STAB-C2 and CELF, while STAB-C1 has the best performance on the total cost. This is because that STAB-C1 outputs infeasible solutions with the influence spread less than $(1-\alpha)\eta$, so it selects less nodes than the other algorithms. {On the contrary, STAB-C2 selects many more nodes than what is necessary to achieve the $(1-\alpha)\eta$ threshold on the influence spread, and hence it outputs node sets with larger costs than those of our algorithms.}


In Fig.~\ref{fig:the UC setting_alpha02}, we study the performance of the algorithms under the UC setting, where the results are similar to those in Fig.~\ref{fig:the GC setting_alpha02}. In summary, our algorithms (especially ATEUC) greatly outperform the other baselines on the running time, while they also output feasible solutions with small costs. This can be explained by similar reasons with those for  Fig.~\ref{fig:the GC setting_alpha02}.


\section{Related Work}
\label{sec:rw}

Since~\cite{Kempe2003}, a lot of studies have aimed to design efficient $k$SS algorithms. The earlier studies in this line are mostly based on the naive monte-carlo sampling method (e.g.,\cite{LeskovecKGFVG2007,ChenWW2010}), and more recent work~\cite{TangXS2014,TangSX2015,NguyenTD2016,NguyenDT2016} has leveraged more advanced sampling methods~\cite{Borgs2014,DagumKLR1995} to reduce the time complexity, such as the RR-set sampling method. Besides the RR-set sampling method, Cohen~\textit{et.al.}~\cite{CohenDPW2014} have proposed a min-hash sketch based method for $k$SS, which is also used in~\cite{Kuhnle2017}. Some variations of the $k$SS problem have also been studied in \cite{Nguyen2017outward,NguyenDT2016,LinCL17}. However, all these studies belong to the category of influence maximization (IM) algorithms.

Compared with the IM problem, the MCSS problem is less studied in the literature. Recall that we have provided a BA algorithm for MCSS under the GC+NV setting and an AP algorithm for MCSS under the UC+NV setting. To the best of our knowledge, only the work in~\cite{Goyal2013,Kuhnle2017,Chen2014} has provided algorithms with provable performance bounds under the same settings with ours. 
Therefore, we have compared our algorithms with~\cite{Goyal2013,Kuhnle2017,Chen2014} in detail in Sec.~\ref{sec:compareundergc} and Sec.~\ref{sec:compareunderuc}. Some variations of the MCSS problem have also been studied in \cite{DinhZNT2014,ZhangNZT2016,ZhuLZ2016,HeJBC2014}, but the models and problem definitions of these proposals are very different from ours, and none of them has considered the MCSS problem under our setting.


%


\section{Conclusion}
\label{sec:conclu}

We have proposed several algorithms for the Min-Cost Seed Selection (MCSS) problem in OSNs, and compared our algorithms with the state-of-the-art algorithms. The theoretical comparisons reveal that, our algorithms are the first to achieve provable performance bounds and polynomial running time under the case where the nodes have heterogeneous costs, and our algorithms' time complexity outperforms the existing ones in orders of magnitude under the uniform cost case. The experimental comparisons reveal that, our algorithms scale well to big networks, and also significantly outperform the existing algorithms both on the cost and on the running time.

\bibliographystyle{IEEEtran}
\bibliography{IEEEabrv,Mylib}

\appendix

In this section, we provide the missing proofs for Lemma~\ref{lma:thearoftminus1} and Lemma~\ref{lma:atminus1issmall}. The proofs of Lemma~\ref{lma:thearoftminus1} and Lemma~\ref{lma:atminus1issmall} have used Lemma~\ref{lma:alwaysabigone}, which has been proposed in~\cite{Goyal2013}. 

\begin{lemma} [\cite{Goyal2013}]
Suppose that $g(\cdot)$ is a monotone,non-negative submodular function defined on the ground set $V$. Define $H(\Gamma)=\arg\min_{A\subseteq V \wedge g(A)\geq \Gamma}C(A)$. For any $A\subseteq V$ and any $0<\Gamma\leq g(V)$, if $g(A)<\Gamma$, then there must exist $y\in V\backslash A$ such that ${[g(A\cup \{y\})-g(A)]}/{C(\{y\})}\geq {[\Gamma-g(A)]}/{C(H(\Gamma))}.$
\label{lma:alwaysabigone}
\end{lemma}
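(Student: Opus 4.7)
The plan is to establish the existence of the desired $y$ by a standard averaging argument over the nodes of the optimal cover $H(\Gamma)$, exploiting monotonicity and submodularity of $g$. Let $H=H(\Gamma)$; by definition $g(H)\geq \Gamma$ and $C(H)$ is minimum among such sets. The hypothesis $g(A)<\Gamma$ rules out $H\subseteq A$ (otherwise monotonicity would give $g(A)\geq g(H)\geq \Gamma$), so $H\setminus A$ is non-empty; enumerate its elements as $y_1,\dots,y_k$ and set $A_i = A\cup\{y_1,\dots,y_i\}$.

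First, I would telescope the total marginal gain along the chain $A=A_0\subseteq A_1\subseteq \cdots \subseteq A_k = A\cup H$. Writing
\[
g(A\cup H)-g(A) \;=\; \sum_{i=1}^{k}\bigl[g(A_i)-g(A_{i-1})\bigr],
\]
submodularity applied to each step (since $A\subseteq A_{i-1}$) gives $g(A_i)-g(A_{i-1}) \leq g(A\cup\{y_i\})-g(A)$. Combining with monotonicity, $g(A\cup H)-g(A)\geq g(H)-g(A)\geq \Gamma - g(A)$, so
\[
\sum_{i=1}^{k}\bigl[g(A\cup\{y_i\})-g(A)\bigr] \;\geq\; \Gamma - g(A).
\]

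Next, I would perform a cost-weighted averaging argument. Set $a_i=g(A\cup\{y_i\})-g(A)$ and $c_i=C(\{y_i\})$. Since $\sum_i c_i = C(H\setminus A)\leq C(H)=C(H(\Gamma))$ (costs are non-negative and $C$ is modular on singletons), if every ratio satisfied $a_i/c_i < (\Gamma-g(A))/C(H(\Gamma))$, then summing would yield $\sum_i a_i < (\Gamma-g(A))\cdot\tfrac{\sum_i c_i}{C(H(\Gamma))}\leq \Gamma-g(A)$, contradicting the inequality just established. Hence some $y=y_{i^\star}\in H\setminus A \subseteq V\setminus A$ achieves $\bigl[g(A\cup\{y\})-g(A)\bigr]/C(\{y\})\geq \bigl[\Gamma-g(A)\bigr]/C(H(\Gamma))$.

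The main obstacle is identifying the right ``witness'' set to average over: the inequality must be stated with respect to $C(H(\Gamma))$ in the denominator, not $C(H\setminus A)$, so one must be careful to use non-negativity of costs to pass from $\sum_i c_i = C(H\setminus A)$ to $C(H)$ at the end. Everything else is routine given monotonicity and submodularity, and no special structure of $g$ beyond these standard properties is needed — which is precisely why the lemma applies verbatim to the influence spread $f(\cdot)$ used throughout the paper.
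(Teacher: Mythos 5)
Your proof is correct: the telescoping–plus–averaging argument (order the elements of $H(\Gamma)\setminus A$, bound each step's gain by the marginal gain over $A$ via submodularity, lower-bound the total by $\Gamma-g(A)$ via monotonicity, then pick the element with the best gain-to-cost ratio) is exactly the standard proof of this lemma, and you correctly handle the one delicate point of passing from $C(H\setminus A)$ to $C(H(\Gamma))$ using non-negativity of costs. Note that the paper itself does not prove this statement but imports it from Goyal et al.\ \cite{Goyal2013}, whose proof follows the same route as yours.
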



\begin{proof} [Proof of \textbf{Lemma}~\ref{lma:thearoftminus1}]

Let $B_0=\emptyset$ and $\eta_i=(1-\tau)\eta-\bar{f}(\mathcal{R},B_i)(\forall i\geq 0)$. According to Lemma~\ref{lma:alwaysabigone}, for any $0\leq i\leq s-1$, there must exist $y\in V\backslash B_i$ such that $\bar{f}(\mathcal{R}, B_{i}\cup \{y\})-\bar{f}(\mathcal{R}, B_i)\geq \frac{(1-\tau)\eta-\bar{f}(\mathcal{R}, B_i)}{|D^*|}$. As $\bar{f}(\mathcal{R}, B_{i+1})\geq \bar{f}(\mathcal{R}, B_{i}\cup \{y\})$, we get $\eta_i-\eta_{i+1}\geq \eta_i/|D^*|$ and hence
\begin{eqnarray}
\forall 0\leq i\leq s-1: \eta_{i+1} \leq \left(1-{1}/{|D^*|}\right) \eta_i \label{eqn:ditui}
\end{eqnarray}

Let $\lambda=(1+\frac{1}{n})(1-\tau)\eta - 1$ and $l=\min\{i|0\leq i\leq s\wedge \bar{f}(\mathcal{R},B_i)\geq \lambda\}$ (note that $\lambda<(1-\tau)\eta$). We will prove $l\geq s-1$. Indeed, if $l<s$, then we must have $\bar{f}(\mathcal{R},B_l)<(1-\tau)\eta$ according to the definition of $s$. Moreover, according to Lemma~\ref{lma:alwaysabigone}, there must exist $y\in V\backslash B_l$ such that
\begin{eqnarray}
&&\bar{f}(\mathcal{R},B_l\cup \{y\})-\bar{f}(\mathcal{R},B_l)\geq \frac{n - (1-\tau)\eta}{n}\nonumber\\
&=& (1-\tau)\eta-\lambda;\qquad \label{eqn:zengliang}
\end{eqnarray}
which implies $\bar{f}(\mathcal{R},B_l\cup \{y\})\geq (1-\tau)\eta$ and hence $s\leq l+1$.

If $l=0$, then we have $s\leq 1$ and hence the lemma trivially holds. In the sequel, we assume $l\geq 1$. Note that $\bar{f}(\mathcal{R},B_{l-1})<\lambda$ according to the definition of $l$. Using equations (\ref{eqn:ditui}) and (\ref{eqn:zengliang}), we can get
\begin{eqnarray}
&&(1-\tau)\eta -\lambda <\eta_{l-1}\leq \left(1-{1}/{|D^*|}\right)^{l-1} (1-\tau)\eta \nonumber\\
&\leq& \exp\left\{-\frac{l-1}{|D^*|}\right\}(1-\tau)\eta
\end{eqnarray}
and hence $l-1<|D^*|\ln\frac{(1-\tau)n\eta}{n-(1-\tau)\eta}\leq |D^*|\ln\frac{n\eta}{n-\eta}$.
As $l$ is an integer, we get
\begin{eqnarray}
l \leq \left\lceil |D^*|\ln\frac{n\eta}{n-\eta}\right\rceil\leq |D^*|\left\lceil \ln\frac{n\eta}{n-\eta}\right\rceil \label{eqn:thelaststep}
\end{eqnarray}
The lemma then follows by combining (\ref{eqn:thelaststep}) with $|B_s|=s\leq l+1$.
\end{proof}

\begin{proof} [Proof of \textbf{Lemma}~\ref{lma:atminus1issmall}]
If $\bar{f}(\mathcal{R},B_{s})\geq (1+\tau)\eta$, then the lemma trivially holds. Now suppose $\bar{f}(\mathcal{R},B_{s})< (1+\tau)\eta$. As $\tau\leq\frac{n-\eta}{2n\eta+\eta}$, we must have $\bar{f}(\mathcal{R},B_{s})< (1+\tau)\eta\leq n$. Let $A^*\subseteq V$ be the set with the minimum cost that satisfies $\bar{f}(\mathcal{R},A)\geq n$. 
According to Lemma~\ref{lma:alwaysabigone}, there must exist $y\in V\backslash B_{s}$ such that
\begin{eqnarray}
&&\bar{f}(\mathcal{R},B_{s}\cup \{y\})-\bar{f}(\mathcal{R},B_{s})\geq \frac{n - \bar{f}(\mathcal{R},B_{s})}{C(A^*)}\nonumber\\
&\geq& \frac{n - (1+\tau)\eta}{n}\geq 2\tau\eta, \nonumber 
\end{eqnarray}
where the last two inequalities are due to $C(A^*)\leq n$ and $\tau\in (0,\frac{n-\eta}{2n\eta+\eta}]$. This implies $\bar{f}(\mathcal{R},B_{s+1})\geq \bar{f}(\mathcal{R},B_{s}\cup \{y\})\geq (1+\tau)\eta$. Therefore, we have $t\leq s+1$. Hence the lemma follows.
\end{proof}

\end{document}